\theoremstyle{plain}
\newtheorem{thm}{Theorem}[section]
\newtheorem{lem}[thm]{Lemma}
\newtheorem{prop}[thm]{Proposition}
\newtheorem{cor}[thm]{Corollary}
\theoremstyle{definition}
\newtheorem{defn}[thm]{Definition}
\newtheorem{exmp}[thm]{Example}
\theoremstyle{remark}
\newtheorem*{note}{Note}
\title{The MacWilliams Identity for the Skew Rank Metric}
\author{Izzy Friedlander\footnote{Department of Computer Science, Durham University, UK. \texttt{isobel.s.friedlander@durham.ac.uk}}, Thanasis Bouganis\footnote{Department of Mathematical Sciences, Durham University, UK. \texttt{athanasios.bouganis@durham.ac.uk}}, Maximilien Gadouleau\footnote{Department of Computer Science, Durham University, UK. \texttt{m.r.gadouleau@durham.ac.uk}} 
}
\date{\today}
\begin{document}

\maketitle

\thispagestyle{plain}

\begin{abstract}
The weight distribution of an error correcting code is a crucial statistic in determining it's performance. One key tool for relating the weight of a code to that of it's dual is the MacWilliams Identity, first developed for the Hamming metric. This identity has two forms: one is a functional transformation of the weight enumerators, while the other is a direct relation of the weight distributions via (generalised) Krawtchouk polynomials. The functional transformation form can in particular be used to derive important moment identities for the weight distribution of codes. In this paper, we focus on codes in the skew rank metric. In these codes, the codewords are skew-symmetric matrices, and the distance between two matrices is the skew rank metric, which is half the rank of their difference. This paper develops a $q$-analog MacWilliams Identity in the form of a functional transformation for codes based on skew-symmetric matrices under their associated skew rank metric. The method introduces a skew-$q$ algebra and uses generalised Krawtchouk polynomials. Based on this new MacWilliams Identity, we then derive several moments of the skew rank distribution for these codes. 
\end{abstract}

\textbf{Keywords:} MacWilliams identity; weight distribution; skew-symmetric matrices; association schemes; Krawtchouk polynomials

\textbf{MSC 2020 Classification:} 94B05, 15B33, 20H30 
\section{Introduction}
Error correcting codes have been extensively and successfully used both for encoding of data in communications and storage \cite{TheoryofError}\cite{DensoADC7} and for code based cryptography \cite{McEliece11}. Besides the very important real life applications, they have also some very deep connections to other mathematical objects such as lattices and modular forms \cite{ConwayandSloane}.

Linear codes is an important subclass of error-correcting codes which has been extensively studied and used in practise, since the vector space structure can be used, among other things, for efficient encoding and decoding algorithms. The first, and perhaps a natural metric to consider for many applications, is the Hamming metric  

\cite{TheoryofError}\cite{HammingRW5} but others have since followed including, perhaps most notably, the rank metric explored by Delsarte \cite{DelsarteBlinear} and Gabidulin \cite{GabidulinTheory}. This has since been applied in many practical fields, such as error control in data storage \cite{RothRM}, space-time coding \cite{Tarokh}, and error control for network coding \cite{SilvaKoetter}.

An important statistic of a linear code is its weight distribution which encodes in a form of a homogeneous polynomial (weight enumerator) in two variables the number of codewords of various weight. This statistic has been studied extensively and had been used to obtain important bounds on the existence of codes. 
Among the tools that have been derived to analyse the weight distribution of a code is the widely used MacWilliams identity originally identified for the Hamming metric \cite{TheoryofError}. The identity relates the weight distribution of a code to that of its dual under the operation of an inner product defined on the space. There are various forms of the MacWilliams identity with each one having its own merits. For example the form stated in \cite{TheoryofError}, and extended in this paper here, can be used in combination with invariant theory to study self-dual codes. Here we have in mind the famous Gleason theorem and its consequences \cite{WeightPolys10}. 

Codes with the rank metric have been studied in depth by Delsarte \cite{DelsarteBlinear}\cite{DelsarteAlternating} and Gabidulin \cite{GabidulinTheory}. Delsarte developed a version of the MacWilliams identity using the theory of association schemes and subsequently Gadouleau and Yan \cite{gadouleau2008macwilliams} derived an alternative $q$-analog form of the identity using character theory and the Hadamard Transform \cite{TheoryofError}. Both theories can be compared through the associated generalised Krawtchouk Polynomials \cite{DelsarteBlinear}. 

Specifically, the identity developed in \cite{gadouleau2008macwilliams} is in the form of a functional transformation and is, as a result, both computationally effective and remarkably similar in form as a $q$-analog of the original MacWilliams identity for the Hamming metric. In this paper a new $q$-analog MacWilliams identity is derived for codes based on alternating bilinear forms (and their corresponding skew-symmetric matrices) which similarly has the form of a functional transform. The method builds on the work of Gadouleau and Yan to construct the components and structure of the identity but uses the theory of generalised Krawtchouk polynomials to complete the proof. In doing so, a new explicit form of the generalised Krawtchouk polynomials has been established.

The new MacWilliams identity then allows us to derive several results on the weight distribution of codes. Notably, we derive $q$-analogs of the relations between the binomial moments of the weight distribution of a linear code and that of its dual. In particular, depending on the minimum distance of a dual, we determine the moments of the weight distribution exactly. As a final application of our results, we then give an alternate proof of the weight distribution of optimal codes given in \cite{DelsarteAlternating}.

The rest of this paper is structured as follows: In Section \ref{section:prelims} the necessary definitions and properties are introduced and some important identities are derived. Section \ref{section:skew-q-transformetc} defines the skew-$q$-product, skew-$q$-power and skew-$q$-transform for homogeneous polynomials. In particular, the powers of two specific key polynomials are found and related to the weight enumerators of alternating bilinear forms of any order. In Section \ref{section:MacWilliamsIdentity} a new explicit form of the generalised Krawtchouk polynomials is established and is used to prove a $q$-analog of the MacWilliams identity for the skew rank metric as a functional transform. Section \ref{section:derivatives} introduces two derivatives for real valued functions of a variable and derives some results for homogeneous polynomials including the two key polynomials explored in Section \ref{section:skew-q-transformetc}. The derivatives are then used in Section \ref{section:moments} to identify moments of the skew rank distribution for linear codes based on skew-symmetric matrices. 

The results presented in this paper are included in \cite{IzzyThesis}, and they open clearly the possibility of obtaining similar results for other association schemes. Already in \cite{IzzyThesis} the case of Hermitian matrices is investigated and it is also natural to ask whether this may be extended to more general schemes such as translation association schemes. The crucial question here is whether one can define the analogue of the $q$-product in a general setting such that the MacWilliams identity can be stated in a functional form, as the one in \cite{gadouleau2008macwilliams} and the one obtained here.

\section{Preliminaries}\label{section:prelims}

\subsection{Skew-Symmetric Matrices}

\begin{defn}
Let $\boldsymbol{A}$ be a matrix of size $t\times t$ with entries in a finite field $\mathbb{F}_q$ where $q$ is a prime power. Then
$\boldsymbol{A}=(a_{ij})$ is called a \textit{\textbf{skew-symmetric}} matrix, if $\boldsymbol{A}^T=-\boldsymbol{A}$.
\end{defn}
The set of these skew-symmetric matrices is denoted $\mathscr{A}_{q,t}$ and the order of the matrix is $t$.

Each skew-symmetric matrix, $\boldsymbol{A}$, can be associated with a corresponding alternating bilinear form, which is a map
\begin{equation}
    \boldsymbol{A}:~ V\times V\rightarrow \mathbb{F}_q
\end{equation}
where $V$ is a $t$-dimensional vector space over $\mathbb{F}_q$ with fixed basis $\left\{\boldsymbol{e}_1,\boldsymbol{e}_2,\ldots,\boldsymbol{e}_t\right\}$ \cite{DelsarteAlternating} and
\begin{equation}
\boldsymbol{A}\left(\boldsymbol{e}_i,\boldsymbol{e}_j\right)=a_{ij}.
\end{equation}

The set of these bilinear forms is denoted $\mathbb{B}(t,q)$. There is a one to one correspondence between $\mathscr{A}_{q,t}$ and $\mathbb{B}(t,q)$.
\begin{thm}\label{vectorspaceproof}
$\mathscr{A}_{q,t}$ is a $t\choose 2$-dimensional vector space over $\mathbb{F}_q$.
\end{thm}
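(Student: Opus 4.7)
The plan is to exhibit an explicit basis of $\mathscr{A}_{q,t}$ of size $\binom{t}{2}$, after first confirming that $\mathscr{A}_{q,t}$ is closed under the vector space operations inherited from $M_{t\times t}(\mathbb{F}_q)$. For the latter, I would note that the assignment $\boldsymbol{A}\mapsto \boldsymbol{A}^T+\boldsymbol{A}$ is $\mathbb{F}_q$-linear (transposition is $\mathbb{F}_q$-linear and commutes with addition and scalar multiplication); the defining condition $\boldsymbol{A}^T=-\boldsymbol{A}$ is precisely that $\boldsymbol{A}$ lies in the kernel of this map, so $\mathscr{A}_{q,t}$ is an $\mathbb{F}_q$-subspace of $M_{t\times t}(\mathbb{F}_q)$.

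For the dimension, I would introduce, for each pair $1\le i<j\le t$, the matrix $\boldsymbol{E}_{ij}$ with entry $1$ in position $(i,j)$, entry $-1$ in position $(j,i)$, and zeros elsewhere. Each $\boldsymbol{E}_{ij}$ clearly satisfies $\boldsymbol{E}_{ij}^T=-\boldsymbol{E}_{ij}$, and there are exactly $\binom{t}{2}$ such matrices. Linear independence follows at once by reading off the entries above the diagonal: in any combination $\sum_{i<j}c_{ij}\boldsymbol{E}_{ij}$ the $(i,j)$ entry is $c_{ij}$ for $i<j$, so the combination vanishes only when every $c_{ij}=0$.

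For spanning, given $\boldsymbol{A}=(a_{ij})\in\mathscr{A}_{q,t}$ I would claim $\boldsymbol{A}=\sum_{i<j}a_{ij}\boldsymbol{E}_{ij}$, and verify this by checking the three cases $i<j$, $i>j$, and $i=j$ separately. The first is immediate, the second uses the skew-symmetry relation $a_{ji}=-a_{ij}$, and the third uses that $a_{ii}=0$, which follows from the interpretation of $\boldsymbol{A}$ as an alternating bilinear form (since $\boldsymbol{A}(\boldsymbol{e}_i,\boldsymbol{e}_i)=0$). Combined, these give the decomposition, so the $\boldsymbol{E}_{ij}$ span.

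The one subtle point, which will be the main thing to handle carefully, is the diagonal: the relation $\boldsymbol{A}^T=-\boldsymbol{A}$ on its own only forces $2a_{ii}=0$, which is vacuous in characteristic two, so the $\binom{t}{2}$ count relies on the alternating convention $\boldsymbol{A}(v,v)=0$ identified implicitly in the preceding paragraph via the identification $\mathscr{A}_{q,t}\cong\mathbb{B}(t,q)$. Once that convention is invoked uniformly, the diagonal entries are always zero, and the basis $\{\boldsymbol{E}_{ij}:1\le i<j\le t\}$ gives $\dim_{\mathbb{F}_q}\mathscr{A}_{q,t}=\binom{t}{2}$ as claimed.
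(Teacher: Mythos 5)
Your proof is correct, and there is nothing in the paper to compare it against: the paper simply declares the statement trivial and omits the proof. Your argument is the standard one (closure via the kernel of $\boldsymbol{A}\mapsto\boldsymbol{A}^T+\boldsymbol{A}$, then the explicit basis $\boldsymbol{E}_{ij}$, $1\le i<j\le t$), and the one place where care is genuinely needed is exactly the place you flag: over a field of characteristic two the condition $\boldsymbol{A}^T=-\boldsymbol{A}$ alone only gives $2a_{ii}=0$, which is vacuous, and the space of matrices satisfying it has dimension $\binom{t+1}{2}$, not $\binom{t}{2}$. The count in the theorem therefore relies on the zero-diagonal (alternating) convention, which the paper supplies implicitly through its alternative characterisation $\boldsymbol{x}\boldsymbol{A}\boldsymbol{x}^T=0$ for all $\boldsymbol{x}$ and through the identification of $\mathscr{A}_{q,t}$ with alternating bilinear forms; once that condition is included among the defining (linear) conditions, your closure argument and your basis both go through uniformly in all characteristics. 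So your write-up not only fills in the omitted proof but also makes explicit a hypothesis the paper's Definition of skew-symmetry leaves tacit for $q$ even.
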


\begin{proof}
The proof of Theorem \ref{vectorspaceproof} is trivial and hence omitted.
\end{proof}

For $\mathscr{A}_{q,t}$ we define the parameters
\begin{equation}
n=\left\lfloor\frac{t}{2}\right\rfloor, ~m=\frac{t(t-1)}{2n}
\end{equation}
where $n$ is the maximum skew rank of $\boldsymbol{A}\in\mathscr{A}_{q,t}$ and $m$ is $t$ or $t-1$ depending if $t$ is odd or even.
We also follow the convention that empty product is taken to be $1$ and the empty sum is taken to be $0$.

\subsection{Properties of Skew-Symmetric Matrices}

An alternative way of defining a skew-symmetric matrix is as follows:

\begin{defn}[{\cite{AAA}}]
A matrix, $\boldsymbol{A}$ is skew-symmetric if and only if for any vector $\boldsymbol{x}$, $\boldsymbol{x}\boldsymbol{A}\boldsymbol{x}^T=0$.

\end{defn}

\begin{defn}

Two matrices $\boldsymbol{A}$ and $\boldsymbol{B}$ in $\mathscr{A}_{q,t}$ are said to be \textit{\textbf{congruent}} if there exists a non-singular $t\times t$ matrix $\boldsymbol{P}$ over $\mathbb{F}_q$ such that $\boldsymbol{B} = \boldsymbol{PAP}^T$.
\end{defn}

The following properties of skew-symmetric matrices are proved in \cite{AAA}.

\begin{enumerate}
    \item Two skew-symmetric matrices are congruent if and only if they have the same (column) rank.
    \item The rank of a skew-symmetric matrix is even.
    \item If the rank of a skew-symmetric matrix, $\boldsymbol{A}$ is $2s$ with $0 \leq s \leq n$, say, then $\boldsymbol{A}$ is congruent to the matrix

    $$\begin{pmatrix}
    E_2 &  &  &  &  \\ 
     & E_2 &  &  &  \\
     & & \ddots & &\\
     & & & E_2 & \\
     &  & &  & \mathcal{O}_{t-2s}
    \end{pmatrix}$$
    
where $E_2 = \begin{pmatrix} 0 & 1 \\ -1 & 0\end{pmatrix}$ and $\mathcal{O}_{t-2s}$ is the zero matrix of order $t-2s$. We will denote this matrix as diag$\left\{E_2,E_2,\ldots,E_2,\mathcal{O}_{t-2s}\right\}$, and call it the \textbf{\textit{canonical form of}} $\boldsymbol{A}$.
\end{enumerate}

\subsection{The Skew Rank of a Skew-Symmetric Matrix}\label{Sectionskewrank}

\begin{defn}
For all $\boldsymbol{A} \in \mathscr{A}_{q,t}$ with column rank $2s$ we define the \textit{\textbf{skew rank}} of  $\boldsymbol{A}$, $SR(\boldsymbol{A})$, to be $s$.\\

For all $\boldsymbol{A},\boldsymbol{B}\in \mathscr{A}_{q,t}$, we define the \textit{\textbf{skew rank distance}} to be 
\begin{equation}
    d_{SR}(\boldsymbol{A},\boldsymbol{B})=SR(\boldsymbol{A}-\boldsymbol{B}).
\end{equation}
It is easily verified that $d_{SR}$ is a metric over $\mathscr{A}_{q,t}$ since $SR(\boldsymbol{A}-\boldsymbol{B})$ is the rank metric \cite{GabidulinTheory} \cite{gadouleau2008macwilliams} divided by $2$ and we will call it the \textit{\textbf{skew rank metric}}.
\end{defn}

\subsection{Codes based on Subspaces of Skew-Symmetric Matrices}

Any subspace of $\mathscr{A}_{q,t}$ can be considered as an $\mathbb{F}_q$-linear code, $\mathscr{C}$, with each matrix of skew rank $s$ in $\mathscr{C}$ representing a codeword of weight $s$ and with the distance metric being the skew rank metric defined in Section \ref{Sectionskewrank}.

The \textbf{\textit{minimum skew rank distance}} of such a code $\mathscr{C}$, denoted as $d_{SR}(\mathscr{C})$, is simply the minimum skew rank distance over all possible pairs of distinct codewords in $\mathscr{C}$. When there is no ambiguity about $\mathscr{C}$, we denote the minimum skew rank distance as $d_{SR}$.\\

It can be shown that \cite[p.33]{DelsarteAlternating} the cardinality $|\mathscr{C}|$ of a code $\mathscr{C}$ over $\mathbb{F}_q$ based on $t \times t$ skew-symmetric matrices and minimum skew rank distance $d_{SR}$ satisfies 
\begin{equation}\label{SinglebuondSkew}
    |\mathscr{C}|\leq q^{m(n-d_{SR}+1)}
\end{equation}

In this paper, we call the bound in \eqref{SinglebuondSkew} the Singleton Bound for codes with the Skew Rank Metric.
Codes that attain the Singleton bound are referred to as Maximal Codes or Maximum Skew Rank Distance (MSRD) codes. 

\begin{defn}\label{def:skewrankweightenumerator}
For all $\boldsymbol{A}\in\mathscr{A}_{q,t}$ with skew rank weight $s$, the \textit{\textbf{skew rank weight function}} of $\boldsymbol{A}$ is defined as the homogeneous polynomial
\begin{equation}
f_{SR}(\boldsymbol{A})=Y^{s}X^{n-s}. 
\end{equation}
Let $\mathscr{C} \subseteq \mathscr{A}_{q,t}$ be a code. Suppose there are $c_i$ codewords in $\mathscr{C}$ with skew rank weight $i$ for $0\leq i\leq n$. Then the \textbf{\textit{skew rank weight enumerator}} of $\mathscr{C}$, denoted as $W_\mathscr{C}^{SR}(X,Y)$ is defined to be
\begin{equation}\label{defn:skewrankweightenumerator}
    W_\mathscr{C}^{SR}(X,Y) = \sum_{\boldsymbol{A}\in\mathscr{C}}f_{SR}(\boldsymbol{A})=\sum_{i=0}^n c_iY^{i}X^{n-i}.
\end{equation}

The $(n+1)$-tuple, $\boldsymbol{c}=(c_0,\ldots,c_n)$ of coefficients of the weight enumerator, is called the \textbf{\textit{weight distribution}} of the code $\mathscr{C}$.
\end{defn}

\begin{exmp}\label{example:numberofmatricesexplicit}
An example of such a code with $q = 3$ and $t = 4$ is where $\mathscr{C}$ is the set of skew-symmetric matrices, $\boldsymbol{A}=(a_{ij})$ with $1 \leq i,j \leq 4$, such that;
\begin{equation}
    \begin{cases}
    a_{1j} \in \mathbb{F}_q, ~j > 1\\
    a_{2j} = 0 \text{ for }i < j\\
    a_{34}\in\mathbb{F}_q
    \end{cases}
\end{equation}
There are $81$ matrices (codewords) in this code. The only codeword of skew rank $0$ is the all-zero matrix. It is easily seen that a codeword has skew rank $2$ if and only if $a_{12}$ and $a_{34}$ are both nonzero. Therefore, there are exactly $36$ codewords of skew rank $2$, and consequently exactly $44$ codewords of skew rank $1$. Thus, the skew rank weight enumerator of the code is $X^2+44XY+36Y^2$. \\

\end{exmp}

\subsection{Counting the number of Skew-Symmetric matrices of a given size}

Multiple ways of describing the number of skew-symmetric matrices have been developed by various authors such as \cite[Proposition 2.1, p627]{StantonChevally}, \cite[Theorem 3, p155]{MacWilliamsOrthogonal}, \cite[Theorem 2, p437]{TheoryofError} and \cite{DelsarteAlternating}. The following is (for the purpose of this paper) in the best format.

\begin{thm}[{\cite[Theorem 3, p24]{Carlitz1954}}]\label{thm:countingMatrices}
    The number of skew symmetric matrices of order $t$ and skew rank $s$ is
\begin{equation}\label{CarlitzCountingMatricesSkew}
\xi_{t,s} =
\begin{cases}
         q^{2\sigma_s}\times\dfrac{\displaystyle\prod_{i=0}^{2s-1}\left(q^{t-i}-1\right)}{\displaystyle\prod_{i=1}^s\left(q^{2i}-1\right)} & \text{if} ~0\leq s\leq n,\\
        0 & \text{otherwise}.
    \end{cases}
\end{equation}
\end{thm}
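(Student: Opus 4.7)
The plan is to realise $\xi_{t,s}$ as the size of an orbit under the congruence action of $GL_t(\mathbb{F}_q)$ on $\mathscr{A}_{q,t}$ defined by $P \cdot \boldsymbol{A} = P \boldsymbol{A} P^T$. By property (3) listed just before Section 2.3, the skew-symmetric matrices of skew rank exactly $s$ form a single congruence orbit whose representative is the canonical form $\boldsymbol{J}_s = \operatorname{diag}\{E_2,\ldots,E_2,\mathcal{O}_{t-2s}\}$. Hence by orbit-stabiliser,
\[
\xi_{t,s} \;=\; \frac{|GL_t(\mathbb{F}_q)|}{|\operatorname{Stab}(\boldsymbol{J}_s)|}.
\]
The numerator is the standard count $|GL_t(\mathbb{F}_q)| = q^{\binom{t}{2}}\prod_{i=1}^t(q^i-1)$, so the real work is computing the stabiliser.

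For the stabiliser, I would decompose $P\in GL_t(\mathbb{F}_q)$ into blocks of sizes $2s$ and $t-2s$:
\[
P = \begin{pmatrix} P_{11} & P_{12} \\ P_{21} & P_{22} \end{pmatrix}.
\]
Writing $\boldsymbol{K} = \operatorname{diag}\{E_2,\ldots,E_2\}$ for the nondegenerate alternating $2s\times 2s$ block, the equation $P\boldsymbol{J}_s P^T = \boldsymbol{J}_s$ reduces to the three block conditions $P_{11}\boldsymbol{K}P_{11}^T = \boldsymbol{K}$, $P_{11}\boldsymbol{K}P_{21}^T = 0$ and $P_{21}\boldsymbol{K}P_{21}^T = 0$. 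The first forces $P_{11}\in Sp_{2s}(\mathbb{F}_q)$ (in particular invertible); since $\boldsymbol{K}$ is also invertible, the second then forces $P_{21}=0$; and $P$ invertible then forces $P_{22}\in GL_{t-2s}(\mathbb{F}_q)$, with $P_{12}\in\mathbb{F}_q^{2s\times(t-2s)}$ completely free. Invoking the known symplectic group order $|Sp_{2s}(\mathbb{F}_q)| = q^{s^2}\prod_{i=1}^s(q^{2i}-1)$, this yields
\[
|\operatorname{Stab}(\boldsymbol{J}_s)| \;=\; q^{s^2}\prod_{i=1}^s(q^{2i}-1)\;\cdot\; q^{2s(t-2s)} \;\cdot\; q^{\binom{t-2s}{2}}\prod_{i=1}^{t-2s}(q^i-1).
\]

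The last step is arithmetic cleanup. The ratio of the product factors telescopes,
\[
\frac{\prod_{i=1}^t(q^i-1)}{\prod_{i=1}^{t-2s}(q^i-1)} \;=\; \prod_{i=0}^{2s-1}(q^{t-i}-1),
\]
matching the numerator of the claimed formula. For the power of $q$, one checks the exponent
\[
\binom{t}{2} - s^2 - 2s(t-2s) - \binom{t-2s}{2} \;=\; s(s-1) \;=\; 2\binom{s}{2},
\]
identifying $\sigma_s = \binom{s}{2}$. The main obstacle is the stabiliser computation: one has to recognise that $\operatorname{Stab}(\boldsymbol{J}_s)$ is the ``symplectic parabolic'' consisting of block-upper-triangular matrices with a symplectic top-left, a general-linear bottom-right and a free rectangular off-diagonal, and then import the symplectic order formula as a black box (itself classically proved by an analogous orbit-count on a nondegenerate alternating bilinear form).
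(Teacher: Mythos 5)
Your argument is correct, and it is worth noting that the paper does not prove this statement at all: it is imported verbatim from Carlitz, whose original derivation is an analytic counting argument rather than a group-theoretic one. Your route via the congruence action of $GL_t(\mathbb{F}_q)$ is therefore genuinely different and, in the context of this paper, quite natural, since the two ingredients you need beyond orbit--stabiliser are exactly the facts the paper already cites: that matrices of equal rank form a single congruence class with canonical form $\mathrm{diag}\{E_2,\ldots,E_2,\mathcal{O}_{t-2s}\}$, and the classical order formulas for $GL$ and $Sp$. Your stabiliser computation is sound: $P_{11}\boldsymbol{K}P_{11}^T=\boldsymbol{K}$ forces $P_{11}$ invertible (take determinants, $\det\boldsymbol{K}=1$), so $P_{11}\boldsymbol{K}P_{21}^T=0$ kills $P_{21}$, the third condition becomes vacuous, and invertibility of the resulting block-triangular matrix forces $P_{22}\in GL_{t-2s}(\mathbb{F}_q)$ with $P_{12}$ free; conversely every such matrix stabilises $\boldsymbol{J}_s$, so the stabiliser is exactly the parabolic you describe. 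The telescoping of $\prod_{i=1}^t(q^i-1)/\prod_{i=1}^{t-2s}(q^i-1)$ and the exponent identity $\binom{t}{2}-s^2-2s(t-2s)-\binom{t-2s}{2}=s(s-1)=2\sigma_s$ both check out (and the formula agrees numerically with the paper's example $\xi_{4,1}=260$ for $q=3$). Two small points you could make explicit: the case $s>n$ or $s<0$ is vacuous since the skew rank of a $t\times t$ alternating matrix cannot exceed $\lfloor t/2\rfloor$; and in characteristic $2$ one must read ``skew-symmetric'' as alternating (zero diagonal), which is exactly the paper's second definition, so the canonical form, the symplectic condition $P_{11}\boldsymbol{K}P_{11}^T=\boldsymbol{K}$, and the order formula $|Sp_{2s}(\mathbb{F}_q)|=q^{s^2}\prod_{i=1}^s(q^{2i}-1)$ all remain valid and your argument goes through uniformly in all characteristics. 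What your approach buys is a short, self-contained and structural proof; what Carlitz's citation buys the paper is merely brevity.
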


\begin{defn}
We also define the \textbf{\textit{skew rank weight enumerator}} of $\mathscr{A}_{q,t}$ to be
\begin{equation}
    \Omega_t= \sum_{i=0}^n \xi_{t,i}Y^{i}X^{n-i}.
\end{equation}

\end{defn}

\begin{exmp}
For $t=4$ and $q=3$ the skew rank weight enumerator is
\begin{align}
    X^2+\left(3^2+1\right)\left(3^3-1\right)XY+3^2\left(3^3-1\right)(3-1)Y^2 & = 
    X^2+\left(10\right)\left(26\right)XY+9\left(26\right)(2)Y^2\\
    & = X^2 + 260XY + 468Y^2.
\end{align}
\end{exmp}

\subsection{Inner product of two Skew-Symmetric matrices}
We define an \textbf{inner product} on $\mathscr{A}_{q,t}$ by
\begin{equation}
    (\boldsymbol{A},\boldsymbol{B})\mapsto\langle \boldsymbol{A},\boldsymbol{B} \rangle =Tr\left(\boldsymbol{A}^T\boldsymbol{B}\right)
\end{equation}
where $Tr(\boldsymbol{A})$ means the trace of $\boldsymbol{A}$.

\begin{defn}
The \textbf{\textit{dual}} of a code, $\mathscr{C}$, denoted by $\mathscr{C}^\perp$ is defined as
\begin{equation}
    \mathscr{C}^\perp = \big\{ \boldsymbol{A}\in\mathscr{A}_{q,t}~ \vert
\left\langle \boldsymbol{A},\boldsymbol{B}\right\rangle =0~\forall~ \boldsymbol{B} \in \mathscr{C}\big\}.\end{equation}
\end{defn}

\begin{thm}[{\cite[Theorem 5]{DelsarteAlternating}}]
A code $\mathscr{C}\subseteq\mathscr{A}_{q,t}$ with minimum skew rank distance $d_{SR}$ is MSRD if and only if its dual $\mathscr{C}^\perp$ is also MSRD with minimum skew rank distance $d_{SR}'=n-d_{SR}+2$.
\end{thm}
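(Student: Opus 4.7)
The plan is to reduce the statement to the Singleton bound \eqref{SinglebuondSkew} together with the cardinality identity induced by the trace pairing, leaving a single nontrivial inequality to establish. First, $\langle\cdot,\cdot\rangle$ is a non-degenerate symmetric bilinear form on the $mn$-dimensional space $\mathscr{A}_{q,t}$, so $|\mathscr{C}|\cdot|\mathscr{C}^\perp|=q^{mn}$ and $(\mathscr{C}^\perp)^\perp=\mathscr{C}$. In particular the statement is symmetric in $\mathscr{C}$ and $\mathscr{C}^\perp$, so it suffices to prove the one direction: if $\mathscr{C}$ is MSRD with $d:=d_{SR}(\mathscr{C})$, then $\mathscr{C}^\perp$ is MSRD with skew rank distance $n-d+2$.

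Assuming $\mathscr{C}$ is MSRD, we have $|\mathscr{C}^\perp|=q^{mn}/q^{m(n-d+1)}=q^{m(d-1)}$. Writing $d':=d_{SR}(\mathscr{C}^\perp)$, the Singleton bound applied to $\mathscr{C}^\perp$ reads $q^{m(d-1)}\leq q^{m(n-d'+1)}$, i.e.\ $d'\leq n-d+2$. The theorem therefore reduces entirely to establishing the matching lower bound $d'\geq n-d+2$: once this is in hand, equality is forced throughout, so $\mathscr{C}^\perp$ attains the Singleton bound and is MSRD with the stated distance.

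To prove $d'\geq n-d+2$ I would argue by contradiction. Suppose there exists a nonzero $\boldsymbol{B}\in\mathscr{C}^\perp$ with $s:=SR(\boldsymbol{B})\leq n-d+1$. Using the congruence classification of Section \ref{section:prelims}, write $\boldsymbol{B}=\boldsymbol{P}^T\boldsymbol{E}\boldsymbol{P}$ with $\boldsymbol{P}$ invertible and $\boldsymbol{E}=\mathrm{diag}(E_2,\dots,E_2,\mathcal{O}_{t-2s})$, and apply the skew-rank-preserving substitution $\boldsymbol{A}\mapsto\boldsymbol{P}\boldsymbol{A}\boldsymbol{P}^T$: this replaces $\mathscr{C}$ by the equivalent code $\mathscr{C}_1=\boldsymbol{P}\mathscr{C}\boldsymbol{P}^T$ of the same skew-rank distribution, and transports the orthogonality relation into $\mathrm{Tr}((\boldsymbol{A}')^T\boldsymbol{E})=0$ for every $\boldsymbol{A}'\in\mathscr{C}_1$, which is a single linear condition depending only on the entries of the upper-left $2s\times 2s$ block of $\boldsymbol{A}'$. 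The goal is then to combine this constraint with the enumeration in Theorem \ref{thm:countingMatrices} to exhibit a nonzero element of $\mathscr{C}_1$ of skew rank strictly less than $d$, contradicting the distance hypothesis.

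The main obstacle is that a single low-skew-rank $\boldsymbol{B}$ imposes only one linear constraint, whereas $\mathscr{C}^\perp$ has dimension $m(d-1)$, and the congruence representatives $\boldsymbol{P}$ differ from element to element, so the reductions cannot be applied simultaneously. Making the combinatorial heart of the argument precise is exactly the content of \cite[Theorem 5]{DelsarteAlternating}, whose original proof proceeds through the association-scheme machinery for alternating bilinear forms. I expect that once the MacWilliams identity of Section \ref{section:MacWilliamsIdentity} is available, the lower bound will follow much more cleanly as an immediate $q$-analog of the classical ``dual of an MDS code is MDS'' computation, via the first few moments of the dual weight distribution.
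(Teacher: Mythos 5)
There is a genuine gap, and you in effect acknowledge it yourself. Everything up to your final two paragraphs only establishes the easy half: non-degeneracy of the pairing gives $|\mathscr{C}|\,|\mathscr{C}^\perp|=q^{mn}$ and $(\mathscr{C}^\perp)^\perp=\mathscr{C}$, and the Singleton bound \eqref{SinglebuondSkew} applied to $\mathscr{C}^\perp$ gives $d_{SR}'\leq n-d_{SR}+2$. The entire content of the theorem is the matching lower bound $d_{SR}'\geq n-d_{SR}+2$, and your proposal never proves it: the contradiction argument via the congruence normal form stalls exactly where you say it does (a single word $\boldsymbol{B}\in\mathscr{C}^\perp$ of low skew rank imposes only one linear condition on $\mathscr{C}$, which is nowhere near enough to force a codeword of skew rank below $d_{SR}$), and you then defer to \cite[Theorem 5]{DelsarteAlternating} — i.e.\ to the statement being proved. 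As it stands this is a reduction of the theorem to its hard half, not a proof. Note also that the paper itself offers no argument here: the result is quoted from Delsarte, so the burden of the combinatorial step cannot be discharged by pointing back to the surrounding text. A smaller caveat: the non-degeneracy of $\langle\boldsymbol{A},\boldsymbol{B}\rangle=Tr(\boldsymbol{A}^T\boldsymbol{B})$ restricted to $\mathscr{A}_{q,t}$, which your cardinality identity relies on, deserves a check rather than an assertion (on skew-symmetric matrices this form is $2\sum_{i<j}a_{ij}b_{ij}$, so the even-characteristic case needs care).

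Your closing suggestion — obtaining $d_{SR}'\geq n-d_{SR}+2$ from the functional MacWilliams identity and its moments — is a legitimate and non-circular route within this paper's development: Theorem \ref{mainthm1}, Proposition \ref{prop:momentsbderiv} and Corollary \ref{corrollary:simplificationpropbderiv} are proved without invoking the present theorem (only the final proposition on MSRD weight distributions uses it, so that proposition cannot be quoted). Carrying out the $q$-analog of the classical binomial-moment argument that the dual of an MDS code is MDS — using the moment identities with the roles of $\mathscr{C}$ and $\mathscr{C}^\perp$ exchanged, valid for $0\leq\varphi<d_{SR}$, to pin down the low-index dual coefficients $c_1',\ldots$ and show they vanish up to index $n-d_{SR}+1$ — would complete the proof, but that computation is precisely what is missing from the proposal.
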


\subsection{Skew-$q$-nary Gaussian Coefficients and other useful identities}\label{subsection:b-naryidentities}

In establishing the results later in this paper we have used some identities to simplify the notation and algebra.

\begin{defn}\label{def:skewqnarygaussiancoeffs}
For any real number $q\neq 1$, $k\in\mathbb{Z}^+$ and $x\in \mathbb{R}$ (usually an integer), we define the \textbf{\textit{Skew-$q$-nary Gaussian Coefficients}} \cite{DelsarteAlternating}, $\left[\begin{matrix} x \\ k\end{matrix}\right]$, to be
\begin{equation}
    \left[\begin{matrix} x \\ k\end{matrix}\right] = \prod_{i=0}^{k-1}\frac{q^{2x}-q^{2i}}{q^{2k}-q^{2i}}
\end{equation}
with 
\begin{equation}
    \left[\begin{matrix} x \\ 0\end{matrix}\right] =1.
\end{equation}
\end{defn}
If $x\in\mathbb{Z}^+$ then these skew-$q$-nary Gaussian coefficients count the number of $k$-dimensional subspaces of an $x$-dimensional vector space over $\mathbb{F}_{q^2}$ \cite[p3]{GabidulinTheory}.
Here are some identities relating to the Skew-$q$-nary Gaussian coefficients that are useful from \cite{DelsarteAlternating}:
\begin{align}
    \left[\begin{matrix} x \\ k\end{matrix}\right] & = \left[\begin{matrix} x \\ x-k\end{matrix}\right]\\
    \left[\begin{matrix} x \\ i\end{matrix}\right]\left[\begin{matrix} x-i \\ k\end{matrix}\right] & = \left[\begin{matrix} x \\ k\end{matrix}\right]\left[\begin{matrix} x-k \\ i\end{matrix}\right]\label{equation:gaussianswapplaces}\\
    \prod_{i=0}^{x-1}\left(y-q^{2i}\right)  & = \sum_{k=0}^x (-1)^{x-k}q^{2{x-k\choose 2}}\left[\begin{matrix} x \\ k\end{matrix}\right]y^k\\
    \sum_{k=0}^x\left[\begin{matrix} x \\ k\end{matrix}\right]\prod_{i=0}^{k-1}\left(y-q^{2i}\right)  & = y^x\label{equation:producttosumgauss}\\
    \sum_{k=i}^j (-1)^{k-i}q^{2{k-i\choose 2}}\left[\begin{matrix} k \\ i\end{matrix}\right]\left[\begin{matrix} j \\ k\end{matrix}\right] & = \delta_{ij}.\label{equation:deltaijbs}
\end{align}

The following additional identities are proven in \cite{andrews_1984}.
\begin{align}
    \left[\begin{matrix} x \\ k\end{matrix}\right] & = \left[\begin{matrix} x-1 \\ k\end{matrix}\right] + q^{2(x-k)}\left[\begin{matrix} x-1 \\ k-1\end{matrix}\right]\label{equation:thing5}\\
    & = \left[\begin{matrix} x-1 \\ k-1\end{matrix}\right]+ q^{2k}\left[\begin{matrix} x-1 \\ k\end{matrix}\right]\label{equation:thing1}\\
    & = \dfrac{q^{2(x-k+1)}-1}{q^{2k}-1}\left[\begin{matrix} x \\ k-1\end{matrix}\right]\label{equation:thing3}\\
    & = \dfrac{q^{2x}-1}{q^{2(x-k)}-1}\left[\begin{matrix} x-1 \\ k\end{matrix}\right].\label{equation:thing2}
\end{align}
Combining \eqref{equation:thing1} and \eqref{equation:thing2} gives,
\begin{align}
    \begin{bmatrix}x-1\\k-1\end{bmatrix} \overset{\eqref{equation:thing1}}&{=}  \begin{bmatrix}x\\k\end{bmatrix}-q^{2k}\begin{bmatrix}x-1\\k\end{bmatrix}\\
    \overset{\eqref{equation:thing2}}&{=}  \begin{bmatrix}x\\k\end{bmatrix}-\dfrac{q^{2k}\left(q^{2(x-k)}-1\right)}{q^{2x}-1}\begin{bmatrix}x\\k\end{bmatrix}\\
    & = \begin{bmatrix}x\\k\end{bmatrix}\left(1-\dfrac{q^{2k}\left(q^{2(x-k)}-1\right)}{q^{2x}-1}\right)\\
    & = \dfrac{q^{2k}-1}{q^{2x}-1}\begin{bmatrix}x\\k\end{bmatrix}.\label{equation:thing4}
\end{align}
\begin{defn}
We define the \textit{\textbf{Skew-$q$-nary Gamma function}} for $x\in\mathbb{R}$, $k\in\mathbb{Z}$ to be
\begin{equation}
\gamma(x,k)=\displaystyle\prod_{i=0}^{k-1}\left(q^x-q^{2i}\right).
\end{equation}

\end{defn}

Theorem \ref{thm:countingMatrices} can then be rewritten as
\begin{equation}\label{equation:numbermatricesgamma}
    \xi_{t,k}= \left[\begin{matrix} n \\ k\end{matrix}\right]\gamma(m,k).
\end{equation}

\begin{lem}\label{lemma:Gammaidentites}
We have the following identities for the skew-$q$-nary Gamma function:
\begin{equation}\gamma(x,k) = q^{k(k-1)}\displaystyle\prod_{i=0}^{k-1}\left(q^{x-2i}-1\right),\end{equation}
    \begin{equation}{\dfrac{\gamma(2x,k)}{\gamma(2k,k)} = \left[\begin{matrix} x \\ k\end{matrix}\right]=\frac{\prod_{i=0}^{k-1}\left(q^{2x-2i}-1\right)}{\prod_{i=1}^k \left(q^{2i}-1\right)},}\end{equation} \begin{equation}\label{equation:gamma2step} \gamma(x+2,k+1) = \left(q^{x+2}-1\right)q^{2k}\gamma(x,k),\end{equation}
    \item \begin{equation}\label{equation:gamma1step}\gamma(x,k+1) = \left(q^x-q^{2k}\right)\gamma(x,k).\end{equation}
\end{lem}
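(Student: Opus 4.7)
The plan is to attack each of the four identities by direct manipulation of the defining product
$\gamma(x,k)=\prod_{i=0}^{k-1}(q^{x}-q^{2i})$, handling them in an order that lets the later ones reuse the earlier ones.

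First I would prove identity (4), which is just the recurrence obtained by peeling the top factor $i=k$ off the product for $\gamma(x,k+1)$: the remaining product over $i=0,\dots,k-1$ is exactly $\gamma(x,k)$, giving $\gamma(x,k+1)=(q^{x}-q^{2k})\gamma(x,k)$. Next I would prove identity (1) by factoring $q^{2i}$ out of each term $q^{x}-q^{2i}=q^{2i}(q^{x-2i}-1)$ and collecting exponents via $\sum_{i=0}^{k-1}2i=k(k-1)$, so that $\gamma(x,k)=q^{k(k-1)}\prod_{i=0}^{k-1}(q^{x-2i}-1)$.

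For identity (3), I would peel off the $i=0$ term of $\gamma(x+2,k+1)$, which contributes $q^{x+2}-1$, then reindex the remaining product over $i=1,\dots,k$ by $j=i-1$ to obtain $\prod_{j=0}^{k-1}(q^{x+2}-q^{2j+2})$. Pulling a $q^{2}$ out of every factor yields $q^{2k}\prod_{j=0}^{k-1}(q^{x}-q^{2j})=q^{2k}\gamma(x,k)$, giving the claim. Alternatively this can be derived by two successive applications of (4) together with the factor $q^{x+2}-q^{2(k+1)}=q^{2}(q^{x}-q^{2k})$, but the direct factorisation is cleaner. Finally for identity (2), I would substitute $x\mapsto 2x$ and $x\mapsto 2k$ into (1), take the ratio so that the prefactors $q^{k(k-1)}$ cancel, and reindex the denominator via $j=k-i$ so that $\prod_{i=0}^{k-1}(q^{2k-2i}-1)=\prod_{j=1}^{k}(q^{2j}-1)$. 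The left-most equality $\gamma(2x,k)/\gamma(2k,k)=\begin{bmatrix}x\\k\end{bmatrix}$ then follows by comparing with Definition \ref{def:skewqnarygaussiancoeffs}.

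These are all routine bookkeeping arguments, so I do not expect a real obstacle; the only place one needs to be a little careful is in identity (2), where one must not conflate the ``$x$'' inside $\gamma$ (an exponent of $q$) with the ``$x$'' inside $\begin{bmatrix}x\\k\end{bmatrix}$ (a subspace-dimension parameter of an $\mathbb{F}_{q^{2}}$-vector space). Keeping that distinction in mind, the substitution $x\mapsto 2x$ in $\gamma$ matches the exponents $q^{2x}$ appearing in Definition \ref{def:skewqnarygaussiancoeffs}, and the identity follows immediately.
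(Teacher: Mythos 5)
Your proposal is correct and follows essentially the same route as the paper: each identity is obtained by direct manipulation of the defining product $\gamma(x,k)=\prod_{i=0}^{k-1}\left(q^{x}-q^{2i}\right)$ (peeling off a factor, factoring out powers of $q$, and reindexing), with the only cosmetic difference being that you derive identity (2) via identity (1) while the paper writes it directly from Definition \ref{def:skewqnarygaussiancoeffs}. No gaps.
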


\begin{proof}~\\
$(1)$
\begin{align}
   \gamma(x,k) & = \prod_{i=0}^{k-1}\left( q^x - q^{2i}\right)\\
   &= \left(\prod_{i=0}^{k-1}q^{2i}\right)\prod_{i=0}^{k-1}\left(q^{x-2i}-1\right)\\
   & = q^{k(k-1)}\prod_{i=0}^{k-1}\left(q^{x-2i}-1\right).
\end{align}

$(2)$
\begin{equation}
    \left[\begin{matrix} x \\ k\end{matrix}\right]=\dfrac{\displaystyle\prod_{i=0}^{k-1}\left(q^{2x}-q^{2i}\right)}{\displaystyle\prod_{i=0}^{k-1}\left(q^{2k}-q^{2i}\right)} = \dfrac{\gamma(2x,k)}{\gamma(2k,k)}=\frac{\displaystyle\prod_{i=0}^{k-1}\left(q^{2x-2i}-1\right)}{\displaystyle\prod_{i=1}^{k}\left(q^{2i}-1\right)}.
\end{equation}

$(3)$
\begin{align}
    \gamma(x+2,k+1) & = \left( q^{x+2}-1\right)\prod_{i=1}^k \left(q^{x+2}-q^{2i}\right)\\
    & = \left(q^{x+2}-1\right)q^{2k}\displaystyle\prod_{i=0}^{k-1} \left(q^x-q^{2i}\right)\\
    & = \left(q^{x+2}-1\right)q^{2k}\gamma(x,k).
    \end{align}
$(4)$
\begin{align}
    \gamma(x,k+1) & = \prod_{i=0}^k\left(q^x-q^{2i}\right)\\
    & = \left(q^x-q^{2k}\right)\prod_{i=0}^{k-1}\left(q^x-q^{2i}\right)\\
    & = \left(q^x-q^{2k}\right)\gamma(x,k).
\end{align}
\end{proof}

\begin{defn}\label{defn:skewqbetafunction}
We also define a \textbf{\textit{Skew-$q$-nary Beta function}} for $x\in\mathbb{R}$, $k\geq 0$ as 
\begin{equation}\label{equation:skewqbetafunction}
    \beta(x,k) =
             \displaystyle\prod_{i=0}^{k-1}\begin{bmatrix}x-i\\1\end{bmatrix}.
\end{equation}
These are closely related to Skew-$q$-Gaussian Coefficients. 
\end{defn}

We also define $\sigma_i=\dfrac{i(i-1)}{2}$ for $i\geq0.$

\begin{lem}\label{lemma:betabmanipulation}
We have for all $x\in\mathbb{R}$, $k\geq 0$,
\begin{equation}\label{equation:betabstartdifferent}
\beta(x,k) = \begin{bmatrix}x\\k\end{bmatrix}\beta(k,k)
\end{equation}
and
\begin{equation}\label{equation:betabstartsame}
\beta(x,x) = \begin{bmatrix}x\\k\end{bmatrix}\beta(k,k)\beta(x-k,x-k).
\end{equation}
\end{lem}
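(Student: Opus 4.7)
The proof is essentially a matter of unwinding the definition of $\beta$ and reindexing. Both identities follow from the observation that $\beta(x,k)$ is a product of $k$ consecutive "shifted" Gaussian coefficients of the form $\begin{bmatrix} \cdot \\ 1 \end{bmatrix} = (q^{2\cdot}-1)/(q^2-1)$, so $\beta$ factorises naturally along splittings of its index range.

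The plan for \eqref{equation:betabstartdifferent} is to expand both sides explicitly. By Definition \ref{defn:skewqbetafunction},
\begin{equation*}
\beta(x,k) = \prod_{i=0}^{k-1}\frac{q^{2(x-i)}-1}{q^2-1},
\qquad
\beta(k,k) = \prod_{i=0}^{k-1}\frac{q^{2(k-i)}-1}{q^2-1} = \frac{\prod_{j=1}^{k}(q^{2j}-1)}{(q^2-1)^{k}},
\end{equation*}
after the reindexing $j = k-i$. Multiplying by the explicit form of $\begin{bmatrix} x \\ k \end{bmatrix}$ given in part~(2) of Lemma~\ref{lemma:Gammaidentites}, the factor $\prod_{j=1}^{k}(q^{2j}-1)$ cancels with the denominator, leaving exactly $\prod_{i=0}^{k-1}\frac{q^{2(x-i)}-1}{q^2-1} = \beta(x,k)$.

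For \eqref{equation:betabstartsame} (which I shall read with $x \in \mathbb{Z}_{\geq 0}$ so that $\beta(x,x)$ is defined), the idea is to split the defining product at $i = k$:
\begin{equation*}
\beta(x,x) = \prod_{i=0}^{x-1}\begin{bmatrix} x-i \\ 1 \end{bmatrix}
= \left(\prod_{i=0}^{k-1}\begin{bmatrix} x-i \\ 1 \end{bmatrix}\right) \left(\prod_{i=k}^{x-1}\begin{bmatrix} x-i \\ 1 \end{bmatrix}\right).
\end{equation*}
The first factor is $\beta(x,k)$ by definition. Reindexing the second factor with $j = i-k$ gives $\prod_{j=0}^{x-k-1}\begin{bmatrix} (x-k)-j \\ 1 \end{bmatrix} = \beta(x-k,x-k)$. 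Substituting $\beta(x,k) = \begin{bmatrix} x \\ k \end{bmatrix}\beta(k,k)$ from the first identity then yields \eqref{equation:betabstartsame}.

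There is no genuine obstacle here; both identities reduce to elementary manipulations of finite products and a single use of the closed form of $\begin{bmatrix} x \\ k \end{bmatrix}$. The only point worth flagging is that for \eqref{equation:betabstartsame} one tacitly needs $x$ (and hence $x-k$) to be a non-negative integer so that the upper limits of the products in $\beta(x,x)$ and $\beta(x-k,x-k)$ make sense; for \eqref{equation:betabstartdifferent} the integer $k$ is already assumed, and the argument is valid for arbitrary real $x$.
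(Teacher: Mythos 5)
Your proof is correct and follows essentially the same route as the paper: both identities are established by expanding the definitions of $\beta$ and $\begin{bmatrix}x\\k\end{bmatrix}$ as finite products and regrouping/cancelling, with the second identity amounting to splitting the product for $\beta(x,x)$ into $\beta(x,k)\beta(x-k,x-k)$. Your remark that \eqref{equation:betabstartsame} tacitly requires $x$ (and $x-k$) to be a non-negative integer is a fair observation about the lemma's stated hypotheses, not a gap in your argument.
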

\begin{proof}
We have
\begin{align}
    \beta(x,k) = \prod_{i=0}^{k-1}\begin{bmatrix}x-i\\1\end{bmatrix} & = \prod_{i=0}^{k-1}\frac{q^{2(x-i)}-1}{q^2-1}\\
    & = \prod_{i=0}^{k-1}\frac{\left(q^{2(x-i)}-1\right)\left(q^{2(k-i)}-1\right)}{\left(q^{2(k-i)}-1\right)(q^2-1)}\\
    & = \prod_{i=0}^{k-1}\left(\frac{q^{2x}-q^{2i}}{q^{2k}-q^{2i}}\right)\prod_{i=0}^{k-1}\left(\frac{q^{2(k-i)}-1}{q^2-1}\right)\\
    & = {\begin{bmatrix}x\\k\end{bmatrix}}\beta(k,k)
\end{align}
as required. Now we have

\begin{align}
    {\begin{bmatrix}x\\k\end{bmatrix}}\beta(k,k)\beta(x-k,x-k) & = \prod_{i=0}^{k-1}\left(\frac{q^{2x}-q^{2i}}{q^{2k}-q^{2i}}\right)\prod_{r=0}^{k-1}\left(\frac{q^{2(k-r)}-1}{q^2-1}\right)\prod_{s=0}^{x-k-1}\left(\frac{q^{2(x-k-s)}-1}{q^2-1}\right)\\
    & = \prod_{i=0}^{x-1}\frac{q^{2(x-i)}-1}{q^2-1}\\
    & = \beta(x,x)
\end{align}
as required.
\end{proof}
\section{The Skew-$q$-Product and Skew-$q$-Transform}\label{section:skew-q-transformetc}

The weight enumerators of any linear code $\mathscr{C}\subseteq \mathscr{A}_{q,t}$ are homogeneous polynomials. We introduce an operation, the Skew-$q$-Product, on homogeneous polynomials that will help to express the relation between the weight enumerator of a code and that of it's dual.

\subsection{The Skew-$q$-product, Skew-$q$-power and the Skew-$q$-transform}
\begin{defn}\label{q-proddefn}
Let 
\begin{align}
    a(X,Y;\lambda) & = \sum_{i=0}^r a_i(\lambda)Y^iX^{r-i},\\
    b(X,Y;\lambda) & = \sum_{i=0}^s b_i(\lambda)Y^iX^{s-i},
\end{align}
be two homogeneous polynomials in $X$ and $Y$, of degrees $r$ and $s$ respectively, and coefficients $a_i(\lambda)$ and $b_i(\lambda)$ respectively, which are real functions of $\lambda$ and are 0 unless otherwise specified, for example $b_i(\lambda)=0$ if $i\notin\{0,1,\ldots, s\}$.
The \textit{\textbf{skew-$\boldsymbol{q}$-product}}, $\ast$, of $a(X,Y;\lambda)$, of order $r$, and $b(X,Y;\lambda)$, is defined as

\begin{equation}\label{qskewproduct}
    \begin{split}
        c(X,Y;\lambda) & = a(X,Y;\lambda) \ast b(X,Y;\lambda) \\
        & = \sum_{u=0}^{r+s} c_u(\lambda)Y^uX^{r+s-u}
    \end{split}
\end{equation}
with
\begin{equation}
    c_u(\lambda) = \sum_{i=0}^u q^{2is} a_i(\lambda)b_{u-i}(\lambda-2i).
\end{equation}
\end{defn}

\begin{defn}
As in \cite{gadouleau2008macwilliams}, the \textbf{\textit{skew-$\boldsymbol{q}$-power}} is defined by
\begin{equation}
\begin{cases}
    a^{[0]}(X,Y;\lambda) =1,\\
    a^{[1]}(X,Y;\lambda) =a(X,Y;\lambda),\\
    a^{[k]}(X,Y;\lambda) = a(X,Y;\lambda) \ast a^{[k-1]}(X,Y;\lambda) &\text{for } k\geq 2.
    \end{cases}
\end{equation}
\end{defn}

\begin{defn}[{\cite[Definition 4]{gadouleau2008macwilliams}}]\label{skew-q-transform}
Let $a(X,Y;\lambda)=\displaystyle\sum_{i=0}^r a_i(\lambda)Y^iX^{r-i}$. We define the \textbf{\textit{skew-$\boldsymbol{q}$-transform}} to be the homogeneous polynomial
\begin{equation}
    \overline{a}(X,Y;\lambda) = \sum_{i=0}^r a_i(\lambda)Y^{[i]}\ast X^{[r-i]}
\end{equation}
where $Y^{[i]}$ is the $i^{th}$ skew-$q$-power of the homogeneous polynomial $a(X,Y;\lambda)=Y$ and $X^{[r-i]}$ is the $r-i^{th}$ skew-$q$-power of the homogeneous polynomial $a(X,Y;\lambda)=X$.
\end{defn}
 
\subsection{Using the Skew-$q$-Product to identify the Rank Weight Enumerator of Skew-Symmetric Matrices}\label{subsection:usingskewqproduct}

In the theory that follows, relating the weight enumerator of a code to it's dual, then we consider the following polynomial. Let
\begin{equation}
    \mu(X,Y;\lambda) = X+\left(q^\lambda-1\right)Y\label{equationforb}.
\end{equation}

\begin{thm}\label{bformula}
If $\mu(X,Y;\lambda)$ is as defined above, then
\begin{equation}
    \mu^{[k]}(X,Y;\lambda) = \sum_{u=0}^{k} \mu_u(\lambda,k)Y^{u}X^{k-u} \quad \text{ for } k\geq 1,
\end{equation}
where
\begin{equation}
    \mu_u(\lambda,k) = {\left[\begin{matrix} k \\ u\end{matrix}\right]}\gamma(\lambda,u).
\end{equation}
Specifically, the weight enumerators for $\mathscr{A}_{q,t}$, the set of skew-symmetric matrices of size $t\geq 1$, denoted by $\Omega_t$, is given by,
\begin{equation}
    \Omega_{t} = \mu^{[n]}(X,Y;m)
\end{equation}
where $m=\frac{t(t-1)}{2n}$. In other words, the skew-$q$-powers of $\mu(X,Y;m)$ provide an explicit form for the weight enumerator of $\mathscr{A}_{q,t}$, the set of skew-symmetric matrices of order $t$.
\end{thm}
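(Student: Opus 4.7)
The plan is to prove the formula for $\mu^{[k]}(X,Y;\lambda)$ by induction on $k$, and then recover the expression for $\Omega_t$ as an immediate corollary via the identity \eqref{equation:numbermatricesgamma}.

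\textbf{Base case.} For $k=1$, the definition gives $\mu^{[1]}(X,Y;\lambda)=X+(q^\lambda-1)Y$. The proposed formula gives $\mu_0(\lambda,1)=\left[\begin{smallmatrix}1\\0\end{smallmatrix}\right]\gamma(\lambda,0)=1$ and $\mu_1(\lambda,1)=\left[\begin{smallmatrix}1\\1\end{smallmatrix}\right]\gamma(\lambda,1)=q^\lambda-1$, matching the base case.

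\textbf{Inductive step.} Assume $\mu^{[k-1]}(X,Y;\lambda)=\sum_{j=0}^{k-1}\mu_j(\lambda,k-1)Y^jX^{k-1-j}$ with $\mu_j(\lambda,k-1)=\left[\begin{smallmatrix}k-1\\j\end{smallmatrix}\right]\gamma(\lambda,j)$. I would then compute $\mu^{[k]}=\mu\ast\mu^{[k-1]}$ from Definition \ref{q-proddefn}. Because $\mu$ has order $r=1$ with only two nonzero coefficients $a_0(\lambda)=1$ and $a_1(\lambda)=q^\lambda-1$, only the terms $i=0$ and $i=1$ survive in the sum defining $c_u(\lambda)$, yielding
\begin{equation*}
c_u(\lambda)=\mu_u(\lambda,k-1)+q^{2(k-1)}(q^\lambda-1)\mu_{u-1}(\lambda-2,k-1).
\end{equation*}
The goal reduces to showing this equals $\left[\begin{smallmatrix}k\\u\end{smallmatrix}\right]\gamma(\lambda,u)$.

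\textbf{Key identity.} The hard part—though brief—is reconciling the two Skew-$q$-Gaussian recurrences with the shift in the $\gamma$ argument. I would first use \eqref{equation:gamma2step} with $x=\lambda-2$ and index $u-1$ to rewrite
\begin{equation*}
q^{2(k-1)}(q^\lambda-1)\gamma(\lambda-2,u-1)=q^{2(k-u)}\gamma(\lambda,u),
\end{equation*}
which absorbs the awkward shift $\lambda\mapsto\lambda-2$ into a clean power of $q$. Substituting back gives
\begin{equation*}
c_u(\lambda)=\left(\left[\begin{smallmatrix}k-1\\u\end{smallmatrix}\right]+q^{2(k-u)}\left[\begin{smallmatrix}k-1\\u-1\end{smallmatrix}\right]\right)\gamma(\lambda,u),
\end{equation*}
and the bracketed sum is exactly $\left[\begin{smallmatrix}k\\u\end{smallmatrix}\right]$ by identity \eqref{equation:thing5}. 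This closes the induction.

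\textbf{Second statement.} Specialising to $k=n$ and $\lambda=m$ gives
\begin{equation*}
\mu^{[n]}(X,Y;m)=\sum_{u=0}^{n}\left[\begin{smallmatrix}n\\u\end{smallmatrix}\right]\gamma(m,u)\,Y^uX^{n-u},
\end{equation*}
and by \eqref{equation:numbermatricesgamma} the coefficient $\left[\begin{smallmatrix}n\\u\end{smallmatrix}\right]\gamma(m,u)$ is exactly $\xi_{t,u}$, so this polynomial coincides term-by-term with the weight enumerator $\Omega_t=\sum_{u=0}^n\xi_{t,u}Y^uX^{n-u}$. The main (and only real) obstacle is arranging the shift-and-substitute step above so that the two Gaussian coefficient recurrences from Section \ref{subsection:b-naryidentities} match the $\gamma$-identity; once that algebraic alignment is spotted, everything collapses.
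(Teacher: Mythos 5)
Your proof is correct and is essentially the paper's own argument: induction on $k$, expanding $\mu\ast\mu^{[k-1]}$ via the skew-$q$-product (only the $i=0,1$ terms surviving), using \eqref{equation:gamma2step} to absorb the $\lambda\mapsto\lambda-2$ shift into a power of $q$, and then identifying $\Omega_t$ through \eqref{equation:numbermatricesgamma}, exactly as in the paper. The only immaterial difference is the final recombination of Gaussian coefficients: you invoke the Pascal-type identity \eqref{equation:thing5} directly, whereas the paper reaches the same conclusion by rewriting $\left[\begin{smallmatrix}k\\i\end{smallmatrix}\right]$ and $\left[\begin{smallmatrix}k\\i-1\end{smallmatrix}\right]$ in terms of $\left[\begin{smallmatrix}k+1\\i\end{smallmatrix}\right]$ via \eqref{equation:thing2} and \eqref{equation:thing4} and simplifying.
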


\begin{proof}
The proof follows the method of induction.\\
Consider $k=1$.
\begin{equation}
    \mu_0(\lambda,1) = {\begin{bmatrix}1\\0\end{bmatrix}}\gamma(\lambda,0)=1,
\end{equation}
\begin{equation}
    \mu_1(\lambda,1) = {\begin{bmatrix}1\\1\end{bmatrix}}\gamma(\lambda,1)=\left(q^\lambda-1\right).
\end{equation}

So,
\begin{align}
    \mu^{[k]} = \mu^{[1]} & = \mu\\
    & = X +\left(q^\lambda-1\right)Y\\
    & = \mu_0(m,1)Y^0X^1 + \mu_1(\lambda,1)Y^1X^0\\
    & = \sum_{u=0}^k \mu_u(\lambda,k)Y^{u}X^{k-u}
\end{align}
as required. Now assume the theorem is true for $k\geq 1$. 
\begin{align}
    \mu^{[k+1]} & = \mu \ast \mu^{[k]}\\
    & = \left( X+\left(q^\lambda-1\right)Y\right)\ast \left(\sum_{u=0}^k \mu_u(\lambda,k) Y^{u}X^{k-u}\right)\\
    & = \left(\sum_{u=0}^1 \mu_u(\lambda,1)Y^{u}X^{1-u}\right)\ast\left(\sum_{u=0}^k \mu_u(\lambda,k) Y^{u}X^{k-u}\right)\\
    & = \sum_{i=0}^{k+1}f_i(\lambda)Y^{i}X^{k+1-i}
\end{align}
where \begin{equation}
    f_i(\lambda) = \sum_{j=0}^i q^{2jk}\mu_j(\lambda,1)\mu_{i-j}(\lambda-2j,k)
\end{equation}
by definition of the skew-$q$-product.\\
If $i=0$,
\begin{equation}
    f_0(\lambda)=q^0\mu_0(\lambda,1)\mu_0(\lambda,k)=1,
\end{equation}
and if $i\geq1$,
\begin{align}
    f_i(\lambda) & = (1)\mu_0(\lambda,1)\mu_i(\lambda,k) + q^{2k}\mu_1(\lambda,1)\mu_{i-1}(\lambda-2,k)\\
    & = {\begin{bmatrix} k\\i\end{bmatrix}}\gamma(\lambda,i)+q^{2k}\left(q^\lambda-1\right){\begin{bmatrix}k\\i-1\end{bmatrix}}\gamma(\lambda-2,i-1).
\end{align}
Now,
\begin{equation}
    \left(q^\lambda-1\right)\gamma(\lambda-2,i-1) \overset{\eqref{equation:gamma2step}}{=} q^{-2(i-1)}\gamma(\lambda,i)
\end{equation}
by rearranging Lemma \ref{lemma:Gammaidentites}. We also have
\begin{equation}
    {\begin{bmatrix}k\\i\end{bmatrix}} \overset{\eqref{equation:thing2}}{=} \dfrac{q^{2(k-i+1)}-1}{q^{2(k+1)}-1}{\begin{bmatrix}k+1\\i\end{bmatrix}}
\end{equation}
and
\begin{equation}
    {\begin{bmatrix}k\\i-1\end{bmatrix}} \overset{\eqref{equation:thing4}}{=} \dfrac{q^{2i}-1}{q^{2(k+1)}-1}{\begin{bmatrix}k+1\\i\end{bmatrix}}.
\end{equation}

Therefore, 
\begin{align}
    f_i(\lambda) & =\dfrac{q^{2(k+1-i)}-1}{q^{2(k+1)}-1}{\begin{bmatrix}k+1\\i\end{bmatrix}}\gamma(\lambda,i) + q^{2k}q^{-2(i-1)}\gamma(\lambda,i)\dfrac{q^{2i}-1}{q^{2(k+1)}-1}\\
    & = \dfrac{{\begin{bmatrix}k+1\\i\end{bmatrix}}\gamma(\lambda,i)}{q^{2(k+1)}-1}q^{2(k+1-i)}-1+q^{2(k+1)}-q^{2(k+1-i)}\begin{bmatrix}k+1\\i\end{bmatrix}\\
    & = {\begin{bmatrix}k+1\\i\end{bmatrix}}\gamma(\lambda,i)
\end{align}
as required. It follows immediately from Equation \eqref{equation:numbermatricesgamma} that $\mu_u(m,n)=\xi_{t,u}$. So $\mu^{[n]}(X,Y;m)=\Omega_t$.
\end{proof}

Now let $\nu(X,Y;\lambda)=X-Y.$

\begin{thm}\label{thed'slemma}
For all $k\geq 1$,
\begin{equation}\label{squaredminussquare}
    \nu^{[k]}(X,Y;\lambda) =\sum_{u=0}^k (-1)^u q^{u(u-1)}{\begin{bmatrix}k \\ u \end{bmatrix}}Y^{u}X^{k-u}.
\end{equation}
\end{thm}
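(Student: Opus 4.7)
The plan is to prove Theorem \ref{thed'slemma} by induction on $k$, in direct parallel to the proof of Theorem \ref{bformula}. A useful preliminary observation is that $\nu(X,Y;\lambda)=X-Y$ has coefficients $a_0=1$, $a_1=-1$, $a_i=0$ otherwise, which are all independent of $\lambda$. Consequently the coefficients on the right-hand side of \eqref{squaredminussquare} are also independent of $\lambda$, and any shifts of the form $\lambda\mapsto\lambda-2j$ appearing in the skew-$q$-product will have no effect. This removes what was the most delicate bookkeeping step in the proof of Theorem \ref{bformula}.

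The base case $k=1$ is immediate: $(-1)^0 q^0 \begin{bmatrix}1\\0\end{bmatrix} X + (-1)^1 q^0 \begin{bmatrix}1\\1\end{bmatrix} Y = X - Y = \nu^{[1]}$. For the inductive step, assuming the formula holds for $k$, write $\nu^{[k+1]} = \nu * \nu^{[k]}$ and apply Definition \ref{q-proddefn} with $r=1$, $s=k$. Because $a_j=0$ for $j\geq 2$, only the $j=0$ and $j=1$ terms survive in the coefficient of $Y^u X^{k+1-u}$, yielding
\begin{equation*}
c_u = (-1)^u q^{u(u-1)} \begin{bmatrix}k\\u\end{bmatrix} + q^{2k}(-1)(-1)^{u-1} q^{(u-1)(u-2)} \begin{bmatrix}k\\u-1\end{bmatrix}.
\end{equation*}

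The goal is to show $c_u = (-1)^u q^{u(u-1)} \begin{bmatrix}k+1\\u\end{bmatrix}$. After factoring out $(-1)^u q^{u(u-1)}$, this reduces to checking that
\begin{equation*}
\begin{bmatrix}k+1\\u\end{bmatrix} = \begin{bmatrix}k\\u\end{bmatrix} + q^{2k+(u-1)(u-2)-u(u-1)} \begin{bmatrix}k\\u-1\end{bmatrix}.
\end{equation*}
A short calculation gives $2k+(u-1)(u-2)-u(u-1) = 2(k+1-u)$, so the required identity is exactly \eqref{equation:thing5} applied with $x\leftarrow k+1$ and $k\leftarrow u$. Thus the inductive step closes and the theorem follows.

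The only potential obstacle is the arithmetic of the $q$-exponents in the inductive step: one must collect the $q^{2k}$ from the skew-$q$-product, the $q^{(u-1)(u-2)}$ coming from the inductive hypothesis at index $u-1$, and the factor $q^{u(u-1)}$ being pulled out, and verify they combine to exactly $q^{2(k+1-u)}$. Once that is done, the Pascal-type recursion \eqref{equation:thing5} finishes the proof cleanly; no further $q$-analog identities are needed.
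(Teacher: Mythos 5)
Your proof is correct and follows essentially the same route as the paper: induction on $k$ via the decomposition $\nu^{[k+1]}=\nu\ast\nu^{[k]}$, with only the $j=0,1$ terms of the skew-$q$-product surviving, and your exponent bookkeeping $2k+(u-1)(u-2)-u(u-1)=2(k+1-u)$ is right. The only cosmetic difference is that you close the induction in one step with the Pascal-type identity \eqref{equation:thing5}, whereas the paper rewrites both Gaussian coefficients over the common denominator $q^{2(k+1)}-1$ using \eqref{equation:thing3} and \eqref{equation:thing4}; both are valid.
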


\begin{proof}
We perform induction on $k$. It is easily checked that the theorem holds for $k=1$.

Now assume the theorem holds for $k\geq 1$.
Then
\begin{align}
    \nu^{[k+1]} & = \nu \ast \nu^{[k]}\\
    & = \left(X-Y\right) \ast\left(\sum_{u=0}^k(-1)^u q^{u(u-1)}{\begin{bmatrix}k\\u\end{bmatrix}}Y^{u}X^{k-u}\right)\\
    & = \sum_{i=0}^{k+1}g_i(\lambda) Y^{i}X^{k+1-i}.
\end{align}
We write $\nu=X-Y=\nu_0X^1Y^0+\nu_1X^0Y^1$ where $\nu_0=1$ and $\nu_1=-1$. Then,
\begin{align}
    g_i(\lambda) & = \sum_{j=0}^i q^{2jk}\nu_j(\lambda)\left\{(-1)^{i-j}q^{(i-j)(i-j-1)}{\begin{bmatrix}k\\i-j\end{bmatrix}}\right\}\\
    & = (-1)^iq^0\nu_0(\lambda)q^{i(i-1)}{\begin{bmatrix}k\\i\end{bmatrix}} + (-1)^{i-1}q^{2k}\nu_1(\lambda)q^{(i-1)(i-2)}{\begin{bmatrix}k\\i-1\end{bmatrix}}
\end{align}
but we have
\begin{equation}
    {\begin{bmatrix}k\\i\end{bmatrix}} \overset{\eqref{equation:thing3}}{=} \dfrac{q^{2(k-i+1)}-1}{q^{2(k+1)}-1}{\begin{bmatrix}k+1\\i\end{bmatrix}}
\end{equation}
and 
\begin{equation}
    {\begin{bmatrix}k\\i-1\end{bmatrix}} \overset{\eqref{equation:thing4}}{=} \dfrac{q^{2i}-1}{q^{2(k+1)}-1}{\begin{bmatrix}k+1\\i\end{bmatrix}}.
\end{equation}
So,
\begin{align}
    g_i(\lambda) & = (-1)^i q^{i(i-1)}\dfrac{q^{2(k-i+1)}-1}{q^{2(k+1)}-1}{\begin{bmatrix}k+1\\i\end{bmatrix}} + (-1)^iq^{2k}q^{i(i-1)}q^{-2(i-1)}\dfrac{q^{2i}-1}{q^{2(k+1)}-1}{\begin{bmatrix}k+1\\i\end{bmatrix}}\\
    & = \dfrac{(-1)^iq^{i(i-1)}}{q^{2(k+1)}-1}{\begin{bmatrix}k+1\\i\end{bmatrix}}\left\{q^{2(k-i+1)}-1+q^{2k-2i+2+2i}-q^{2k-2i+2}\right\}\\
    & = (-1)^iq^{i(i-1)}{\begin{bmatrix}k+1\\i\end{bmatrix}}
\end{align}
as required.

\end{proof}
\section{The MacWilliams Identity for the Skew Rank metric}\label{section:MacWilliamsIdentity}

In this section we introduce the Skew-$q$-Krawtchouk polynomials which we then prove are equal to the generalised Krawtchouk polynomials that are identified in \cite[(15)]{delsartereccurance}\cite[(A10)]{DelsarteBlinear} for the association schemes of alternating bilinear forms over $\mathbb{F}_q$. In this way a new $q$-analog of the MacWilliams Identity for dual subgroups (or codes) of alternating bilinear forms over $\mathbb{F}_q$ is presented and proven by comparison with a traditional form of the identity as given in \cite[Theorem 3]{DelsarteAlternating} and proved in \cite{delsarte1973algebraic} and \cite[(3.14)]{DelsarteBlinear}.

\subsection{Generalised Krawtchouk Polynomials}\label{subsection:generalisedKpolynomials}

We first recall the definition of the Krawtchouk polynomials in the setting of skew-symmetric matrices as in \cite{delsartereccurance}.

\begin{defn}\label{def:generalisedKpolynomial}
For any real number $b\geq 1$ and $c>\frac{1}{b}$ and for $x,k\in\left\{ 0,1,\ldots,y\right\}$ with $y\in\mathbb{Z}^+$ the \textbf{\textit{generalised Krawtchouk Polynomial}}, $P_k(x,y)$, is defined by
\begin{equation}
    P_k(x,y) = \sum_{j=0}^k{(-1)}^{k-j}\left(cb^y\right)^jb^{k-j\choose 2}{\begin{bmatrix}y-j\\y-k\end{bmatrix}_b}{\begin{bmatrix} y-x \\j \end{bmatrix}_b}\label{equation:generalisedKpolynomial}
\end{equation}
where we define the $b$-nary Gaussian Coeffients to be $\begin{bmatrix}x\\k
\end{bmatrix}_b=\displaystyle\prod_{i=0}^{k-1}\frac{b^x-b^i}{b^k-b^i}$ which has the same properties as the skew-$q$-nary Gaussian Coefficients (Definition \ref{def:skewqnarygaussiancoeffs}).
 Note that if $b=1$ these $P_k(x,y)$ are the usual Krawtchouk Polynomials as used, for example, in \cite{TheoryofError}.

\end{defn}

In this paper use is made of the recurrence relation below and it's family of solutions, generalised Krawtchouk Polynomials, as defined above. The recurrence relation, for $b\in\mathbb{R}^+$, $y\in\mathbb{Z}^+$ and $x,k\in\{0,1,\ldots,y\}$ is 
\begin{equation}\label{equation:recurrencerelation}
    P_{k+1}(x+1,y+1) = b^{k+1}P_{k+1}(x,y)-b^kP_{k}(x,y)
\end{equation}

and it's solutions are examined in \cite{delsartereccurance}.

The $P_k(x,y)$ are the only solutions to the recurrence relation \eqref{equation:recurrencerelation} with initial values
\begin{equation}\label{equation:generalinitalconditions}
    P_k(0,y) = {\begin{bmatrix} y\\k\end{bmatrix}_b}\prod_{i=0}^{k-1}\left(cb^y-b^i\right).
\end{equation}
In particular, these become generalised Krawtchouk Polynomials associated with the skew-symmetric matrices of order $t$ with the particular parameter $b=q^2$ then,
\begin{equation}
    P_k(x,n) = \sum_{j=0}^k (-1)^{k-j} q^{2{k-j \choose 2}} {\begin{bmatrix} n-j \\ n-k \end{bmatrix}}{\begin{bmatrix}n-x\\j\end{bmatrix}} q^{jm},
\end{equation}
and in particular,
\begin{equation}
    P_k(0,n)=
    \begin{bmatrix}n\\k\end{bmatrix}\gamma(m,k).
\end{equation}
\begin{note}
From here $\begin{bmatrix}x\\k\end{bmatrix}$ is as defined in Definition \ref{def:skewqnarygaussiancoeffs}.
\end{note}
These initial values, $P_k(0,n)$, count the number of matrices at distance $k$ from any fixed matrix. Now let $\boldsymbol{P}=\left(p_{xk}\right)$ be the $(n+1)\times (n+1)$ matrix with $p_{xk}=P_k(x,n)$.
The matrix $\boldsymbol{P}$ can be used to relate the weight distributions of any code and it's dual. The following theorem is given in \cite{DelsarteAlternating} in relation to alternating bilinear forms but is proved in general for any association scheme in \cite{delsarte1973algebraic}. Here it is written specifically in relation to codes as subgroups of $\mathscr{A}_{q,t}$. It is analogous to the MacWilliams Identity relating the distance distributions of a code and it's dual \cite{TheoryofError}\cite{SystematicCode}.
\begin{thm}\label{thm:DelsarteMacWilliams}
Let $\mathscr{C}\subseteq\mathscr{A}_{q,t}$ be a code with weight distribution $\boldsymbol{c}=(c_0,c_1,\ldots,c_n)$ and  $\boldsymbol{\mathscr{C}}^\perp$ be it's dual with weight distribution $\boldsymbol{c}'=(c'_0,c'_1,\ldots,c'_n)$. Then,
\begin{equation}\label{equation:delsarte'sMacWills}
    \boldsymbol{c}'=\dfrac{1}{\lvert\mathscr{C}\rvert}\boldsymbol{c}\boldsymbol{P}.
\end{equation}
\end{thm}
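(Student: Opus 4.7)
The plan is to derive the identity from the character theory of the additive group $(\mathscr{A}_{q,t},+)$, viewed as an $\mathbb{F}_q$-vector space. Fix a non-trivial additive character $\psi:\mathbb{F}_q\to\mathbb{C}^\times$. Because the form $\langle A,B\rangle=Tr(A^TB)$ is non-degenerate on $\mathscr{A}_{q,t}$ (its kernel contains any $A$ with $2A_{ij}=0$ for all $i<j$, hence, in odd characteristic, just $A=0$), the map $A\mapsto\psi(\langle A,\cdot\rangle)$ identifies $\mathscr{A}_{q,t}$ with its Pontryagin dual and sends $\mathscr{C}^\perp$ (defined via the inner product) to the annihilator of $\mathscr{C}$ in the character group. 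Standard orthogonality on the subgroup $\mathscr{C}$ then gives, for every $B\in\mathscr{A}_{q,t}$,
\[
\sum_{A\in\mathscr{C}}\psi(\langle A,B\rangle)=|\mathscr{C}|\cdot\mathbf{1}\bigl[B\in\mathscr{C}^\perp\bigr].
\]

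Multiplying both sides by the indicator $\mathbf{1}[SR(B)=k]$, summing over $B$ and interchanging summations yields
\[
|\mathscr{C}|\,c'_k=\sum_{A\in\mathscr{C}}K_k(A),\qquad K_k(A):=\sum_{\substack{B\in\mathscr{A}_{q,t}\\ SR(B)=k}}\psi(\langle A,B\rangle).
\]
Next I would show that $K_k(A)$ depends only on $SR(A)$. For any $P\in GL_t(\mathbb{F}_q)$ the substitution $B\mapsto(P^{-1})^TBP^{-1}$ is a congruence, hence preserves skew rank, while cyclicity of the trace leaves the inner product invariant under the simultaneous replacement $A\mapsto PAP^T$. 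Therefore $K_k(PAP^T)=K_k(A)$, and since two skew-symmetric matrices are congruent if and only if they share the same skew rank, there is a function $Q_k:\{0,\ldots,n\}\to\mathbb{C}$ with $K_k(A)=Q_k(SR(A))$. Grouping the codewords of $\mathscr{C}$ by skew rank gives $|\mathscr{C}|c'_k=\sum_{x=0}^{n}c_x\,Q_k(x)$, which is \eqref{equation:delsarte'sMacWills} provided we can show $Q_k(x)=P_k(x,n)$.

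The boundary column is immediate: $Q_k(0)$ counts all skew-symmetric matrices of skew rank $k$, and by Theorem \ref{thm:countingMatrices} together with \eqref{equation:numbermatricesgamma} this equals $\begin{bmatrix}n\\k\end{bmatrix}\gamma(m,k)$, matching the initial values \eqref{equation:generalinitalconditions}. For general $x$, I would exploit congruence-invariance to put $A$ in canonical form $\mathrm{diag}(E_2,\ldots,E_2,\mathcal{O}_{t-2x})$, block-decompose $B$ conformally, and evaluate the character sum block-by-block; one then verifies that $Q_k$ satisfies the three-term recurrence \eqref{equation:recurrencerelation} in $(k,x)$, and uniqueness of the solutions of that recurrence with the stated initial values forces $Q_k(x)=P_k(x,n)$.

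The main obstacle is this last block-evaluation step: it requires counting skew-symmetric matrices whose skew rank is distributed in a prescribed way across the two diagonal blocks determined by the canonical form of $A$, and then collapsing the resulting double sum into the recurrence via $q$-binomial identities among the skew-$q$-nary Gaussian coefficients of Subsection \ref{subsection:b-naryidentities}. An attractive shortcut, which I would probably prefer in writing the paper up, is to recognise that the partition of $\mathscr{A}_{q,t}\times\mathscr{A}_{q,t}$ by skew-rank distance is Delsarte's association scheme of alternating bilinear forms, whose first eigenmatrix has been computed in \cite{DelsarteAlternating, delsarte1973algebraic} and is precisely $\boldsymbol{P}=(P_k(x,n))$; equation \eqref{equation:delsarte'sMacWills} is then the general MacWilliams transform for a self-dual translation scheme and no further computation is required.
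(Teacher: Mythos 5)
Note first that the paper does not actually prove Theorem \ref{thm:DelsarteMacWilliams}: it is quoted from \cite{DelsarteAlternating}, with the general association-scheme proof attributed to \cite{delsarte1973algebraic}, and the matrix $\boldsymbol{P}$ is taken to be the eigenmatrix of the alternating-forms scheme. Your closing ``shortcut'' is therefore exactly the paper's treatment, and as a citation-level argument it is fine. Your self-contained character-theoretic route (in the spirit of \cite{gadouleau2008macwilliams}) is also the right skeleton: the orthogonality relation over the subgroup $\mathscr{C}$, the congruence argument showing $K_k(A)$ depends only on $SR(A)$, and the boundary identification $Q_k(0)=\xi_{t,k}=\begin{bmatrix}n\\k\end{bmatrix}\gamma(m,k)$ via Theorem \ref{thm:countingMatrices} and \eqref{equation:numbermatricesgamma} are all correct.

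However, the step you defer is not a peripheral verification; it is the entire mathematical content of the theorem. Showing $Q_k(x)=P_k(x,n)$ amounts to computing the eigenvalues of the alternating bilinear forms scheme, which is precisely what Delsarte's work supplies and what your sketch does not. Two concrete difficulties: (i) the recurrence \eqref{equation:recurrencerelation} increments $y$ as well as $x$ and $k$, so ``verifying the recurrence for $Q_k$'' forces you to relate character sums over spaces of different orders $t$ and $t+2$ (the canonical form of a rank-$(x+1)$ matrix in $\mathscr{A}_{q,t+2}$ splitting off an $E_2$ block), and the uniqueness statement you invoke, with initial values \eqref{equation:generalinitalconditions}, only pins $Q_k$ down once this cross-order family is handled coherently; none of this block-by-block evaluation is carried out, and it is where all the $q$-binomial work lives. (ii) Your non-degeneracy claim for $\langle A,B\rangle=Tr(A^TB)$ holds only in odd characteristic: on alternating matrices one has $Tr(A^TB)=2\sum_{i<j}a_{ij}b_{ij}$, which vanishes identically when $q$ is even, so for even $q$ the identification of $\mathscr{C}^\perp$ with the character-theoretic annihilator of $\mathscr{C}$ requires a different (but still non-degenerate and rank-compatible) pairing, e.g.\ $\sum_{i<j}a_{ij}b_{ij}$; since the paper allows any prime power $q$, this case cannot simply be excluded. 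In short: as a blind proof the proposal reduces the theorem correctly but leaves the decisive eigenvalue computation (and the even-characteristic pairing) unproven, whereas the paper sidesteps both by citing Delsarte.
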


\subsection{The Skew-$q$-Krawtchouk Polynomials}
We now consider the following set of polynomials which arise in finding the skew-$q$-transform $\overline{a}\left(\mu,\nu;m\right)$ where $a(X,Y;\lambda)$ is as defined in Definition \ref{skew-q-transform} and $\mu(X,Y;m)$ and $\nu(X,Y;m)$ are as in Section \ref{subsection:usingskewqproduct}.
\begin{defn}
For $t\in\mathbb{Z}^+$, $x,k\in\{0,1,\ldots,n\}$ where $n=\lfloor\frac{t}{2}\rfloor,$ and $m=\frac{t(t-1)}{2n}$ we define the \textbf{\textit{the Skew-$\boldsymbol{q}$-Krawtchouk Polynomial}} as
\begin{equation}
    C_k(x,n) = \sum_{j=0}^k (-1)^j q^{2j(n-x)} q^{j(j-1)}{\left[\begin{matrix} x \\ j\end{matrix}\right]}{\left[\begin{matrix} n-x \\ k-j\end{matrix}\right]}\gamma(m-2j,k-j).
\end{equation}
\end{defn}
\begin{note}
    We note that the value of the skew-$q$-Krawtchouk polynomial $C_k(x,n)$ depends on $m$, which in turn depends on the parity of $t$. However, it behaves in the same way regardless of the parity of $t$, and as such we shall use our shorthand notation and only make the dependence on $n$ explicit.
\end{note}
We first prove that the $C_k(x,n)$ satisfy the recurrence relation \eqref{equation:generalisedKpolynomial} and the initial values in \eqref{equation:generalinitalconditions} and are therefore generalised Krawtchouk polynomials.

\begin{prop}
For all $x,k\in\{0,\ldots,n\}$ we have
\begin{equation}
    C_{k+1}(x+1,n+1) = q^{2(k+1)}C_{k+1}(x,n)-q^{2k}C_k(x,n).\label{equation:recurrenceCKI}
\end{equation}

\end{prop}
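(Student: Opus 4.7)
The plan is to prove the identity by induction on $k$, after a preliminary algebraic rearrangement of the left-hand side. The key observation is that shifting $n\mapsto n+1$ forces $m\mapsto m+2$ (since $t\mapsto t+2$ in either parity), while $(n+1)-(x+1)=n-x$ is unchanged. Consequently, the only factors that differ between $C_{k+1}(x+1,n+1)$ and the expression for $C_{k+1}(x,n)$ are $\begin{bmatrix}x+1\\j\end{bmatrix}$ and $\gamma(m+2-2j, k+1-j)$.

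The core manipulation applies two identities in succession. First, I would use the $q$-Pascal identity \eqref{equation:thing5} to expand $\begin{bmatrix}x+1\\j\end{bmatrix} = \begin{bmatrix}x\\j\end{bmatrix} + q^{2(x+1-j)}\begin{bmatrix}x\\j-1\end{bmatrix}$; the sum involving $\begin{bmatrix}x\\j-1\end{bmatrix}$, once reindexed by $j\mapsto j+1$, collapses to $-q^{2n}C_k(x,n)$. Next I would apply the $\gamma$-recurrence \eqref{equation:gamma2step} to the remaining $\gamma(m+2-2j, k+1-j)$ factor and then split $q^{m+2-2j}-1 = (q^{m+2-2j}-q^{2(k+1-j)}) + (q^{2(k+1-j)}-1)$: the first summand is handled via \eqref{equation:gamma1step} (producing $\gamma(m-2j, k+1-j)$), and the second via \eqref{equation:thing3} (producing $\begin{bmatrix}n-x\\k-j\end{bmatrix}$). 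After tracking the compounding powers of $q$, the left-hand side rearranges to
\[
q^{2(k+1)} \tilde{C}_{k+1} \;-\; q^{2k} \tilde{C}_k \;+\; (q^{2(n-x)} - q^{2n})\, C_k(x,n),
\]
where $\tilde{C}_K$ denotes the variant of $C_K(x,n)$ with $q^{2j(n-x-1)}$ replacing $q^{2j(n-x)}$.

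The crucial auxiliary identity is
\[
\tilde{C}_K - C_K(x,n) \;=\; (q^{2x}-1)\, q^{2(n-x-1)}\, C_{K-1}(x-1,n-1) \qquad (K\geq 1),
\]
which follows from \eqref{equation:thing4} (in the form $(q^{2j}-1)\begin{bmatrix}x\\j\end{bmatrix} = (q^{2x}-1)\begin{bmatrix}x-1\\j-1\end{bmatrix}$) after reindexing. Substituting this into the rearranged expression, the difference between the two sides factors as
\[
(q^{2x}-1)\, q^{2(n-x-1)} \bigl[q^{2(k+1)} C_k(x-1, n-1) - q^{2k} C_{k-1}(x-1, n-1) - q^2 C_k(x,n)\bigr]
\]
for $k\geq 1$, while for $k=0$ it reduces directly to $(q^{2x}-1)q^{2(n-x)} + q^{2(n-x)} - q^{2n} = 0$, handling the base case. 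For the inductive step, the proposition at $k-1$ applied to $(x-1, n-1)$ yields $C_k(x,n) = q^{2k}C_k(x-1,n-1) - q^{2(k-1)}C_{k-1}(x-1,n-1)$, which makes the bracket vanish; note that when $x=0$ the prefactor $(q^{2x}-1)$ already annihilates everything, so no induction is required at the boundary.

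The main obstacle is spotting that the residual discrepancy is itself precisely the proposition at the shifted indices $(x-1,n-1,k-1)$, thereby furnishing the induction on $k$. The preceding bookkeeping of $q$-exponents through the repeated applications of \eqref{equation:thing5}, \eqref{equation:gamma2step}, \eqref{equation:gamma1step}, and \eqref{equation:thing3} is mechanical but sufficiently delicate that every substitution must be verified.
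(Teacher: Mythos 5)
Your proposal is correct, and I checked the key claims: the Pascal split via \eqref{equation:thing5} does make the $\begin{bmatrix}x\\j-1\end{bmatrix}$-sum collapse (after $j\mapsto j+1$) to $-q^{2n}C_k(x,n)$; the remaining sum, after \eqref{equation:gamma2step}, the split of $q^{m+2-2j}-1$, \eqref{equation:gamma1step} and \eqref{equation:thing3}, does rearrange to $q^{2(k+1)}\tilde{C}_{k+1}-q^{2k}\tilde{C}_k+q^{2(n-x)}C_k(x,n)$; your auxiliary identity $\tilde{C}_K-C_K(x,n)=(q^{2x}-1)q^{2(n-x-1)}C_{K-1}(x-1,n-1)$ follows from \eqref{equation:thing4} exactly as you say (with $m\mapsto m-2$ matching $n\mapsto n-1$ in either parity of $t$); and the base case $k=0$ and the $x=0$ boundary work out. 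However, your route is genuinely different from the paper's. The paper gives a direct, non-inductive verification: it sets $C=C_{k+1}(x+1,n+1)-q^{2(k+1)}C_{k+1}(x,n)+q^{2k}C_k(x,n)$, expands all three polynomials using \eqref{equation:thing1} and the $\gamma$-recurrences, names the resulting partial sums $\alpha,\beta,\lambda,\varepsilon,\sigma,\tau$, and shows they cancel term by term (ending with $C=\lambda+\tau-\tau|_{j=k}=0$), so the recurrence is established in one pass with $n$ and $x$ held fixed. You instead induct on $k$, introduce the modified polynomials $\tilde{C}_K$, and recognise the residual discrepancy as the same recurrence at the shifted indices $(k-1,x-1,n-1)$. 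What the paper's approach buys is self-containment at fixed parameters, at the cost of juggling six labelled sums whose cancellation pattern is not especially illuminating; what yours buys is a more structured argument in which the error term has meaning (a lower-order instance of the identity itself), at the cost of quantifying the statement over all $(x,n)$ simultaneously so the shifted induction hypothesis is available, of the extra lemma relating $\tilde{C}_K$ to $C_{K-1}(x-1,n-1)$, and of equally delicate exponent bookkeeping, which your sketch asserts rather than displays but which does check out.
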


\begin{proof}

Let $C=C_{k+1}(x+1,n+1)- q^{2(k+1)}C_{k+1}(x,n)+q^{2k}C_k(x,n)$. By definition, 
\begin{align}
    C_{k+1}(x+1,n+1)
    & = \left. C_{k+1}(x+1,n+1)\right|_{j=k+1}\\
    & + \sum_{j=0}^{k} (-1)^j q^{2j(n-x)} q^{j(j-1)} {\left[\begin{matrix} x+1 \\ j\end{matrix}\right]}{\left[\begin{matrix} n-x \\ k+1-j \end{matrix}\right]} \gamma\left(m+2-2j, k+1-j\right)\\
     \overset{\eqref{equation:thing1}}&{=} \left. C_{k+1}(x+1,n+1)\right|_{j=k+1}\\
    & + \sum_{j=0}^{k} (-1)^j q^{2j(n-x)+j(j+1)} {\left[\begin{matrix} x \\ j\end{matrix}\right]}{\left[\begin{matrix} n-x \\ k+1-j\end{matrix}\right]}\gamma\left(m+2-2j, k+1-j\right)\\
    & + \sum_{j=1}^{k} (-1)^j q^{2j(n-x)+j(j-1)} {\left[\begin{matrix} x \\ j-1\end{matrix}\right]}{\left[\begin{matrix} n-x \\ k+1-j\end{matrix}\right]}\gamma\left( m+2-2j, k+1-j\right)\\
    \overset{\eqref{equation:gamma2step}}&{=} \left. C_{k+1}(x+1,n+1)\right|_{j=k+1}\\
    & + \sum_{j=0}^k (-1)^j q^{2j(n-x)+j(j-1)+m+2+2(k-j)}{\left[\begin{matrix} x \\ j\end{matrix}\right]}{\left[\begin{matrix} n-x \\ k+1-j\end{matrix}\right]} \gamma\left(m-2j,k-j\right)\label{equation:alpha}\\
    & - \sum_{j=0}^k (-1)^j q^{2j(n-x)+j(j-1)+2k}  {\left[\begin{matrix} x \\ j\end{matrix}\right]}{\left[\begin{matrix} n-x \\ k+1-j\end{matrix}\right]}\gamma(m-2j,k-j)\label{equation:beta} \\
    & + \sum_{j=0}^{k} (-1)^j q^{2j(n-x)+j(j-1)} {\left[\begin{matrix} x \\ j-1\end{matrix}\right]}{\left[\begin{matrix} n-x \\ k+1-j\end{matrix}\right]} \gamma(m+2-2j, k+1-j)\label{equation:lambda}\\
    & = \alpha - \beta + \lambda + \left. C_{k+1}(x+1,n+1) \right|_{j=k+1}
\end{align}

where $\alpha$, $\beta$, $\lambda$ represent summands \eqref{equation:alpha}, \eqref{equation:beta}, \eqref{equation:lambda} respectively and for notation, $|_{j=k+1}$ means ``the term when $j=k+1$''.

Similarly, 
\begin{equation}
    q^{2(k+1)}C_{k+1}(x,n) = \sum_{j=0}^{k+1}(-1)^j q^{2(k+1)}q^{2j(n-x)} q^{j(j-1)}{\left[\begin{matrix} x \\ j\end{matrix}\right]}{\left[\begin{matrix} n-x \\ k+1-j\end{matrix}\right]}\gamma(m-2j,k+1-j).
\end{equation}

But,

\begin{equation}
    q^{2(k+1)} \gamma(m-2j,k+1-j) \overset{\eqref{equation:gamma1step}}{=} \begin{cases}
     q^{2k}\left(q^{m+2-2j}-q^{2(k-j+1)}\right)\gamma(m-2j,k-j) & \text{if}~j<k+1,\\
     q^{2(k+1)} & \text{if}~ j=k+1.
    \end{cases}
\end{equation}

So,
\begin{align}
 q^{2(k+1)}C_{k+1}(x,n) & =  q^{2(k+1)}\left. C_{k+1}(x,n)\right|_{j=k+1}\\
 & + \sum_{j=0}^{k} (-1)^j q^{2j(n-x)+j(j-1)+m+2+2(k-j)} {\left[\begin{matrix} x \\ j\end{matrix}\right]}{\left[\begin{matrix} n-x \\ k+1-j\end{matrix}\right]} \gamma(m-2j,k-j)\label{equation:delta}\\
 & - \sum_{j=0}^k (-1)^j q^{2j(n-x)+j(j-1)+2k+2(k-j+1)} {\left[\begin{matrix} x \\ j\end{matrix}\right]}{\left[\begin{matrix} n-x \\ k+1-j\end{matrix}\right]}\gamma(m-2j, k-j)\label{equation:varepsilon}\\
 & = \alpha + \varepsilon + q^{2(k+1)}\left. C_{k+1}(x,n)\right|_{j=k+1}.
\end{align}
Where $\varepsilon$ represents the summand \eqref{equation:varepsilon}. 
Thirdly,

\begin{equation}
    \begin{split}
        q^{2k}C_k(x,n) & =\sum_{j=0}^k q^{2j(n-x)+j(j-1)+2k}(-1)^j {\left[\begin{matrix} x \\ j\end{matrix}\right]}{\left[\begin{matrix} n-x \\ k-j\end{matrix}\right]} \gamma(m-2j,k-j),\\
        & = \sigma, ~\text{say}.
    \end{split}
\end{equation}
So we have, 
\begin{equation}
    C =  \beta + \lambda - \varepsilon +\sigma + \left. C_{k+1}(x+1,n+1)\right|_{j=k+1}-q^{2(k+1)} \left. C_{k+1}\right|_{j=k+1}
\end{equation}
and 
\begin{align}
    \beta-\varepsilon & = \sum_{j=0}^k q^{2j(n-x)+j(j-1)+2k}(-1)^{j+1} {\left[\begin{matrix} x \\ j\end{matrix}\right]}{\left[\begin{matrix} n-x \\ k+1-j\end{matrix}\right]}\gamma(m-2j,k-j)\left( 1-q^{2(k-j+1)}\right)\\
    \overset{\eqref{equation:thing3}}&{=} \sum_{j=0}^k q^{2j(n-x)+j(j-1)+2k}(-1)^{j+1}\left(1-q^{2(k-j+1)}\right) {\left[\begin{matrix} x \\ j\end{matrix}\right]}\dfrac{q^{2((n-x)-(k-j))}-1}{q^{2(k+1-j)}-1}{\left[\begin{matrix} n-x \\ k-j\end{matrix}\right]} \gamma(m-2j,k-j)\\
    & = \sum_{j=0}^k q^{2j(n-x)+j(j-1)+2k}(-1)^{j+1} {\left[\begin{matrix} x \\ j\end{matrix}\right]}{\left[\begin{matrix} n-x \\ k-j\end{matrix}\right]}\gamma(m-2j,k-j)\\
    & + \sum_{j=0}^k q^{(j+1)(2n-2x+j)}(-1)^j {\left[\begin{matrix} x \\ j\end{matrix}\right]}{\left[\begin{matrix} n-x \\ k-j\end{matrix}\right]}\gamma(m-2j, k-j)\label{equation:tau}\\
    & = -\sigma +\tau,
\end{align}

where $\tau$ represents the summand in \eqref{equation:tau}.

So $\beta-\varepsilon+\sigma=\tau$. Thus,
\begin{equation}
    C  = \lambda + \tau + \left. C_{k+1}(x+1,n+1)\right|_{j=k+1} - q^{2(k+1)}\left. C_{k+1}(x,n)\right|_{j=k+1}.
\end{equation}

Now, 
\begin{align}
C_{k+1}\left.(x+1,n+1)\right|_{j=k+1} - q^{2(k+1)} \left. C_{k+1}(x,n)\right|_{j=k+1}
    & = q^{2(k+1)(n-x)}(-1)^{k+1}q^{(k+1)k}{\left[\begin{matrix} x + 1 \\ k+1\end{matrix}\right]}\\
    & ~\ ~ -q^{2(k+1)}q^{2(k+1)(n-x)}(-1)^{k+1}q^{(k+1)k}{\left[\begin{matrix} x \\ k+1\end{matrix}\right]}\\
    \overset{\eqref{equation:thing1}}&{=} q^{2(k+1)(n-x)+k(k+1)}(-1)^{k+1} {\left[\begin{matrix} x \\ k\end{matrix}\right]}.
\end{align}

So, 

\begin{equation}
    C = \lambda +\tau + q^{2(k+1)(n-x)+k(k+1)}(-1)^{k+1} {\left[\begin{matrix} x \\ k\end{matrix}\right]}.
\end{equation}
Now, 
\begin{equation}
    \left. \tau\right|_{j=k} = q^{2(k+1)(n-x)+k(k+1)}(-1)^{k} {\left[\begin{matrix} x \\ k\end{matrix}\right]}
\end{equation}

Leaving, 
\begin{equation}
    C=\lambda +\tau -\left. \tau\right|_{j=k}.
\end{equation}

Now consider $\lambda$.
\begin{align}
    \lambda & = \sum_{j=1}^k q^{2j(n-x)+j(j-1)}(-1)^j{\left[\begin{matrix} x \\ j-1\end{matrix}\right]}{\left[\begin{matrix} n-x \\ k+1-j\end{matrix}\right]}\gamma(m+2-2j, k+1-j)\\
    & = \sum_{j=0}^{k-1} q^{(j+1)(2n-2x+j)}(-1)^{j+1}{\left[\begin{matrix} x \\ j\end{matrix}\right]}{\left[\begin{matrix} n-x \\ k-j\end{matrix}\right]}\gamma(m-2j, k-j)\\
    & = -(\tau-\tau|_{j=k}).
\end{align}
So $C=\lambda +\tau -\tau|_{j=k}=0$ and so the $C_k(x,n)$ satisfy the recurrence relation \eqref{equation:recurrenceCKI}.
\end{proof}

\begin{lem}\label{lemma:ckiequalspki}
The $C_k(x,n)$ are the generalised Krawtchouk polynomials. In other words,
\begin{equation}\label{equation:ckiequalspki}
    C_k(x,n) = P_k(x,n).
\end{equation}
\end{lem}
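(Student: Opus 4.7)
The plan is to reduce the claim to an initial-condition check and invoke the uniqueness of solutions to the recurrence stated after \eqref{equation:recurrencerelation}. The preceding proposition already establishes that $C_k(x,n)$ satisfies \eqref{equation:recurrenceCKI}, which is precisely \eqref{equation:recurrencerelation} specialised to $b = q^2$. Since the $P_k(x,n)$ are characterised as the unique family of polynomials satisfying this recurrence together with the initial values \eqref{equation:generalinitalconditions}, it suffices to check that $C_k(0,n)$ agrees with the value $P_k(0,n) = \begin{bmatrix}n\\k\end{bmatrix}\gamma(m,k)$ recorded for the skew-symmetric case.

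To carry out the initial-condition verification I would specialise $x = 0$ in the definition of $C_k(x,n)$. The factor $\begin{bmatrix}0\\j\end{bmatrix}$ vanishes for $j \geq 1$ and equals $1$ for $j = 0$, so only the $j = 0$ summand survives, leaving $C_k(0,n) = \begin{bmatrix}n\\k\end{bmatrix}\gamma(m,k)$, which matches $P_k(0,n)$ exactly.

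Combining the recurrence from the preceding proposition with this matching initial condition, uniqueness forces $C_k(x,n) = P_k(x,n)$ for all admissible $x,k$. The main technical work has already been done in the proof of the recurrence, so no substantive obstacle remains; the initial-condition check collapses to a single surviving term, and the uniqueness step is immediate. One minor point worth flagging is that when $n$ shifts to $n+1$ the parameter $m$ shifts by $2$ accordingly (so the implicit constant $c = q^{m-2n}$ in \eqref{equation:generalinitalconditions} is $q$ or $q^{-1}$ depending on the parity of $t$), but this shift is already reflected on both sides of the recurrence via the $\gamma(m,k)$ and $\gamma(m+2,k+1)$ terms appearing in the previous proof, so nothing further need be checked.
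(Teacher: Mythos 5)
Your proposal is correct and follows essentially the same route as the paper: the recurrence is inherited from the preceding proposition, the initial values are checked by setting $x=0$ so that only the $j=0$ term survives, giving $C_k(0,n)=\begin{bmatrix}n\\k\end{bmatrix}\gamma(m,k)=P_k(0,n)$, and uniqueness of solutions to the recurrence with these initial values completes the argument. The remark about the parameter shift in $m$ is a sensible extra observation but not needed beyond what the paper already does.
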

\begin{proof}
The $C_k(x,n)$ satisfy the recurrence relation \eqref{equation:recurrenceCKI} and the initial values of the $C_k(x,n)$ are 
\begin{align}
    C_k(0,n) & = \sum_{j=0}^k (-1)^j q^{2jn}q^{j(j-1)}{\begin{bmatrix} 0\\j\end{bmatrix}}{\begin{bmatrix}n\\k-j\end{bmatrix}}\gamma(m-2j, k-j)\\
    & = {\begin{bmatrix}n\\k\end{bmatrix}}\gamma(m,k).
\end{align}
\end{proof}

We note that this explicit form for the generalised Krawtchouk polynomials is distinct from the three forms presented in \cite[(15)]{delsartereccurance}.

\subsection{The MacWilliams Identity for the Skew Rank Metric}\label{section:Macwilliamsidentity}

We now use the Skew-$q$-Krawtchouk polynomials to prove the $q$-analog form of the MacWilliams Identity for alternating bilinear forms over $\mathbb{F}_q$. We note that this form is similar to the $q$-analog of the MacWilliams Identity developed in \cite{gadouleau2008macwilliams} for linear rank metric codes over $\mathbb{F}_{q^m}$ but differs in the parameters of the $q$-transforms and the meaning of the variable $m$.\\

Let the skew rank weight enumerator of $\mathscr{C}$ be,
\begin{equation}
    W_{\mathscr{C}}^{SR}(X,Y)=\sum_{i=0}^n c_i Y^{i} X^{n-i}
\end{equation}
and of it's dual, $\mathscr{C}^{\perp}$ be
\begin{equation}
    W_{\mathscr{C}^\perp}^{SR}(X,Y)=\sum_{i=0}^n c_i' Y^{i} X^{n-i}.
\end{equation}

\begin{thm}[The MacWilliams Identity for the Skew Rank Metric]\label{mainthm1}
Let $\mathscr{C}$ be a linear code with $\mathscr{C}\subseteq \mathscr{A}_{q,t}$.
Then
\begin{equation}
    W_{\mathscr{C}^\perp}^{SR}(X,Y)=\frac{1}{\left| \mathscr{C}\right|}\overline{W}_{\mathscr{C}}^{SR}\left( X +(q^m-1)Y, X-Y\right).
\end{equation}
\end{thm}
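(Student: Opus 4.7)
The strategy is to evaluate the right-hand side using the skew-$q$-power formulas developed in Section \ref{section:skew-q-transformetc}, identify the coefficient of $Y^k X^{n-k}$ as a generalised Krawtchouk polynomial evaluation, and then apply Delsarte's MacWilliams identity (Theorem \ref{thm:DelsarteMacWilliams}) to finish.

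By Definition \ref{skew-q-transform}, the substitution $X \mapsto \mu(X,Y;m) = X + (q^m-1)Y$ and $Y \mapsto \nu(X,Y;m) = X - Y$ in $\overline{W}_{\mathscr{C}}^{SR}$ turns each placeholder $Y^{[i]} * X^{[n-i]}$ into $\nu^{[i]} * \mu^{[n-i]}$, so that
\begin{equation}
\overline{W}_{\mathscr{C}}^{SR}(\mu, \nu) = \sum_{i=0}^n c_i \, \nu^{[i]}(X,Y;m) * \mu^{[n-i]}(X,Y;m).
\end{equation}
Plugging in the closed forms for $\nu^{[i]}$ from Theorem \ref{thed'slemma} and for $\mu^{[n-i]}$ from Theorem \ref{bformula}, using the definition \eqref{qskewproduct} of the skew-$q$-product (with $s = n-i$), and being careful to apply the shift $\lambda \mapsto \lambda - 2u$ to the $\mu$-coefficients before setting $\lambda = m$, the coefficient of $Y^k X^{n-k}$ in $\nu^{[i]} * \mu^{[n-i]}$ works out to
\begin{equation}
\sum_{u=0}^k (-1)^u q^{2u(n-i) + u(u-1)} \begin{bmatrix} i \\ u \end{bmatrix} \begin{bmatrix} n-i \\ k-u\end{bmatrix} \gamma(m-2u, k-u),
\end{equation}
which is exactly the skew-$q$-Krawtchouk polynomial $C_k(i,n)$.

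Invoking Lemma \ref{lemma:ckiequalspki} to rewrite $C_k(i,n)$ as $P_k(i,n)$, we therefore obtain
\begin{equation}
\overline{W}_{\mathscr{C}}^{SR}(\mu, \nu) = \sum_{k=0}^n \left( \sum_{i=0}^n c_i P_k(i,n) \right) Y^k X^{n-k},
\end{equation}
and Theorem \ref{thm:DelsarteMacWilliams} gives $\sum_i c_i P_k(i,n) = |\mathscr{C}| \, c'_k$. Dividing by $|\mathscr{C}|$ yields the claimed identity.

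The bulk of the technical content has already been absorbed into the preliminaries: the explicit forms for $\mu^{[k]}$ and $\nu^{[k]}$, and the identification $C_k(i,n) = P_k(i,n)$ via the Krawtchouk recurrence. The main care required at this step is the correct tracking of the $\lambda$-shift in the skew-$q$-product; this is precisely the shift that produces the factor $\gamma(m-2u, k-u)$ and makes the match with $C_k(i,n)$ work term by term. With that bookkeeping in place, the proof reduces to a direct coefficient comparison.
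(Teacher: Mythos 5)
Your proposal is correct and follows essentially the same route as the paper's own proof: expand $\nu^{[i]}\ast\mu^{[n-i]}$ via Theorems \ref{thed'slemma} and \ref{bformula} and the skew-$q$-product (with the $\lambda\mapsto\lambda-2u$ shift producing $\gamma(m-2u,k-u)$), recognise the coefficient of $Y^kX^{n-k}$ as $C_k(i,n)=P_k(i,n)$ by Lemma \ref{lemma:ckiequalspki}, and conclude with Theorem \ref{thm:DelsarteMacWilliams}. No gaps to report.
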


\begin{proof}
For $0\leq i\leq n$ we have
\begin{align}
\left(X-Y\right)^{[i]}\ast\left(X+\left(q^m-1\right)Y\right)^{[n-i]} 
\overset{\eqref{qskewproduct}}&{=} \sum_{k=0}^n\left(\sum_{\ell=0}^k q^{2\ell(n-i)}(-1)^\ell q^{\ell(\ell-1)}{\begin{bmatrix}i\\\ell\end{bmatrix}}{\begin{bmatrix}n-i\\k-\ell\end{bmatrix}}\gamma(m-2\ell, k-\ell)\right)Y^{k}X^{n-k}\\
& = \sum_{k=0}^n C_k(i,n)Y^{k}X^{n-k} \\
\overset{\eqref{equation:ckiequalspki}}&{=} \sum_{k=0}^n P_k(i,n)Y^{k}X^{n-k}.
\end{align}
So then we have
\begin{align}
    \dfrac{1}{\left\vert\mathscr{C} 
    \right\vert} \overline{W}^{SR}_\mathscr{C}\left(X+\left(q^m-1\right)Y, X-Y\right)
    & = \dfrac{1}{\left\vert\mathscr{C}\right\vert}\sum_{i=0}^n c_i \sum_{k=0}^n P_k(i,n)Y^kX^{n-k}\\
    & =\sum_{k=0}^n\left(\dfrac{1}{\left\vert\mathscr{C}\right\vert}\sum_{i=0}^n c_i P_k(i,n)\right)Y^{k}X^{n-k}\\
    \overset{\eqref{equation:delsarte'sMacWills}}&{=} \sum_{k=0}^n c_k' Y^{k}X^{n-k} \\
    & = W_{\mathscr{C}^\perp}^{SR}(X,Y).
\end{align}

\end{proof}

In this way we have shown that the MacWilliams identity for a code and it's dual based on alternating bilinear forms over $\mathbb{F}_q$ can be expressed as a $q$-transform of homogeneous polynomials in a form analogous to the original MacWilliams identity for the Hamming metric and the $q$-analog developed by \cite{gadouleau2008macwilliams} for the rank metric.

\section{The Skew-$q$-Derivatives}\label{section:derivatives}
In this section we develop a new skew-$q$-derivative and skew-$q^{-1}$-derivative to help analyse the coefficients of skew rank weight enumerators. This is analogous to the $q$-derivative applied to the rank metric in \cite{gadouleau2008macwilliams} with the parameter $q$ replaced by $q^2$.

\subsection{The Skew-$q$-Derivative}

\begin{defn}
For $q\geq 2$, the \textbf{\textit{skew-$\boldsymbol{q}$-derivative}} at $X\neq 0$ for a real-valued function $f(X)$ is defined as
\begin{equation}
    f^{(1)}\left(X\right)=\dfrac{f\left(q^2X\right)-f\left(X\right)}{(q^2-1)X}. 
\end{equation}
For $\varphi\geq0$ we denote the $\varphi^{th}$ skew-$q$-derivative (with respect to $X$) of $f(X,Y;\lambda)$ as $f^{(\varphi)}(X,Y;\lambda)$. The $0^{th}$ skew-$q$-derivative of $f(X,Y;\lambda)$ is $f(X,Y;\lambda)$. For any real number $a,~X\neq0,$
\begin{equation}
    \left[ f(X)+ag(X)\right]^{(1)} = f^{(1)}(X)+ag^{(1)}(X).
\end{equation}

\end{defn}

\begin{lem}\label{lemma:vthbderivative}~\\
\begin{enumerate}
\item For $0\leq \varphi \leq \ell, \varphi\in\mathbb{Z}^+, \ell\geq1,$ 
\begin{equation}
    \left(X^\ell\right)^{(\varphi)} = \beta(\ell,\varphi)X^{\ell-\varphi}.
\end{equation}
\item The $\varphi^{th}$ skew-$q$-derivative of $f(X,Y;\lambda)=\displaystyle\sum_{i=0}^r f_i(\lambda) Y^{i}X^{r-i}$ is given by
\begin{equation}\label{equation:vthbderivative}
    f^{(\varphi)}\left(X,Y;\lambda\right)=\displaystyle\sum_{i=0}^{r-\varphi}f_i(\lambda) \beta(r-i,\varphi)Y^{i}X^{r-i-\varphi}.
\end{equation}
\item Also,
\begin{align}
    \mu^{[k](\varphi)}(X,Y;\lambda) & = \beta(k,\varphi)\mu^{[k-\varphi]}(X,Y;\lambda)\label{equation:muderiv}\\
    \nu^{[k](\varphi)}(X,Y;\lambda) & = \beta(k,\varphi)\nu^{[k-\varphi]}(X,Y;\lambda).\label{equation:nuderiv}
\end{align}
\end{enumerate}
\end{lem}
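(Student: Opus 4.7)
The plan is to establish the three parts in sequence, with each building on the previous: Part~(1) is a direct induction on $\varphi$ from the definition of the skew-$q$-derivative, Part~(2) follows by linearity in $X$, and Part~(3) specialises Part~(2) to the known expansions of $\mu^{[k]}$ and $\nu^{[k]}$ and reduces to a Gaussian-coefficient identity already recorded in Section~\ref{subsection:b-naryidentities}.

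For Part~(1), the base case $\varphi = 1$ is immediate:
\[
(X^\ell)^{(1)} = \frac{(q^2 X)^\ell - X^\ell}{(q^2-1)X} = \frac{q^{2\ell}-1}{q^2-1}\,X^{\ell-1} = \begin{bmatrix} \ell \\ 1 \end{bmatrix} X^{\ell-1} = \beta(\ell,1)\,X^{\ell-1}.
\]
For the inductive step, applying the skew-$q$-derivative once more to $\beta(\ell,\varphi)\,X^{\ell-\varphi}$ yields an extra factor $\begin{bmatrix}\ell-\varphi\\1\end{bmatrix}$, and the telescoping identity $\beta(\ell,\varphi+1) = \beta(\ell,\varphi)\begin{bmatrix}\ell-\varphi\\1\end{bmatrix}$ (immediate from Definition~\ref{defn:skewqbetafunction}) completes the induction. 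Part~(2) is then immediate: since $Y^i$ and $f_i(\lambda)$ are constant with respect to $X$, linearity of the skew-$q$-derivative applied term by term reduces the computation to Part~(1) on each $X^{r-i}$; the terms with $r-i < \varphi$ vanish since their $\varphi$th derivative is zero, leaving the stated range $0 \leq i \leq r-\varphi$.

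For Part~(3), substitute the explicit form of $\mu^{[k]}$ from Theorem~\ref{bformula} into Part~(2):
\[
\mu^{[k](\varphi)}(X,Y;\lambda) = \sum_{i=0}^{k-\varphi} \begin{bmatrix} k \\ i \end{bmatrix}\gamma(\lambda,i)\,\beta(k-i,\varphi)\,Y^{i}X^{k-i-\varphi}.
\]
The target $\beta(k,\varphi)\,\mu^{[k-\varphi]}(X,Y;\lambda)$ expands to $\sum_{i=0}^{k-\varphi} \beta(k,\varphi)\begin{bmatrix} k-\varphi \\ i \end{bmatrix}\gamma(\lambda,i)\,Y^{i}X^{k-\varphi-i}$, so matching coefficients reduces \eqref{equation:muderiv} to the identity
\[
\begin{bmatrix} k \\ i \end{bmatrix}\beta(k-i,\varphi) = \beta(k,\varphi)\begin{bmatrix} k-\varphi \\ i \end{bmatrix}.
\]
Applying \eqref{equation:betabstartdifferent} to both sides to factor out $\beta(\varphi,\varphi)$, this collapses to $\begin{bmatrix} k \\ i \end{bmatrix}\begin{bmatrix} k-i \\ \varphi \end{bmatrix} = \begin{bmatrix} k \\ \varphi \end{bmatrix}\begin{bmatrix} k-\varphi \\ i \end{bmatrix}$, which is precisely \eqref{equation:gaussianswapplaces}. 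Equation \eqref{equation:nuderiv} is obtained by the identical argument applied to the expansion of $\nu^{[k]}$ from Theorem~\ref{thed'slemma}, since the factors $(-1)^u q^{u(u-1)}$ are independent of $X$ and pass through unchanged.

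There is no real obstacle: the only genuine insight is recognising that the coefficient identity needed in~(3) is exactly the row/column swap for skew-$q$-nary Gaussian coefficients; everything else is routine verification together with the telescoping structure of $\beta$.
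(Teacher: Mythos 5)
Your proposal is correct, and Parts (1) and (2) proceed exactly as in the paper (base case plus the telescoping relation $\beta(\ell,\varphi+1)=\beta(\ell,\varphi)\left[\begin{smallmatrix}\ell-\varphi\\1\end{smallmatrix}\right]$ for the induction, then term-by-term linearity). Where you genuinely diverge is Part (3): the paper proves the single-derivative recursion $\mu^{[k](1)}=\beta(k,1)\mu^{[k-1]}$ (and likewise for $\nu$) directly from the definition of the derivative together with identity \eqref{equation:thing2}, and then iterates by an (omitted) induction on $\varphi$; you instead obtain the general $\varphi$ case in one step by feeding the explicit expansions of $\mu^{[k]}$ (Theorem \ref{bformula}) and $\nu^{[k]}$ (Theorem \ref{thed'slemma}) into Part (2) and matching coefficients, reducing everything to
\begin{equation*}
    \begin{bmatrix} k \\ i \end{bmatrix}\begin{bmatrix} k-i \\ \varphi \end{bmatrix} = \begin{bmatrix} k \\ \varphi \end{bmatrix}\begin{bmatrix} k-\varphi \\ i \end{bmatrix}
\end{equation*}
via \eqref{equation:betabstartdifferent}, which is exactly \eqref{equation:gaussianswapplaces}. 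Both routes are sound; your closed-form comparison avoids a second induction and makes transparent that \eqref{equation:muderiv} and \eqref{equation:nuderiv} are the same combinatorial fact (the $\gamma(\lambda,i)$ and $(-1)^iq^{i(i-1)}$ factors just ride along), while the paper's recursion-first argument has the mild advantage of not depending on the explicit formulas for $\mu^{[k]}$ and $\nu^{[k]}$ beyond the first derivative step, mirroring the style used later for the skew-$q^{-1}$-derivative where the analogous statement does shift the parameter $\lambda$.
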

\begin{proof}~\\
\begin{enumerate}
\item[(1)] For $\varphi=1$ we have
\begin{equation}
    \left(X^{\ell}\right)^{(1)} = \dfrac{\left(q^2X^{}\right)^\ell-X^{\ell}}{(q^2-1)X} = \dfrac{q^{2\ell}-1}{q^2-1}X^{\ell-1} = {\begin{bmatrix}\ell\\1\end{bmatrix}}X^{\ell-1} = \beta(\ell,\varphi)X^{\ell-1}.
\end{equation}
The rest of the proof follows by induction on $\varphi$ and is omitted. 
\item[(2)] Now consider $f(X,Y;\lambda)=\displaystyle\sum_{i=0}^r f_i (\lambda)Y^{i}X^{r-i}$. We have,
\begin{align}
    f^{(1)}\left(X,Y;\lambda\right) & = \left(\displaystyle\sum_{i=0}^r f_i (\lambda) Y^{i}X^{r-i}\right)^{(1)}\\
    & =\displaystyle\sum_{i=0}^r f_i(\lambda) Y^{i}\left(X^{r-i}\right)^{(1)}\\
    & =\displaystyle\sum_{i=0}^{r-1}f_i(\lambda) \beta(r-i,\varphi)Y^{i}X^{r-i-1}
\end{align}

The rest of the proof follows by induction on $\varphi$ and is omitted. 
\item[(3)] Now consider $\mu^{[k]}=\displaystyle\sum_{u=0}^k \mu_u(\lambda,k)Y^{u}X^{k-u}$ where $\mu_u(\lambda,k) = {\left[\begin{matrix} k \\ u\end{matrix}\right]}\gamma(\lambda,u)$ as in Theorem \ref{bformula}. Then we have
\begin{align}
    \mu^{[k](1)}(X,Y;\lambda) & = \left(\sum_{u=0}^k \mu_u(\lambda,k)Y^{u}X^{k-u}\right)^{(1)}\\
    & = \sum_{u=0}^k \mu_u(\lambda,k)Y^{u}\left(\frac{\left(q^2X\right)^{k-u}-X^{k-u}}{(q^2-1)X}\right)\\
    & = \sum_{u=0}^{k-1}\frac{q^{2(k-u)}-1}{q^2-1}{\begin{bmatrix}k\\u\end{bmatrix}}\gamma(\lambda,u)Y^{u}X^{k-u-1}\\
    \overset{\eqref{equation:thing2}}&{=}  \sum_{u=0}^{k-1}\frac{(q^{2k}-1)\left(q^{2(k-u)}-1\right)}{(q^{2(k-u)}-1)(q^2-1)}{\begin{bmatrix}k-1\\u\end{bmatrix}}\gamma(\lambda,u)Y^{u}X^{k-u-1}\\
    & = \left(\frac{q^{2k}-1}{q^2-1}\right)\mu^{[k-1]}(X,Y;\lambda)\\
    \overset{\eqref{equation:skewqbetafunction}}&{=} \beta(k,1)\mu^{[k-1]}(X,Y;\lambda).
\end{align}

So $\mu^{[k](\varphi)}(X,Y;\lambda)=\beta(k,\varphi)\mu^{[k-\varphi]}(X,Y;\lambda)$ follows by induction on $\varphi$ and is omitted. \\

Now consider $\nu^{[k]}=\displaystyle\sum_{u=0}^k (-1)^u q^{u(u-1)}{\begin{bmatrix}k\\u\end{bmatrix}}Y^{u}X^{k-u}$ as in Theorem \ref{thed'slemma}. Then we have
\begin{align}
    \nu^{[k](1)}(X,Y;\lambda) & = \sum_{u=0}^k (-1)^u q^{u(u-1)}\frac{q^{2(k-u)}-1}{q^2-1}{\begin{bmatrix}k\\u\end{bmatrix}} Y^{u}X^{k-u-1}\\
    & = \sum_{u=0}^{k-1}(-1)^u q^{u(u-1)}\frac{\left(q^{2k}-1\right)\left(q^{2(k-u)}-1\right)}{\left(q^{2(k-u)}-1\right)(q^2-1)}{\begin{bmatrix}k-1\\u\end{bmatrix}}Y^{u}X^{k-1-u}\\
    & = \frac{q^{2k}-1}{q^2-1}\nu^{[k-1]}(X,Y;\lambda)\\
    & = \beta(k,1)\nu^{[k-1]}(X,Y;\lambda).
\end{align}
So $\nu^{[k](\varphi)}(X,Y;\lambda)=\beta(k,\varphi)\nu^{[k-\varphi]}(X,Y;\lambda)$ follows by induction also and is omitted. 

\end{enumerate}
\end{proof}

We now need a few smaller lemmas in order to prove Leibniz rule for the skew-$q$-derivative. 
\begin{lem}\label{lemma:leibnizsmallqderiv}
Firstly let
\begin{align}
    u\left(X,Y;\lambda\right) & = \sum_{i=0}^r u_i(\lambda) Y^{i}X^{r-i}\\
    v\left(X,Y;\lambda\right) & = \sum_{i=0}^s v_i(\lambda) Y^{i}X^{s-i}.
\end{align}
\begin{enumerate}
\item If $u_r(\lambda)=0$ then
\begin{equation}\label{equation:urequals0}
    \frac{1}{X}\left[u\left(X,Y;\lambda\right)\ast v\left(X,Y;\lambda\right)\right] = \frac{u\left(X,Y;\lambda\right)}{X}\ast v\left(X,Y;\lambda\right).
\end{equation}
\item If $v_s(\lambda)=0$ then
\begin{equation}\label{equation:vsequals0}
    \frac{1}{X}\left[u\left(X,Y;\lambda\right)\ast v\left(X,Y;\lambda\right)\right] = u\left(X, q^2Y;\lambda\right)\ast \frac{v\left(X,Y;\lambda\right)}{X}.
\end{equation}
\end{enumerate}
\end{lem}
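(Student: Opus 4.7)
The plan is to verify both identities by direct computation from the definition of the skew-$q$-product in \eqref{qskewproduct}. The key observation throughout is that the factor $q^{2is}$ appearing in the definition depends on the degree of the \emph{second} factor, so swapping $v$ for $v/X$ (of degree $s-1$) changes $q^{2is}$ to $q^{2i(s-1)}$, while swapping $u$ for $u/X$ (of degree $r-1$) leaves that factor unchanged.

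For part (1), the hypothesis $u_r(\lambda)=0$ means $u(X,Y;\lambda)=\sum_{i=0}^{r-1}u_i(\lambda)Y^iX^{r-i}$, so $u/X$ is a well-defined homogeneous polynomial of degree $r-1$ with the same coefficients $u_0,\ldots,u_{r-1}$. First I would check that $u\ast v$ is divisible by $X$ by computing the coefficient of $Y^{r+s}$, which by \eqref{qskewproduct} is $q^{2rs}u_r(\lambda)v_s(\lambda-2r)=0$. Then I would compare the coefficient of $Y^uX^{r+s-1-u}$ on both sides: on the left it is $c_u=\sum_{i=0}^u q^{2is}u_i(\lambda)v_{u-i}(\lambda-2i)$, and on the right, applying \eqref{qskewproduct} to $(u/X)\ast v$ gives exactly the same sum (the only potentially missing term $i=r$ vanishes by hypothesis).

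For part (2), the hypothesis $v_s(\lambda)=0$ again ensures $(u\ast v)$ is divisible by $X$, since its top coefficient $q^{2rs}u_r(\lambda)v_s(\lambda-2r)=0$. The substitution $Y\mapsto q^2Y$ turns $u(X,Y;\lambda)$ into $\sum_{i=0}^r q^{2i}u_i(\lambda)Y^iX^{r-i}$, i.e. replaces the $i$-th coefficient by $q^{2i}u_i$. Taking the skew-$q$-product of this with $v/X$ (degree $s-1$) and reading off the coefficient of $Y^uX^{r+s-1-u}$ gives
\[
\sum_{i=0}^u q^{2i(s-1)}\cdot q^{2i}u_i(\lambda)\cdot v_{u-i}(\lambda-2i)=\sum_{i=0}^u q^{2is}u_i(\lambda)v_{u-i}(\lambda-2i)=c_u,
\]
since $q^{2i(s-1)}\cdot q^{2i}=q^{2is}$. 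This matches the coefficient obtained by dividing $u\ast v$ by $X$.

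The main obstacle, modest as it is, is purely bookkeeping: one has to track carefully that the coefficients of $u/X$ and $v/X$ agree with those of $u$ and $v$ in all relevant positions, that the omitted top coefficients $u_r$ and $v_s$ vanish by hypothesis precisely where they would otherwise contribute, and that the shift in the exponent of $q$ is compensated exactly by the $Y\mapsto q^2Y$ substitution in part (2). No $q$-identity beyond $q^{2i}\cdot q^{2i(s-1)}=q^{2is}$ is required.
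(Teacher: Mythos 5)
Your proposal is correct and follows essentially the same route as the paper: expand both sides via the definition of the skew-$q$-product, observe that the hypothesis kills the only possible $Y^{r+s}$ term so $u\ast v$ is divisible by $X$, and match coefficients, with the substitution $Y\mapsto q^2Y$ supplying exactly the factor $q^{2i}$ needed to turn $q^{2i(s-1)}$ into $q^{2is}$ in part (2). No gaps.
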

\begin{proof}~\\
\begin{enumerate}
    \item[(1)] If $u_r(\lambda)=0$,
\begin{equation}
    \frac{u\left(X,Y;\lambda\right)}{X} = \sum_{i=0}^{r-1}u_i(\lambda) Y^{i}X^{r-i-1}.
\end{equation}
Hence
\begin{align}
    \frac{u\left(X,Y;\lambda\right)}{X}\ast v\left(X,Y;\lambda\right) & = \sum_{k=0}^{r+s-1}\left(\sum_{\ell=0}^k q^{2\ell s} u_\ell(\lambda) v_{k-\ell}(\lambda-2\ell)\right) Y^{k}X^{r+s-1-k}\\
    & = \frac{1}{X}\sum_{k=0}^{r+s-1}\left(\sum_{\ell=0}^k q^{2\ell s} u_\ell(\lambda) v_{k-\ell}(\lambda-2\ell)\right) Y^{k}X^{r+s-k}\\
    & ~ + \frac{1}{X}\sum_{\ell=0}^{r+s} q^{2\ell s} u_\ell(\lambda) v_{r+s-\ell}(\lambda-2\ell) Y^{r+s}X^{0}\\
    & = \frac{1}{X}\left(u\left(X,Y;\lambda\right)\ast v\left(X,Y;\lambda\right)\right)
\end{align}
since $v_{r+s-\ell}(\lambda-2\ell)=0$ for $0\leq \ell \leq r-1$ and $u_{\ell}(\lambda)=0$ for $r\leq \ell \leq r+s$ so $\frac{1}{X}\sum_{\ell=0}^{r+s} q^{2\ell s} u_\ell(\lambda) v_{r+s-\ell}(\lambda-2\ell) Y^{r+s}X^{0} = 0$.
\item[(2)] Now if $v_s(\lambda)=0$,
\begin{equation}
    \frac{v\left(X,Y;\lambda\right)}{X} = \sum_{i=0}^{s-1} v_i(\lambda) Y^{i}X^{s-1-i}.
\end{equation}
Then
\begin{align}
    u\left(X,q^2Y;\lambda\right) \ast \frac{v\left(X,Y;\lambda\right)}{X} & = \sum_{k=0}^{r+s-1}\left(\sum_{\ell=0}^k q^{2\ell(s-1)} q^{2\ell}u_\ell(\lambda) v_{k-\ell}(\lambda-2\ell)\right)Y^{k}X^{r+s-1-k}\\
    & = \sum_{k=0}^{r+s-1}\left(\sum_{\ell=0}^k q^{2\ell(s-1)} q^{2\ell}u_\ell(\lambda) v_{k-\ell}(\lambda-2\ell)\right)Y^{k}X^{r+s-1-k}\\
    & + ~\frac{1}{X}\sum_{\ell=0}^{r+s} q^{2\ell s} u_\ell(\lambda) v_{r+s-\ell}(\lambda-2\ell) Y^{r+s}X^{0}\\
    & = \frac{1}{X}\left[u(X,Y;\lambda)\ast v(X,Y;\lambda)\right]
\end{align}
 since $v_{r+s-\ell}(\lambda-2\ell)=0$ for $0\leq \ell \leq r$ and $u_{\ell}(\lambda)=0$ for $r+1\leq \ell \leq r+s$.
\end{enumerate}
\end{proof}

\begin{thm}[Leibniz rule for the skew-$q$-derivative]\label{Liebnizbderiv}
For two homogeneous polynomials in $X$ and $Y$, $f(X,Y;\lambda)$ and $g(X,Y;\lambda)$ with degrees $r$ and $s$ respectively, the $\varphi^{th}$ (for $\varphi\geq0$) skew-$q$-derivative of their skew-$q$-product is given by

\begin{equation}
    \left[ f\left(X,Y;\lambda\right)\ast g\left(X,Y;\lambda\right)\right]^{(\varphi)} = \sum_{\ell=0}^\varphi {\begin{bmatrix}\varphi \\ \ell \end{bmatrix}}q^{2(\varphi-\ell)(r-\ell)}f^{(\ell)}\left(X,Y;\lambda\right)\ast g^{(\varphi-\ell)}\left(X,Y;\lambda\right).
\end{equation}
\end{thm}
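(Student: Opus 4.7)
The plan is to proceed by induction on $\varphi$, with the case $\varphi=0$ trivial and the real content split between a base case $\varphi=1$ (handled using Lemma \ref{lemma:leibnizsmallqderiv}) and the inductive step (handled using the Pascal-type identity \eqref{equation:thing5}).

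For the base case $\varphi=1$, I would first observe that the skew-$q$-product interacts cleanly with the substitution $X\mapsto q^2X$: since $f$ is homogeneous of degree $r$ and $g$ of degree $s$, a direct coefficient comparison gives $(f\ast g)(q^2X,Y;\lambda) = f(q^2X,Y;\lambda)\ast g(q^2X,Y;\lambda)$. Writing $(q^2-1)X[f\ast g]^{(1)} = (f\ast g)(q^2X,Y;\lambda)-(f\ast g)(X,Y;\lambda)$ and inserting $\pm\, f(q^2X,Y;\lambda)\ast g(X,Y;\lambda)$, bilinearity of $\ast$ yields
\begin{equation*}
f(q^2X,Y;\lambda)\ast\bigl[g(q^2X,Y;\lambda)-g(X,Y;\lambda)\bigr]+\bigl[f(q^2X,Y;\lambda)-f(X,Y;\lambda)\bigr]\ast g(X,Y;\lambda).
\end{equation*}
The second bracket has vanishing $Y^r$ coefficient, so Lemma \ref{lemma:leibnizsmallqderiv}(1) applied to this term identifies it with $(q^2-1)X\,f^{(1)}\ast g$. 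For the first summand the inner bracket has vanishing $Y^s$ coefficient, so Lemma \ref{lemma:leibnizsmallqderiv}(2) rewrites it as $(q^2-1)X\,f(q^2X,q^2Y;\lambda)\ast g^{(1)}(X,Y;\lambda)$, and the homogeneity identity $f(q^2X,q^2Y;\lambda)=q^{2r}f(X,Y;\lambda)$ then produces the factor $q^{2r}$. Dividing through by $(q^2-1)X$ yields the Leibniz rule in the form $[f\ast g]^{(1)} = q^{2r} f\ast g^{(1)} + f^{(1)}\ast g$, which is exactly the $\varphi=1$ case of the theorem.

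For the inductive step, assume the formula holds for $\varphi$ and differentiate once more, distributing the skew-$q$-derivative over the sum (which it respects by linearity). Applying the $\varphi=1$ case to each $f^{(\ell)}\ast g^{(\varphi-\ell)}$, where $f^{(\ell)}$ has degree $r-\ell$, produces two sums indexed by $\ell$ with exponents $q^{2(\varphi+1-\ell)(r-\ell)}$. Reindexing the second sum $\ell'=\ell+1$ and grouping terms, the coefficient of $f^{(\ell)}\ast g^{(\varphi+1-\ell)}$ becomes
\begin{equation*}
q^{2(\varphi+1-\ell)(r-\ell)}\left(\begin{bmatrix}\varphi\\\ell\end{bmatrix}+q^{2(\varphi+1-\ell)}\begin{bmatrix}\varphi\\\ell-1\end{bmatrix}\right),
\end{equation*}
and \eqref{equation:thing5} collapses the inner bracket to $\begin{bmatrix}\varphi+1\\\ell\end{bmatrix}$, completing the induction.

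The main obstacle is the base case, specifically identifying the right telescoping split of $(f\ast g)(q^2X,Y;\lambda)-(f\ast g)(X,Y;\lambda)$ so that the two halves of Lemma \ref{lemma:leibnizsmallqderiv} each apply cleanly; the asymmetric appearance of the factor $q^{2r}$ (attached to $f\ast g^{(1)}$ but not $f^{(1)}\ast g$) is forced by which side of Lemma \ref{lemma:leibnizsmallqderiv} introduces a $q^2Y$ substitution, and this asymmetry is what propagates through the induction as the weight $q^{2(\varphi-\ell)(r-\ell)}$. Once the base case is done correctly, the inductive step reduces to the Pascal identity \eqref{equation:thing5} and is routine.
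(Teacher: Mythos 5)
Your proposal is correct and follows essentially the same route as the paper: the same telescoping insertion of $\pm\, f(q^2X,Y;\lambda)\ast g(X,Y;\lambda)$, the same appeal to the two parts of Lemma \ref{lemma:leibnizsmallqderiv} together with homogeneity to obtain $[f\ast g]^{(1)}=q^{2r}f\ast g^{(1)}+f^{(1)}\ast g$, and the same induction collapsing the grouped coefficient via \eqref{equation:thing5}. The only difference is cosmetic: you make explicit the compatibility $(f\ast g)(q^2X,Y;\lambda)=f(q^2X,Y;\lambda)\ast g(q^2X,Y;\lambda)$, which the paper uses implicitly.
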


\begin{proof}
Firstly let,
\begin{align}
    f\left(X,Y;\lambda\right) & = \sum_{i=0}^r f_i(\lambda) Y^{i}X^{r-i}\\
    u\left(X,Y;\lambda\right) & = \sum_{i=0}^r u_i(\lambda) Y^{i}X^{r-i}\\
    g\left(X,Y;\lambda\right) & = \sum_{i=0}^s g_i(\lambda) Y^{i}X^{s-i}\\
    v\left(X,Y;\lambda\right) & = \sum_{i=0}^s v_i(\lambda) Y^{i}X^{s-i}.
\end{align}

For simplification, we shall write $f(X,Y;\lambda)$ as $f(X,Y)$ and similarly for the $g(X,Y;\lambda)$. Now by differentiation we have
\begin{align}
    \left[ f\left(X,Y\right)\ast g\left(X,Y\right)\right]^{(1)} & = \frac{f\left(q^2X,Y\right)\ast g\left(q^2X,Y\right)-f\left(X,Y\right)\ast g\left(X,Y\right)}{(q^2-1)X}\\
    & = \frac{1}{(q^2-1)X} \bigg\{ f\left(q^2X,Y\right)\ast g\left(q^2X,Y\right)-f\left(q^2X,Y\right)\ast g\left(X,Y\right)\\
    & ~ + f\left(q^2X,Y\right)\ast g\left(X,Y\right) - f\left(X,Y\right)\ast g\left(X,Y\right)\bigg\}\\
    & = \frac{1}{(q^2-1)X}\left\{ f\left(q^2X,Y\right)\ast\left(g\left(q^2X,Y\right)-g\left(X,Y\right)\right)\right\}\\
    & ~ +\frac{1}{(q^2-1)X}\bigg\{\left(f\left(q^2X,Y\right)-f\left(X,Y\right)\right)\ast g\left(X,Y\right)\bigg\}\\
    \overset{\eqref{equation:vsequals0}}&{=} f\left(q^2X,q^2Y\right) \ast \left\{\frac{g\left(q^2X,Y\right)-g\left(X,Y\right)}{(q^2-1)X}\right\} \\
    \overset{\eqref{equation:urequals0}}&{~+}\left\{\frac{f\left(q^2X,Y\right)-f\left(X,Y\right)}{(q^2-1)X}\right\}\ast g\left(X,Y\right) \\
    & = q^{2r} f\left(X,Y\right)\ast g^{(1)}\left(X,Y\right) + f^{(1)}\left(X,Y\right)\ast g\left(X,Y\right).
\end{align}

So the initial case holds. Assume the statement holds true for $\varphi=\overline{\varphi}$, i.e.

\begin{equation}
\left[ f\left(X,Y\right)\ast g\left(X,Y\right)\right]^{(\overline{\varphi})} = \sum_{\ell=0}^{\overline{\varphi}} {\begin{bmatrix}\overline{\varphi} \\ \ell \end{bmatrix}}q^{2(\overline{\varphi}-\ell)(r-\ell)}f^{(\ell)}\left(X,Y\right)\ast g^{(\overline{\varphi}-\ell)}\left(X,Y\right).
\end{equation}

Now considering $\overline{\varphi}+1$ and for simplicity we write $f(X,Y;\lambda),~g(X,Y;\lambda)$ as $f,g$ we have
\pagebreak
\begin{align}
    \left[f\ast g\right]^{(\overline{\varphi}+1)} & = \left[ \sum_{\ell=0}^{\overline{\varphi}}{\begin{bmatrix}\overline{\varphi}\\\ell\end{bmatrix}}q^{2(\overline{\varphi}-\ell)(r-\ell)}f^{(\ell)}\ast g^{(\overline{\varphi}-\ell)}\right]^{(1)}\\
    & = \sum_{\ell=0}^{\overline{\varphi}}{\begin{bmatrix}\overline{\varphi}\\\ell\end{bmatrix}}q^{2(\overline{\varphi}-\ell)(r-\ell)}\left[f^{(\ell)}\ast g^{(\overline{\varphi}-\ell)}\right]^{(1)}\\
    & = \sum_{\ell=0}^{\overline{\varphi}}{\begin{bmatrix}\overline{\varphi}\\\ell\end{bmatrix}}q^{2(\overline{\varphi}-\ell)(r-\ell)}\left( q^{2(r-\ell)}f^{(\ell)}\ast g^{(\overline{\varphi}-\ell+1)}+f^{(\ell+1)}\ast g^{(\overline{\varphi}-\ell)}\right)\\
    & = \sum_{\ell=0}^{\overline{\varphi}}{\begin{bmatrix}\overline{\varphi}\\\ell\end{bmatrix}}q^{2(\overline{\varphi}-\ell+1)(r-\ell)}f^{(\ell)}\ast g^{(\overline{\varphi}-\ell+1)}+ \sum_{\ell=1}^{\overline{\varphi}+1}{\begin{bmatrix}\overline{\varphi}\\\ell-1\end{bmatrix}}q^{2(\overline{\varphi}-\ell+1)(r-\ell+1)}f^{(\ell)}\ast g^{(\overline{\varphi}-\ell+1)}\\
    & = {\begin{bmatrix}\overline{\varphi}\\0\end{bmatrix}}q^{2(\overline{\varphi}+1)r}f\ast g^{(\overline{\varphi}+1)}+ \sum_{\ell=1}^{\overline{\varphi}}{\begin{bmatrix}\overline{\varphi}\\\ell\end{bmatrix}}q^{2(\overline{\varphi}+1-\ell)(r-\ell)}f^{(\ell)}\ast g^{(\overline{\varphi}-\ell+1)}\\
    & ~ + {\begin{bmatrix}\overline{\varphi}\\\overline{\varphi}\end{bmatrix}}q^{2(\overline{\varphi}+1-\overline{\varphi}-1)(r-\overline{\varphi}-1+1)}f^{(\overline{\varphi}+1)}\ast g + \sum_{\ell=1}^{\overline{\varphi}}{\begin{bmatrix}\overline{\varphi}\\\ell-1\end{bmatrix}}q^{2(\overline{\varphi}+1-\ell)(r-\ell+1)}f^{(\ell)}\ast g^{(\overline{\varphi}-\ell+1)}\\
    & = q^{2(\overline{\varphi}+1)r}f\ast g^{(\overline{\varphi}+1)} + f^{(\overline{\varphi}+1)}\ast g+ \sum_{\ell=1}^{\overline{\varphi}} \left( {\begin{bmatrix}\overline{\varphi}\\\ell\end{bmatrix}} + q^{2(\overline{\varphi}-\ell+1)}{\begin{bmatrix}\overline{\varphi}\\\ell-1\end{bmatrix}}\right) q^{2(\overline{\varphi}-\ell+1)(r-\ell)}f^{(\ell)}\ast g^{(\overline{\varphi}-\ell+1)}\\
    \overset{\eqref{equation:thing5}}&{=} \sum_{\ell=1}^{\overline{\varphi}}{\begin{bmatrix}\overline{\varphi}+1\\\ell\end{bmatrix}}q^{2(\overline{\varphi}+1-\ell)(r-\ell)}f^{(\ell)}\ast g^{(\overline{\varphi}+1-\ell)}+ {\begin{bmatrix}\overline{\varphi}+1\\0\end{bmatrix}}q^{2(\overline{\varphi}+1)r}f\ast g^{(\overline{\varphi}+1)}\\
    & ~ +{\begin{bmatrix}\overline{\varphi}+1\\\overline{\varphi}+1\end{bmatrix}} q^{2(\overline{\varphi}+1-\overline{\varphi}-1)}f^{(\overline{\varphi}+1)}\ast g\\
    & = \sum_{\ell=0}^{\overline{\varphi}+1}{\begin{bmatrix}\overline{\varphi}+1\\\ell\end{bmatrix}}q^{2(\overline{\varphi}+1-\ell)(r-\ell)}f^{(\ell)}\ast g^{(\overline{\varphi}+1-\ell)}.
\end{align}
\end{proof}

\subsection{The Skew-$q^{-1}$-Derivative}

\begin{defn}
For $q\geq 2$, the \textbf{\textit{skew-$\boldsymbol{q^{-1}}$-derivative}} at $Y\neq 0$ for a real-valued function $g(Y)$ is defined as
\begin{equation}
    g^{\{1\}}\left(Y\right)=\dfrac{g\left(q^{-2}Y\right)-g\left(Y\right)}{(q^{-2}-1)Y}. 
\end{equation}
For $\varphi\geq0$ we denote the $\varphi^{th}$ skew-$q^{-1}$-derivative (with respect to $Y$) of $g(X,Y;\lambda)$ as $g^{\{\varphi\}}(X,Y;\lambda)$. The $0^{th}$ skew-$q^{-1}$-derivative of $g(X,Y;\lambda)$ is $g(X,Y;\lambda)$. For any real number $a,~Y\neq0,$
\begin{equation}
    \left[ f(Y)+ag(Y)\right]^{\{1\}} = f^{\{1\}}(Y)+ag^{\{1\}}(Y).
\end{equation}

\end{defn}

\begin{lem}\label{lemma:bminusderivlemma}~\\
\begin{enumerate}
\item For $0\leq \varphi \leq \ell,$ 
\begin{equation}
    \left(Y^\ell\right)^{\{\varphi\}} = q^{2(\varphi(1-\ell)+\sigma_{\varphi})}\beta(\ell,\varphi)Y^{\ell-\varphi}.
\end{equation}
\item The $\varphi^{th}$ skew-$q^{-1}$-derivative of $g(X,Y;\lambda)=\displaystyle\sum_{i=0}^s g_i(\lambda) Y^{i}X^{s-i}$ is given by
\begin{equation}\label{equation:generalbdervispoly}
    g^{\{\varphi\}}\left(X,Y;\lambda\right)=\displaystyle\sum_{i=\varphi}^{s}g_i(\lambda) q^{2(\varphi(1-i)+\sigma_{\varphi})} \beta(i,\varphi)Y^{i-\varphi}X^{s-i}.
\end{equation}
\item Also,
\begin{align}
    \mu^{[k]\{\varphi\}}(X,Y;\lambda) & = q^{-2\sigma_{\varphi}}\beta(k,\varphi)\gamma(\lambda,\varphi)\mu^{[k-\varphi]}(X,Y;\lambda-2\varphi)\\
    \nu^{[k]\{\varphi\}}(X,Y;\lambda) & = (-1)^{\varphi}\beta(k,\varphi)\nu^{[k-\varphi]}(X,Y;\lambda).
\end{align}
 
\end{enumerate}
\end{lem}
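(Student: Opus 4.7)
The plan is to follow the same three-part structure as in the proof of Lemma~\ref{lemma:vthbderivative}, but working with $q^{-2}$ in place of $q^2$, and I would carry out the arguments in the order (1), (2), (3).

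For part (1) I would first verify the base case $\varphi = 1$ by direct computation: applying the definition yields $(Y^\ell)^{\{1\}} = \tfrac{q^{-2\ell}-1}{(q^{-2}-1)} Y^{\ell-1}$, and factoring $q^{-2\ell}$ in the numerator and $q^{-2}$ in the denominator gives $q^{2(1-\ell)}\tfrac{q^{2\ell}-1}{q^2-1} Y^{\ell-1} = q^{2(1-\ell)} \beta(\ell,1) Y^{\ell-1}$, which matches the claim since $\sigma_1 = 0$. The general formula then follows by induction on $\varphi$, with the inductive step applying the base case to $Y^{\ell-\varphi}$ and absorbing the new factor into $\beta(\ell,\varphi+1)$ using $\beta(\ell,\varphi+1) = \beta(\ell,\varphi)\begin{bmatrix}\ell-\varphi\\1\end{bmatrix}$; the additive $\sigma_\varphi$ in the exponent grows by $\varphi$ at each step, reproducing $\sigma_{\varphi+1}$.

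Part (2) is immediate from part (1) combined with the linearity of the skew-$q^{-1}$-derivative, applied termwise to $g(X,Y;\lambda)$, exactly as in Lemma~\ref{lemma:vthbderivative}(2).

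For part (3) I would substitute the closed forms of $\mu^{[k]}$ and $\nu^{[k]}$ from Theorems~\ref{bformula} and~\ref{thed'slemma} into \eqref{equation:generalbdervispoly}, then reindex by $v = u - \varphi$. For $\mu^{[k]}$ the coefficient matching then comes down to four observations: (i) the splitting $\gamma(\lambda, v+\varphi) = q^{2v\varphi}\gamma(\lambda,\varphi)\gamma(\lambda-2\varphi,v)$, which follows directly from the product definition of $\gamma$; (ii) rewriting $\beta(v+\varphi,\varphi)$ and $\beta(k,\varphi)$ via \eqref{equation:betabstartdifferent} as $\begin{bmatrix}v+\varphi\\\varphi\end{bmatrix}\beta(\varphi,\varphi)$ and $\begin{bmatrix}k\\\varphi\end{bmatrix}\beta(\varphi,\varphi)$; (iii) the Gaussian-coefficient identity $\begin{bmatrix}k\\v+\varphi\end{bmatrix}\begin{bmatrix}v+\varphi\\\varphi\end{bmatrix} = \begin{bmatrix}k\\\varphi\end{bmatrix}\begin{bmatrix}k-\varphi\\v\end{bmatrix}$, derivable from \eqref{equation:gaussianswapplaces}; and (iv) the numerical identity $2\varphi - 2\varphi^2 + 2\sigma_\varphi = -2\sigma_\varphi$, which collapses the accumulated $q$-powers to the single factor $q^{-2\sigma_\varphi}$ appearing in the target. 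The proof for $\nu^{[k]}$ runs along identical lines, with the sign $(-1)^\varphi$ arising from the reindexing of $(-1)^{v+\varphi}$ and a routine verification that $q^{(v+\varphi)(v+\varphi-1)} \cdot q^{2(\varphi(1-v-\varphi)+\sigma_\varphi)} = q^{v(v-1)}$, after which the same Gaussian-coefficient identity finishes the calculation.

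The main obstacle will be the bookkeeping in part (3): the exponent produced by the skew-$q^{-1}$-derivative, the cross-term $q^{2v\varphi}$ from splitting $\gamma$, and the various $q^{\sigma_\bullet}$ contributions must all cancel cleanly, and it is here that the identity $2\varphi - 2\varphi^2 + 2\sigma_\varphi = -2\sigma_\varphi$ is crucial. By contrast, once the splitting of $\gamma$ and the Gaussian rearrangement are in place, the symbolic side of the argument is entirely mechanical and parallels Lemma~\ref{lemma:vthbderivative}(3) closely.
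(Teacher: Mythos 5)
Your proposal is correct, and parts (1) and (2) follow the paper's own route (base case plus induction on $\varphi$, then termwise application using linearity). Where you genuinely diverge is part (3): the paper proves the formulas for $\mu^{[k]\{\varphi\}}$ and $\nu^{[k]\{\varphi\}}$ by induction on $\varphi$, computing the first skew-$q^{-1}$-derivative explicitly with the identities \eqref{equation:thing4} and \eqref{equation:gamma2step} and then iterating, whereas you bypass induction entirely by substituting the closed forms from Theorems \ref{bformula} and \ref{thed'slemma} into \eqref{equation:generalbdervispoly}, reindexing by $v=u-\varphi$, and matching coefficients. Your key ingredients all check out: the splitting $\gamma(\lambda,v+\varphi)=q^{2v\varphi}\gamma(\lambda,\varphi)\gamma(\lambda-2\varphi,v)$ is immediate from the product definition of $\gamma$; the relation $\begin{bmatrix}k\\v+\varphi\end{bmatrix}\begin{bmatrix}v+\varphi\\\varphi\end{bmatrix}=\begin{bmatrix}k\\\varphi\end{bmatrix}\begin{bmatrix}k-\varphi\\v\end{bmatrix}$ does follow from \eqref{equation:gaussianswapplaces} together with the symmetry of the coefficients; combined with \eqref{equation:betabstartdifferent} this turns $\begin{bmatrix}k\\v+\varphi\end{bmatrix}\beta(v+\varphi,\varphi)$ into $\beta(k,\varphi)\begin{bmatrix}k-\varphi\\v\end{bmatrix}$; and the exponent identities $2\varphi-2\varphi^2+2\sigma_\varphi=-2\sigma_\varphi$ and $q^{(v+\varphi)(v+\varphi-1)}q^{2(\varphi(1-v-\varphi)+\sigma_\varphi)}=q^{v(v-1)}$ are both true, so the $q$-powers collapse exactly to $q^{-2\sigma_\varphi}$ in the $\mu$ case and to $q^{v(v-1)}$ (with sign $(-1)^{v+\varphi}$) in the $\nu$ case. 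What each approach buys: your direct computation is shorter, avoids a second layer of induction, and makes the exponent bookkeeping visible in one place, at the cost of needing the $\gamma$-splitting and the Gaussian ``chain rule'', neither of which is on the paper's list of stock identities (though both are one-line derivations); the paper's induction stays entirely within the identities of Section \ref{subsection:b-naryidentities} and mirrors the structure of Lemma \ref{lemma:vthbderivative}, but is longer and hides the cancellation inside the inductive step. One small point to make explicit in part (2): for $i<\varphi$ the terms vanish (the $\varphi$-fold derivative of $Y^i$ is zero, equivalently $\beta(i,\varphi)=0$), which is what justifies starting the sum at $i=\varphi$.
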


\begin{proof}~\\
\begin{enumerate}
\item[(1)] For $\varphi=1$ we have
\begin{align}
    \left(Y^{\ell}\right)^{\{1\}} = \dfrac{\left(q^{-2}Y\right)^{\ell}-Y^{\ell}}{(q^{-2}-1)Y} & = \left(\dfrac{q^{-2\ell}-1}{q^{-2}-1}\right)Y^{\ell-1} \\
    & = \frac{q^2q^{-2\ell}\left(1-q^{2\ell}\right)}{1-q^2}Y^{\ell-1}\\
    & = q^{-2\ell+2}\beta(\ell,1)Y^{\ell-1}.
\end{align}
So the initial case holds. Assume the case for $\varphi =\overline{\varphi}$ holds. Then we have
\begin{align}
    \left(Y^{\ell}\right)^{\{\overline{\varphi}+1\}} & = \left(q^{2(\overline{\varphi}(1-\ell)+\sigma_{\overline{\varphi}})}\beta(\ell,\overline{\varphi})Y^{\ell-\overline{\varphi}}\right)^{\{1\}}\\
    & = q^{2(\overline{\varphi}(1-\ell)+\sigma_{\overline{\varphi}})}\beta(\ell,\overline{\varphi})\frac{q^{-2(\ell-\overline{\varphi})}Y^{\ell-\overline{\varphi}}-Y^{\ell-\overline{\varphi}}}{\left(q^{-2}-1\right)Y}\\
    & = q^{2(\overline{\varphi}(1-\ell)+\sigma_{\overline{\varphi}})}\left(\frac{q^{-2(\ell-\overline{\varphi})}-1}{q^{-2}-1}\right)\beta(\ell,\overline{\varphi})Y^{\ell-\overline{\varphi}-1}\\
    \overset{\eqref{equation:skewqbetafunction}}&{=} q^{2\overline{\varphi}(1-\ell)}q^{\overline{\varphi}(\overline{\varphi}-1)}q^{-2(\ell-\overline{\varphi})}q^2\frac{q^{2(\ell-\overline{\varphi})}-1}{q^2-1}\prod_{i=0}^{\overline{\varphi}-1}{\begin{bmatrix}\ell-i\\1\end{bmatrix}} Y^{\ell-\overline{\varphi}-1}\\
    & = q^{2((\overline{\varphi}+1)(1-\ell)+\sigma_{\overline{\varphi}+1})}\beta(\ell,\overline{\varphi}+1)Y^{\ell-\overline{\varphi}+1}.
\end{align}
Thus the statement holds by induction. 
\item[(2)] Now consider $g(X,Y;\lambda)=\displaystyle\sum_{i=0}^s g_i (\lambda) Y^{i}X^{s-i}$. For $\varphi=1$ we have

\begin{equation}
    g^{\{1\}}\left(X,Y;\lambda\right) = \left(\sum_{i=0}^s g_i(\lambda) Y^{i}X^{s-i}\right)^{\{1\}} = \sum_{i=0}^s g_i(\lambda) \left(Y^{i}\right)^{\{1\}}X^{s-i} = \sum_{i=0}^s g_i(\lambda) q^{2(-i+1)}\beta(i,1)Y^{i-1}X^{s-i}.
\end{equation}
As $\beta(i,1)=0$ when $i=0$ we have
\begin{equation}
    g^{\{1\}}\left(X,Y;\lambda\right) = \sum_{i=1}^s g_i(\lambda) q^{2((1-i)+\sigma_{1})}\beta(i,1)Y^{i-1}X^{s-i}.
\end{equation}
So the initial case holds. Now assume the case holds for $\varphi=\overline{\varphi}$ i.e. \\
$g^{\{\overline{\varphi}\}}\left(X,Y;\lambda\right)=\displaystyle\sum_{i=\overline{\varphi}}^s g_i(\lambda) q^{2\overline{\varphi}(1-i)+2\sigma_{\overline{\varphi}}}\beta(i,\overline{\varphi})Y^{(i-\overline{\varphi})}X^{s-i}$. Then we have
\begin{align}
    g^{\{\overline{\varphi}+1\}}\left(X,Y;\lambda\right) 
    & = \left(\sum_{i=\overline{\varphi}}^s g_i(\lambda) q^{2(\overline{\varphi}(1-i)+\sigma_{\overline{\varphi}})}\beta(i,\overline{\varphi})Y^{i-\overline{\varphi}}\right)^{\{1\}}X^{s-i}\\
    & = \sum_{i=\overline{\varphi}}^s g_i(\lambda) q^{2(\overline{\varphi}(1-i)+\sigma_{\overline{\varphi}})}\beta(i,\overline{\varphi})q^{-2(i-\overline{\varphi}-1)}\beta(i-\overline{\varphi},1)Y^{i-\overline{\varphi}-1}X^{s-i}\\
     \overset{\eqref{equation:skewqbetafunction}}&{=} \sum_{i=\overline{\varphi}}^s g_i(\lambda) q^{2(\overline{\varphi}+1)(1-i)+2\sigma_{\overline{\varphi}}}\prod_{j=0}^{\overline{\varphi}-1}\frac{\left(q^{2(i-j)}-1\right)\left(q^{2(i-\overline{\varphi})}-1\right)}{(q^2-1)(q^2-1)}Y^{i-\overline{\varphi}-1}X^{s-i}\\
    & = \sum_{i=\overline{\varphi}}^s g_i(\lambda) q^{2(\overline{\varphi}+1)(1-i)+2\sigma_{\overline{\varphi}}}\beta(i,\overline{\varphi}+1)Y^{i-\overline{\varphi}-1}X^{s-i}\\
    & = \sum_{i=\overline{\varphi}+1}^s g_i(\lambda) q^{2(\overline{\varphi}+1)(1-i)+2\sigma_{\overline{\varphi}}}\beta(i,\overline{\varphi}+1)Y^{i-\overline{\varphi}-1}X^{s-i}
\end{align}
since when $i=\overline{\varphi}$, $\beta(\overline{\varphi},\overline{\varphi}+1)=0$. So by induction Equation \eqref{equation:generalbdervispoly} holds.
\item[(3)] Now consider $\mu^{[k]}=\displaystyle\sum_{u=0}^k \mu_u(\lambda,k)Y^{u}X^{k-u}$ where $\mu_u(\lambda,k) = {\left[\begin{matrix} k \\ u\end{matrix}\right]}\gamma(\lambda,u)$ as in Theorem \ref{bformula}. Then we have

\begin{align}
    \mu^{[k]\{1\}}(X,Y;\lambda) & = \left(\sum_{u=0}^k \mu_u(\lambda,k)Y^{u}X^{k-u}\right)^{\{1\}}\\
    & = \sum_{u=0}^k \mu_u(\lambda,k)q^{2(1-u)}\beta(u,1)Y^{u-1}X^{k-u}\\
    & = \sum_{r=0}^{k-1} \mu_{r+1}(\lambda,k)q^{2(1-(r+1))}\beta(r+1,1)Y^{r+1-1}X^{k-r-1}\\ 
    \overset{\eqref{equation:thing4}\eqref{equation:gamma2step}}&{=} \sum_{r=0}^{k-1} {\begin{bmatrix}k\\r+1\end{bmatrix}}\gamma(\lambda,r+1)q^{-2r}\beta(r+1,1)Y^{r}X^{k-1-r}\\
    & = \sum_{r=0}^{k-1} \begin{bmatrix}k-1\\r\end{bmatrix} \frac{q^{2k}-1}{q^{2(r+1)}-1}\left(q^\lambda-1\right)q^{2r}q^{-2r}\gamma(\lambda-2,r)\beta(r+1,1)Y^{r}X^{k-1-r}\\
    & = q^{-2\sigma_1}\beta(k,1)\gamma(\lambda,1)\mu^{[k-1]}(X,Y;\lambda-2).
\end{align}

Now assume that the statement holds for $\varphi=\overline{\varphi}$. Then we have
\begin{align}
    \mu^{[k]\{\overline{\varphi}+1\}}(X,Y;\lambda) & = \bigg[ q^{-2\sigma_{\overline{\varphi}}}\beta(k,\overline{\varphi})\gamma(\lambda,\overline{\varphi})\mu^{[k-\overline{\varphi}]}(X,Y;\lambda-2\overline{\varphi})\bigg]^{\{1\}}\\
    & = q^{-2\sigma_{\overline{\varphi}}}\beta(k,\overline{\varphi})\gamma(\lambda,\overline{\varphi})\big[\mu^{[k-\overline{\varphi}]}(X,Y;\lambda-2\overline{\varphi})\big]^{\{1\}}\\
    & = q^{-2\sigma_{\overline{\varphi}}}\beta(k,\overline{\varphi})\gamma(\lambda,\overline{\varphi})\left(\sum_{r=0}^{k-\overline{\varphi}}{\begin{bmatrix}k-\overline{\varphi}\\r\end{bmatrix}}\gamma(\lambda-2\overline{\varphi},r)Y^{r}X^{k-\overline{\varphi}-r}\right)^{\{1\}}\\
    & = q^{-2\sigma_{\overline{\varphi}}}\beta(k,\overline{\varphi})\gamma(\lambda,\overline{\varphi})\sum_{r=1}^{k-\overline{\varphi}}{\begin{bmatrix}k-\overline{\varphi}\\r\end{bmatrix}}\gamma(\lambda-2\overline{\varphi},r)\left(Y^{r}\right)^{\{1\}}X^{k-\overline{\varphi}-r}\\
    & = q^{-2\sigma_{\overline{\varphi}}}\beta(k,\overline{\varphi})\gamma(\lambda,\overline{\varphi})\sum_{u=0}^{k-\overline{\varphi}-1}{\begin{bmatrix}k-\overline{\varphi}\\u+1\end{bmatrix}}\gamma(\lambda-2\overline{\varphi},u+1)q^{2(1-(u+1))}\beta(u+1,1)Y^{u+1-1}X^{k-\overline{\varphi}-u-1}\\
    \overset{\eqref{equation:thing4}\eqref{equation:gamma2step}}&{=} q^{-2\sigma_{\overline{\varphi}}}\beta(k,\overline{\varphi})\gamma(\lambda,\overline{\varphi})\sum_{u=0}^{k-(\overline{\varphi}+1)}{\begin{bmatrix}k-\overline{\varphi}-1\\u\end{bmatrix}}\frac{\left(q^{2(k-\overline{\varphi})}-1\right)\left(q^{2(u+1)}-1\right)}{\left(q^{2(u+1)}-1\right)(q^2-1)}q^{2u}q^{-2u}\\
    & ~ \times \left(q^{\lambda-2\overline{\varphi}}-1\right)\gamma(\lambda-2(\overline{\varphi}+1),u)Y^{u}X^{(k-(\overline{\varphi}+1)-u)}\\
    & = q^{-2\sigma_{\overline{\varphi}}}q^{-2\overline{\varphi}}\gamma(\lambda,\overline{\varphi}+1)\beta(k,\overline{\varphi}+1)\mu^{[k-(\overline{\varphi}+1)]}(X,Y;\lambda-2(\overline{\varphi}+1))\\
    & = q^{-2\sigma_{(\overline{\varphi}+1)}}\gamma(\lambda,\overline{\varphi}+1)\beta(k,\overline{\varphi}+1)\mu^{[k-(\overline{\varphi}+1)]}(X,Y;\lambda-2(\overline{\varphi}+1)).
\end{align}
As required.
Now consider $\nu^{[k]}=\displaystyle\sum_{u=0}^k (-1)^u q^{u(u-1)}{\begin{bmatrix}k\\u\end{bmatrix}}Y^{u}X^{k-u}$ as defined in Theorem \ref{thed'slemma}. Then we have
\begin{align}
    \nu^{[k]\{1\}}(X,Y;\lambda) & = \left(\sum_{u=0}^k (-1)^u q^{u(u-1)}{\begin{bmatrix}k\\u\end{bmatrix}}Y^{u}X^{k-u}\right)^{\{1\}}\\
    & = \sum_{u=1}^k (-1)^u q^{u(u-1)}{\begin{bmatrix}k\\u\end{bmatrix}}\left(Y^{u}\right)^{\{1\}}X^{k-u}\\
    & = \sum_{r=0}^{k-1} (-1)^{(r+1)} q^{r(r+1)}q^{2(1-(r+1))}{\begin{bmatrix}k\\r+1\end{bmatrix}}\beta(r+1,1)Y^{r+1-1}X^{k-r-1}\\
    \overset{\eqref{equation:thing4}\eqref{equation:gamma2step}}&{=} -\sum_{r=0}^{k-1} (-1)^{r} q^{r(r-1)}q^{2r}q^{-2r}{\begin{bmatrix}k-1\\r\end{bmatrix}}\frac{\left(q^{2k}-1\right)\left(q^{2(r+1)}-1\right)}{\left(q^{2(r+1)}-1\right)\left(q^2-1\right)}\beta(r,1)Y^{r}X^{k-r-1}\\
    & = (-1)^{1}\beta(k,1)\nu^{[k-1]}(X,Y;\lambda).
\end{align}
Now assume that the statement holds for $\varphi=\overline{\varphi}$. Then we have
\begin{align}
    \nu^{[k]}(X,Y;\lambda)^{\{\overline{\varphi}+1\}} & = \left[(-1)^{\overline{\varphi}}\beta(k,\overline{\varphi})\nu^{[k-\overline{\varphi}]}(X,Y;\lambda)\right]^{\{1\}}\\
    & = (-1)^{\overline{\varphi}}\beta(k,\overline{\varphi})\sum_{u=1}^{k-\overline{\varphi}} (-1)^u q^{u(u-1)}{\begin{bmatrix}k-\overline{\varphi}\\u\end{bmatrix}}\left(Y^{u}\right)^{\{1\}}X^{k-\overline{\varphi}-u}\\
    & = (-1)^{\overline{\varphi}}\beta(k,\overline{\varphi})\sum_{r=0}^{k-\overline{\varphi}-1} (-1)^{r+1} q^{r(r+1)}q^{-2(r+1)+2}{\begin{bmatrix}k-\overline{\varphi}\\r+1\end{bmatrix}}\beta(r+1,1)Y^{r+1-1}X^{k-\overline{\varphi}-r-1}\\
    & = (-1)^{\overline{\varphi}+1}\beta(k,\overline{\varphi})\sum_{r=0}^{k-\overline{\varphi}-1} (-1)^{r}q^{r(r-1)}{\begin{bmatrix}k-(\overline{\varphi}+1)\\r\end{bmatrix}}\\
    & ~ \times \frac{\left(q^{2(k-\overline{\varphi})}-1\right)\left(q^{2(r+1)}-1\right)}{\left(q^{2(r+1)}-1\right)\left(q^2-1\right)}Y^{r}X^{k-\overline{\varphi}-1-r}\\
    & = (-1)^{\overline{\varphi}+1}\beta(k,\overline{\varphi}+1)\nu^{[k-(\overline{\varphi}+1)]}(X,Y;\lambda).
\end{align}
as required.
\end{enumerate}
\end{proof}

Now we need a few smaller lemmas in order to prove Leibniz rule for the skew-$q^{-1}$-derivative.
\begin{lem}\label{lemma:skewqminussimplification} Firstly let
\begin{align}
    u\left(X,Y;\lambda\right) & = \sum_{i=0}^r u_i(\lambda) Y^{i}X^{r-i}\\
     v\left(X,Y;\lambda\right) & = \sum_{i=0}^s v_i(\lambda) Y^{i}X^{s-i}.
\end{align}
\begin{enumerate}
\item If $u_0=0$ then
\begin{equation}
    \frac{1}{Y}\left[u\left(X,Y;\lambda\right)\ast v\left(X,Y;\lambda\right)\right] = q^{2s}\frac{u\left(X,Y;\lambda\right)}{Y}\ast v\left(X,Y;\lambda-2\right).
\end{equation}
\item If $v_0=0$ then
\begin{equation}
    \frac{1}{Y}\left[u\left(X,Y;\lambda\right)\ast v\left(X,Y;\lambda\right)\right] = u\left(X, q^2Y;\lambda\right)\ast \frac{v\left(X,Y;\lambda\right)}{Y}.
\end{equation}
\end{enumerate}
\end{lem}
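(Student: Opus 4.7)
The plan is to mirror the strategy of Lemma \ref{lemma:leibnizsmallqderiv}: expand each side using the definition of the skew-$q$-product, and match coefficients of $Y^j X^{r+s-1-j}$, exploiting the hypotheses $u_0=0$ or $v_0=0$ to eliminate a boundary term and permit a clean reindexing.

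For part (1), I would write $u \ast v = \sum_k c_k(\lambda) Y^k X^{r+s-k}$ with
\begin{equation}
c_k(\lambda) = \sum_{\ell=0}^k q^{2\ell s}\, u_\ell(\lambda)\, v_{k-\ell}(\lambda-2\ell).
\end{equation}
Dividing by $Y$ shifts the exponent and drops $c_0 = u_0(\lambda) v_0(\lambda) = 0$ (by hypothesis on $u_0$), leaving the coefficient of $Y^j X^{r+s-1-j}$ equal to $c_{j+1}(\lambda)$. Since $u_0 = 0$, the $\ell=0$ summand in $c_{j+1}$ also vanishes, and reindexing $\ell \mapsto \ell+1$ yields $q^{2s}\sum_{\ell=0}^{j} q^{2\ell s} u_{\ell+1}(\lambda) v_{j-\ell}(\lambda-2-2\ell)$. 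On the right-hand side, $u/Y$ has coefficients $u_{\ell+1}(\lambda)$ and degree $r-1$, while $v(X,Y;\lambda-2)$ has coefficients $v_i(\lambda-2)$; applying the skew-$q$-product definition with the new degree $s$ of the second factor produces the same expression. The leading factor $q^{2s}$ is exactly what is needed to reconcile the index shift.

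For part (2), the same scheme works with the roles swapped. The key is that $v_0 = 0$ now kills the $\ell=k$ term in $c_k$, so no $\ell=j+1$ contribution appears in $c_{j+1}$ and the sum runs from $\ell=0$ to $j$. The right-hand side $u(X,q^2Y;\lambda) \ast (v/Y)$ replaces the coefficient $u_i(\lambda)$ by $q^{2i} u_i(\lambda)$ and uses $v/Y$ of degree $s-1$; the product definition then contributes a factor $q^{2\ell(s-1)}$, and combined with $q^{2\ell}$ from the substitution $Y \mapsto q^2 Y$ these yield $q^{2\ell s}$, matching the left-hand side exactly. The only real care needed is bookkeeping of the two $\lambda$ arguments and the degree parameter $s$ that occurs in the product, but no new identities for skew-$q$-Gaussian coefficients are required — this is purely a reindexing argument, and I expect the main (mild) obstacle to be writing the index shifts so that the asymmetric treatment of $\lambda$ in the skew-$q$-product definition is transparent.
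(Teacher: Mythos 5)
Your proposal is correct and follows essentially the same route as the paper: expand both sides via the definition of the skew-$q$-product, use $u_0=0$ (resp.\ $v_0=0$) to discard the boundary terms, and reindex so that the shift in the first index produces the factor $q^{2s}$ in part (1), while in part (2) the degree drop of $v/Y$ to $s-1$ combined with the substitution $Y\mapsto q^2Y$ restores $q^{2\ell s}$. The only difference is that you expand the left-hand side and the paper expands the right-hand side, which is immaterial.
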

\begin{proof}~\\
\begin{enumerate}
 \item[(1)] Suppose $u_0=0$. Then
\begin{equation}
    \frac{u\left(X,Y;\lambda\right)}{Y} = \sum_{i=0}^{r}u_i(\lambda) Y^{i-1}X^{r-i} = \sum_{i=0}^{r-1} u_{i+1}(\lambda)Y^{i}X^{r-i-1}
\end{equation}
Hence
\begin{align}
    q^{2s}\frac{u\left(X,Y;\lambda\right)}{Y}\ast v\left(X,Y;\lambda-2\right) & = q^{2s}\sum^{r+s-1}_{u=0}\left(\sum_{\ell=0}^u q^{2\ell s}u_{\ell+1}(\lambda)v_{u-\ell}(\lambda-2u-2)\right) Y^{u}X^{r+s-1-u}\\
    & = q^{2s}\sum_{u=0}^{r+s-1}\left(\sum_{i=1}^{u+1}q^{2(i-1)s}u_i(\lambda)v_{u-i+1}(\lambda-2u-2)\right)Y^{u}X^{r+s-1-u}\\
    & = q^{2s}\sum_{j=1}^{r+s}\left(\sum_{i=1}^{j}q^{2(i-1)s}u_i(\lambda)v_{j-i}(\lambda-2j)\right)Y^{j-1}X^{r+s-j}\\
    & = \frac{1}{Y}\sum_{j=0}^{r+s}\left(\sum_{i=0}^{j}q^{2is}u_i(\lambda)v_{j-i}(\lambda-2j)\right)Y^{j}X^{r+s-j}\\
    & = \frac{1}{Y}\left(u\left(X,Y;\lambda\right)\ast v\left(X,Y;\lambda\right)\right).
\end{align}
 \item[(2)] Now if $v_0=0$, then
\begin{align}
    \frac{v\left(X,Y;\lambda\right)}{Y} & = \sum_{j=1}^s v_j(\lambda)Y^{j-1}X^{s-j}\\
    & = \sum_{i=0}^{s-1} v_{i+1}(\lambda)Y^{i}X^{s-i-1}.
\end{align}
So, 
\begin{align}
    u\left(X,q^2Y;\lambda\right) \ast \frac{v\left(X,Y;\lambda\right)}{Y} & = \sum_{u=0}^{r+s-1}\left(\sum_{j=0}^{u} q^{2j(s-1)}q^{2j}u_j(\lambda)v_{u-j+1}(\lambda-2j)\right)Y^{u}X^{r+s-1-u}\\
    & = \sum_{\ell=1}^{r+s}\left(\sum_{j=0}^{\ell-1} q^{2js}u_j(\lambda)v_{\ell-j}(\lambda-2j)\right)Y^{\ell-1}X^{r+s-\ell}\\
    & = \frac{1}{Y}\sum_{\ell=1}^{r+s}\left(\sum_{j=0}^{\ell} q^{2js}u_j(\lambda)v_{\ell-j}(\lambda-2j)\right)Y^{\ell}X^{r+s-\ell}\\
    & = \frac{1}{Y}\sum_{\ell=0}^{r+s}\left(\sum_{j=0}^{\ell} q^{2js}u_j(\lambda)v_{\ell-j}(\lambda-2j)\right)Y^{\ell}X^{r+s-\ell}\\
    & = \frac{1}{Y}\left(u\left(X,Y;\lambda\right)\ast v\left(X,Y;\lambda\right)\right).
\end{align}
\end{enumerate}
\end{proof}

\begin{thm}[Leibniz rule for the skew-$q^{-1}$-derivative]\label{Liebnizbminusderiv}
For two homogeneous polynomials in $Y$, $f(X,Y;\lambda)$ and $g(X,Y;\lambda)$ with degrees $r$ and $s$ respectively, the $\varphi^{th}$ (for $\varphi\geq0$) skew-$q^{-1}$-derivative of their skew-$q$-product is given by

\begin{equation}
    \left[ f\left(X,Y;\lambda\right)\ast g\left(X,Y;\lambda\right)\right]^{\{\varphi\}} = \sum_{\ell=0}^\varphi {\begin{bmatrix}\varphi \\ \ell \end{bmatrix}}q^{2\ell(s-\varphi+\ell)}f^{\{\ell\}}\left(X,Y;\lambda\right)\ast g^{\{\varphi-\ell\}}\left(X,Y;\lambda-2\ell\right).
\end{equation}
\end{thm}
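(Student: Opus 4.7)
The plan is to mirror the structure of the proof of the Leibniz rule for the skew-$q$-derivative (Theorem \ref{Liebnizbderiv}) and induct on $\varphi$, using Lemma \ref{lemma:skewqminussimplification} in place of Lemma \ref{lemma:leibnizsmallqderiv} and keeping careful track of the $\lambda$-shifts that are absent in the earlier proof.

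For the base case $\varphi = 1$, I would start from the definition
\begin{equation*}
[f \ast g]^{\{1\}} = \frac{f(X,q^{-2}Y;\lambda) \ast g(X,q^{-2}Y;\lambda) - f(X,Y;\lambda) \ast g(X,Y;\lambda)}{(q^{-2}-1)Y}
\end{equation*}
and telescope by inserting the intermediate term $f(X,q^{-2}Y;\lambda) \ast g(X,Y;\lambda)$. This splits the numerator into $f(X,q^{-2}Y;\lambda) \ast [g(X,q^{-2}Y;\lambda) - g(X,Y;\lambda)]$ and $[f(X,q^{-2}Y;\lambda) - f(X,Y;\lambda)] \ast g(X,Y;\lambda)$. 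In the first piece, the right factor has zero constant-in-$Y$ coefficient, so Lemma \ref{lemma:skewqminussimplification}(2) reduces it to $f(X,Y;\lambda) \ast g^{\{1\}}(X,Y;\lambda)$ (using $f(X, q^{-2}\cdot q^2Y;\lambda) = f(X,Y;\lambda)$). In the second piece, the left factor has zero constant-in-$Y$ coefficient, so Lemma \ref{lemma:skewqminussimplification}(1) produces the $q^{2s}$ factor and the shift $\lambda \mapsto \lambda-2$, yielding $q^{2s} f^{\{1\}}(X,Y;\lambda) \ast g(X,Y;\lambda-2)$. These match the right-hand side at $\varphi=1$.

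For the inductive step, I would assume the formula at $\overline{\varphi}$ and apply the $\varphi=1$ case to each summand $f^{\{\ell\}}(X,Y;\lambda) \ast g^{\{\overline{\varphi}-\ell\}}(X,Y;\lambda-2\ell)$, noting that $g^{\{\overline{\varphi}-\ell\}}$ has degree $s - \overline{\varphi}+\ell$, which is what appears in the $q^{2\cdot\deg}$ factor. This yields two sums; reindexing the one with $f^{\{\ell+1\}}$ by $\ell \mapsto \ell-1$ and combining, the coefficient of $f^{\{\ell\}} \ast g^{\{\overline{\varphi}+1-\ell\}}(X,Y;\lambda-2\ell)$ for $1 \le \ell \le \overline{\varphi}$ becomes
\begin{equation*}
\begin{bmatrix}\overline{\varphi}\\\ell-1\end{bmatrix} q^{2\ell(s-\overline{\varphi}+\ell-1)} + \begin{bmatrix}\overline{\varphi}\\\ell\end{bmatrix} q^{2\ell(s-\overline{\varphi}+\ell)} = q^{2\ell(s-\overline{\varphi}+\ell-1)} \left( \begin{bmatrix}\overline{\varphi}\\\ell-1\end{bmatrix} + q^{2\ell}\begin{bmatrix}\overline{\varphi}\\\ell\end{bmatrix} \right),
\end{equation*}
and the Pascal-type identity \eqref{equation:thing1} collapses the bracket to $\begin{bmatrix}\overline{\varphi}+1\\\ell\end{bmatrix}$, giving exactly $\begin{bmatrix}\overline{\varphi}+1\\\ell\end{bmatrix} q^{2\ell(s-(\overline{\varphi}+1)+\ell)}$ as required. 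The boundary indices $\ell = 0$ and $\ell = \overline{\varphi}+1$ come from a single summand each and are checked directly.

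The main obstacle I expect is the bookkeeping of the cascading $\lambda$-shifts: unlike the skew-$q$-derivative case, every intermediate term carries a shift $\lambda - 2\ell$ that gets refreshed when the $\varphi=1$ rule is applied term by term (producing further $\lambda - 2(\ell+1)$ shifts on the pieces coming from $f^{\{\ell+1\}}$). Verifying that these shifts align correctly after reindexing, and that the exponent of $q$ in each coefficient matches both before and after applying \eqref{equation:thing1}, is the delicate part; the rest is essentially identical in spirit to Theorem \ref{Liebnizbderiv}.
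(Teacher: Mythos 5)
Your proposal is correct and follows essentially the same route as the paper: the $\varphi=1$ case by telescoping with the intermediate term and applying the two parts of Lemma \ref{lemma:skewqminussimplification}, then induction with term-by-term application of the base case, reindexing, and the Pascal-type identity \eqref{equation:thing1}, with the $\lambda$-shifts tracked exactly as in the paper's argument.
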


\begin{proof}
Firstly let,
\begin{align}
    f\left(X,Y;\lambda\right) & = \sum_{i=0}^r f_i(\lambda) Y^{i}X^{r-i}\\
    u\left(X,Y;\lambda\right) & = \sum_{i=0}^r u_i(\lambda) Y^{i}X^{r-i}\\
    g\left(X,Y;\lambda\right) & = \sum_{i=0}^s g_i(\lambda) Y^{i}X^{s-i}\\
    v\left(X,Y;\lambda\right) & = \sum_{i=0}^s v_i(\lambda) Y^{i}X^{s-i}.\\
\end{align}

For simplification we shall write $f(X,Y;\lambda),~g(X,Y;\lambda)$ as $f(Y;\lambda),~g(Y;\lambda)$. Now by differentiation we have
\begin{align}
    \left[ f\left(Y;\lambda\right)\ast g\left(Y;\lambda\right)\right]^{\{1\}} & = \frac{f\left(q^{-2}Y;\lambda\right)\ast g\left(q^{-2}Y;\lambda\right)-f\left(Y;\lambda\right)\ast g\left(Y;\lambda\right)}{(q^{-2}-1)Y}\\
    & = \frac{1}{(q^{-2}-1)Y} \bigg\{ f\left(q^{-2}Y;\lambda\right)\ast g\left(q^{-2}Y;\lambda\right)-f\left(q^{-2}Y;\lambda\right)\ast g\left(Y;\lambda\right)\\
    & ~ + f\left(q^{-2}Y;\lambda\right)\ast g\left(Y;\lambda\right) - f\left(Y;\lambda\right)\ast g\left(Y;\lambda\right)\bigg\}\\
    & = \frac{1}{(q^{-2}-1)Y}\bigg\{ f\left(q^{-2}Y;\lambda\right)\ast\left(g\left(q^{-2}Y;\lambda\right)-g\left(Y;\lambda\right)\right)\bigg\}\\
    & ~ +\frac{1}{(q^{-2}-1)Y}\bigg\{\left(f\left(q^{-2}Y;\lambda\right)-f\left(Y;\lambda\right)\right)\ast g\left(Y;\lambda\right)\bigg\}.
\end{align}
By Lemma \ref{lemma:skewqminussimplification} we have
\begin{align}
    \left[ f\left(Y;\lambda\right)\ast g\left(Y;\lambda\right)\right]^{\{1\}} & = f\left(Y;\lambda\right)\ast\frac{\left(g\left(q^{-2}Y;\lambda\right)-g\left(Y;\lambda\right)\right)}{\left(q^{-2}-1\right)Y}\\
    &~+q^{2s}\frac{\left(f\left(q^{-2}Y;\lambda\right)-f\left(Y;\lambda\right)\right)}{\left(q^{-2}-1\right)Y}\ast g\left(Y;\lambda-2\right)\\
    & = f\left(Y;\lambda\right)\ast g^{\{1\}}\left(Y;\lambda\right) + q^{2s} f^{\{1\}}\left(Y;\lambda\right)\ast g\left(Y;\lambda-2\right).
\end{align}
So the initial case holds. Assume the statement holds true for $\varphi=\overline{\varphi}$, i.e.

\begin{equation}
    \left[ f\left(X,Y;\lambda\right)\ast g\left(X,Y;\lambda\right)\right]^{\{\overline{\varphi}\}} = \sum_{\ell=0}^{\overline{\varphi}} {\begin{bmatrix}\overline{\varphi} \\ \ell \end{bmatrix}}q^{2\ell(s-\overline{\varphi}+\ell)}f^{\{\ell\}}\left(X,Y;\lambda\right)\ast g^{\{\overline{\varphi}-\ell\}}\left(X,Y;\lambda-2r\right).
\end{equation}
Now considering $\overline{\varphi}+1$  and for simplicity we write $f(X,Y;\lambda),~g(X,Y;\lambda)$ as $f(\lambda),g(\lambda)$ we have

\begin{align}
    \left[ f\left(\lambda\right)\ast g\left(\lambda\right)\right]^{\{\overline{\varphi}+1\}} & = \left[\sum_{\ell=0}^{\overline{\varphi}} {\begin{bmatrix}\overline{\varphi} \\ \ell \end{bmatrix}}q^{2\ell(s-\overline{\varphi}+\ell)}f^{\{\ell\}}\left(\lambda\right)\ast g^{\{\overline{\varphi}-\ell\}}\left(\lambda-2\ell\right)\right]^{\{1\}}\\
    & = \sum_{\ell=0}^{\overline{\varphi}} {\begin{bmatrix}\overline{\varphi} \\ \ell \end{bmatrix}}q^{2l(s-\overline{\varphi}+\ell)}\left(f^{\{\ell\}}\left(\lambda\right)\ast g^{\{\overline{\varphi}-\ell\}}\left(\lambda-2\ell\right)\right)^{\{1\}}\\
    \overset{\eqref{qskewproduct}}&{=} \sum_{\ell=0}^{\overline{\varphi}} {\begin{bmatrix}\overline{\varphi} \\ \ell \end{bmatrix}}q^{2\ell(s-\overline{\varphi}+\ell)}f^{\{\ell\}}\left(\lambda\right)\ast g^{\{\overline{\varphi}-\ell+1\}}\left(\lambda-2\ell\right)\\
    & ~+ \sum_{\ell=0}^{\overline{\varphi}} {\begin{bmatrix}\overline{\varphi} \\ \ell \end{bmatrix}}q^{2\ell(s-\overline{\varphi}+\ell)}q^{2(v-\overline{\varphi}+\ell)}f^{\{\ell+1\}}\left(\lambda\right)\ast g^{\{\overline{\varphi}-\ell\}}\left(\lambda-2\ell-2\right)\\
    & = \sum_{\ell=0}^{\overline{\varphi}} {\begin{bmatrix}\overline{\varphi} \\ \ell \end{bmatrix}}q^{2\ell(s-\overline{\varphi}+\ell)}f^{\{\ell\}}\left(\lambda\right)\ast g^{\{\overline{\varphi}-\ell+1\}}\left(\lambda-2\ell\right)\\
    & ~+ \sum_{\ell=1}^{\overline{\varphi}+1} {\begin{bmatrix}\overline{\varphi} \\ \ell-1 \end{bmatrix}}q^{2(\ell-1)(s-\overline{\varphi}+\ell-1)}q^{2(s-\overline{\varphi}+(\ell-1))}f^{\{\ell\}}\left(\lambda\right)\ast g^{\{\overline{\varphi}-k+1\}}\left(\lambda-2\ell\right)\\
    & = f\left(\lambda\right)\ast g^{\{\overline{\varphi}+1\}}\left(\lambda\right)+\sum_{\ell=1}^{\overline{\varphi}} {\begin{bmatrix}\overline{\varphi} \\ \ell \end{bmatrix}}q^{2\ell(s-\overline{\varphi}+\ell)}f^{\{\ell\}}\left(\lambda\right)\ast g^{\{\overline{\varphi}-\ell+1\}}\left(\lambda-2\ell\right)\\
    & ~+ {\begin{bmatrix}\overline{\varphi}\\\overline{\varphi}\end{bmatrix}}q^{2(\overline{\varphi}+1)(s+1)}q^{-2\overline{\varphi}-2} f^{\{\overline{\varphi}+1\}}\left(\lambda\right)\ast g\left(\lambda-2(\overline{\varphi}+1)\right)\\
    & ~+ \sum_{\ell=1}^{\overline{\varphi}} {\begin{bmatrix}\overline{\varphi} \\ \ell-1 \end{bmatrix}}q^{2(\ell-1)(s-\overline{\varphi}+\ell-1)}q^{2(s-\overline{\varphi}+(\ell-1))}f^{\{\ell\}}\left(\lambda\right)\ast g^{\{\overline{\varphi}-k+1\}}\left(\lambda-2\ell\right)\\
    & = f\left(\lambda\right)\ast g^{\{\overline{\varphi}+1\}}\left(\lambda\right)+ \sum_{\ell=1}^{\overline{\varphi}}\left({\begin{bmatrix}\overline{\varphi}\\\ell\end{bmatrix}}+q^{-2\ell}{\begin{bmatrix}\overline{\varphi}\\\ell-1\end{bmatrix}}\right)q^{2\ell(s-\overline{\varphi}+\ell)}f^{\{\ell\}}\left(\lambda\right) \ast g^{\{\overline{\varphi}+1-\ell\}}\left(\lambda-2\ell\right)\\
    & ~ + q^{2s(\overline{\varphi}+1)}q^{-2\overline{\varphi}-2}f^{\{\overline{\varphi}+1\}}\left(\lambda\right) \ast g\left(\lambda-2(\overline{\varphi}+1)\right)\\
    \overset{\eqref{equation:thing1}}&{=} f\left(\lambda\right)\ast g^{\{\overline{\varphi}+1\}}\left(\lambda\right) + \sum_{\ell=1}^{\overline{\varphi}}q^{-2\ell}{\begin{bmatrix}\overline{\varphi}+1\\\ell\end{bmatrix}}f^{\{\ell\}}\left(\lambda\right)\ast g^{\{\overline{\varphi}+1\}}\left(\lambda-2\ell\right)\\
    &~+ q^{-2(\overline{\varphi}+1)}{\begin{bmatrix}\overline{\varphi}+1\\\overline{\varphi}+1\end{bmatrix}}q^{2(\overline{\varphi}+1)(s-\overline{\varphi}(\overline{\varphi}+1))}f^{\{\overline{\varphi}+1\}}\left(\lambda\right)\ast g^{\{\overline{\varphi}+1-(\overline{\varphi}+1)\}}\left(\lambda-2(\overline{\varphi}+1)\right)\\
    & = \sum_{\ell=0}^{\overline{\varphi}+1}{\begin{bmatrix}\overline{\varphi}+1\\\ell\end{bmatrix}}q^{2\ell(s-(\overline{\varphi}+1)+\ell)}f^{\{\ell\}}\left(\lambda\right) \ast g^{\{\overline{\varphi}+1-\ell\}}\left(\lambda-2\ell\right)
\end{align}
as required.
\end{proof}

\subsection{Evaluating the Skew-$q$-Derivative and the Skew-$q^{-1}$-Derivative}

The following lemmas yield useful results for applying the MacWilliams Identity to develop moments of the skew rank distribution.

\begin{lem}\label{lemma:d'sandbetab's}
For $X=Y=1$,
\begin{equation}\label{equation:d'sandbetab's}
    \nu^{[j](\ell)}(1,1;\lambda)=\beta(j,j)\delta_{j\ell}.
\end{equation}
\end{lem}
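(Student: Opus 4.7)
The plan is to reduce the statement to evaluating a closed-form $q$-analog identity, using the machinery already built up in Section \ref{section:derivatives}. First I would invoke equation \eqref{equation:nuderiv} from Lemma \ref{lemma:vthbderivative}, which gives the clean identity
\[
\nu^{[j](\ell)}(X,Y;\lambda) = \beta(j,\ell)\,\nu^{[j-\ell]}(X,Y;\lambda).
\]
Specialising at $X = Y = 1$ immediately separates the coefficient $\beta(j,\ell)$ from the value of the $(j-\ell)$-th skew-$q$-power of $\nu$ at $(1,1)$. So everything reduces to understanding when each factor vanishes.

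The natural split is then into three cases according to the sign of $j-\ell$. If $\ell > j$, then $\beta(j,\ell) = \prod_{i=0}^{\ell-1}\begin{bmatrix}j-i\\1\end{bmatrix}$ contains the factor $\begin{bmatrix}0\\1\end{bmatrix} = 0$ (at $i=j$), so the entire expression is zero. If $\ell = j$, then $\nu^{[0]}(1,1;\lambda) = 1$ and the whole expression collapses to $\beta(j,j)$, which is exactly what the right-hand side of \eqref{equation:d'sandbetab's} requires. The remaining case is $\ell < j$, where I would evaluate $\nu^{[j-\ell]}(1,1;\lambda)$ explicitly using Theorem \ref{thed'slemma}:
\[
\nu^{[j-\ell]}(1,1;\lambda) = \sum_{u=0}^{j-\ell}(-1)^u q^{u(u-1)}\begin{bmatrix}j-\ell\\u\end{bmatrix},
\]
and show it vanishes.

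For the vanishing, I would appeal to the $q$-binomial identity already recorded in Section \ref{subsection:b-naryidentities}, namely
\[
\prod_{i=0}^{x-1}\bigl(y - q^{2i}\bigr) = \sum_{k=0}^x (-1)^{x-k} q^{2\binom{x-k}{2}} \begin{bmatrix}x\\k\end{bmatrix} y^k.
\]
Setting $y = 1$ and $x = j-\ell \geq 1$, the left-hand side has the factor $1 - q^{0} = 0$, so it is zero. Re-indexing the right-hand side via $u = x - k$ and using $\begin{bmatrix}x\\x-u\end{bmatrix} = \begin{bmatrix}x\\u\end{bmatrix}$ together with $u(u-1) = 2\binom{u}{2}$ shows that this right-hand side is exactly the sum displayed above, hence it vanishes.

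I do not expect any real obstacle: the identity is essentially a one-line consequence of \eqref{equation:nuderiv}. The only step needing a little care is the book-keeping in the re-indexing of the $q$-binomial identity to match the sign and power-of-$q$ convention used in Theorem \ref{thed'slemma}; once that is done, the three cases combine directly to give $\nu^{[j](\ell)}(1,1;\lambda) = \beta(j,j)\delta_{j\ell}$.
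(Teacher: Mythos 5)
Your proposal is correct and follows essentially the same route as the paper: reduce via \eqref{equation:nuderiv} to $\beta(j,\ell)\,\nu^{[j-\ell]}(1,1;\lambda)$ and then recognise the alternating sum $\sum_{u}(-1)^u q^{u(u-1)}\left[\begin{smallmatrix}j-\ell\\u\end{smallmatrix}\right]$ as a Kronecker delta. The only cosmetic difference is that the paper cites the orthogonality relation \eqref{equation:deltaijbs} with $i=0$ for this last step, whereas you rederive it from the product expansion $\prod_{i=0}^{x-1}(y-q^{2i})$ at $y=1$ (and you additionally spell out the $\ell>j$ and $\ell=j$ cases), which is a valid equivalent verification.
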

\begin{proof}
Consider 
\begin{align}
    \nu^{[j](\ell)}(X,Y;\lambda) \overset{\eqref{equation:nuderiv}}&{=}  \beta(j,\ell)\nu^{[j-\ell]}(X,Y;\lambda)\\
    & = \beta(j,\ell)\sum_{u=0}^{j-\ell}(-1)^uq^{u(u-1)}{\begin{bmatrix}j-\ell\\u\end{bmatrix}}Y^{u}X^{(j-\ell)-u}.
\end{align}
So \begin{equation}
    \nu^{[j](\ell)}(1,1;\lambda) = \beta(j,\ell)\sum_{u=0}^{j-\ell}(-1)^uq^{u(u-1)}{\begin{bmatrix}j-\ell\\u\end{bmatrix}}.
\end{equation}
Now the rest of the proof follows directly from Equation \eqref{equation:deltaijbs}. 
\end{proof}

\begin{lem}\label{lemma:rhoandmu}
For any homogeneous polynomial, $\rho\left(X,Y;\lambda\right)$ and for any $s\geq 0$, \begin{equation}\label{equation:rhoandmu}\left(\rho \ast \mu^{[s]}\right)\left(1,1;\lambda\right) =q^{\lambda s}\rho(1,1;\lambda).
\end{equation}
\end{lem}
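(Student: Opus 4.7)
The plan is to expand the skew-$q$-product explicitly using Definition \ref{q-proddefn}, swap the order of summation, and then recognise the inner sum as a value of $\mu^{[s]}$ which collapses by identity \eqref{equation:producttosumgauss}.

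Concretely, I would write $\rho(X,Y;\lambda) = \sum_{i=0}^{r} \rho_i(\lambda) Y^i X^{r-i}$ and use the explicit form $\mu_k(\lambda',s) = \begin{bmatrix} s \\ k \end{bmatrix}\gamma(\lambda',k)$ from Theorem \ref{bformula}. By \eqref{qskewproduct},
\begin{equation*}
(\rho \ast \mu^{[s]})(X,Y;\lambda) = \sum_{u=0}^{r+s}\sum_{i=0}^{u} q^{2is}\rho_i(\lambda)\,\mu_{u-i}(\lambda-2i,s)\,Y^u X^{r+s-u}.
\end{equation*}
Setting $X=Y=1$ and swapping the order of summation (using $\rho_i(\lambda) = 0$ for $i>r$ and $\mu_k(\lambda',s)=0$ for $k>s$), I obtain
\begin{equation*}
(\rho \ast \mu^{[s]})(1,1;\lambda) = \sum_{i=0}^{r} q^{2is}\rho_i(\lambda)\sum_{k=0}^{s}\mu_k(\lambda-2i,s) = \sum_{i=0}^{r} q^{2is}\rho_i(\lambda)\,\mu^{[s]}(1,1;\lambda-2i).
\end{equation*}

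The key subsidiary step is to show that $\mu^{[s]}(1,1;\lambda') = q^{\lambda' s}$ for any $\lambda'$. By Theorem \ref{bformula},
\begin{equation*}
\mu^{[s]}(1,1;\lambda') = \sum_{k=0}^{s}\begin{bmatrix} s \\ k \end{bmatrix}\gamma(\lambda',k) = \sum_{k=0}^{s}\begin{bmatrix} s \\ k \end{bmatrix}\prod_{i=0}^{k-1}\!\left(q^{\lambda'}-q^{2i}\right),
\end{equation*}
and substituting $y = q^{\lambda'}$ into identity \eqref{equation:producttosumgauss} yields exactly $q^{\lambda' s}$. (Alternatively one could observe $\mu(1,1;\lambda')=q^{\lambda'}$ and argue by induction on $s$ via the product structure, but \eqref{equation:producttosumgauss} is the cleanest route.)

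Substituting $\mu^{[s]}(1,1;\lambda-2i) = q^{(\lambda-2i)s}$ back into the previous display gives
\begin{equation*}
(\rho\ast\mu^{[s]})(1,1;\lambda) = \sum_{i=0}^{r} q^{2is}\rho_i(\lambda)\,q^{(\lambda-2i)s} = q^{\lambda s}\sum_{i=0}^{r}\rho_i(\lambda) = q^{\lambda s}\rho(1,1;\lambda),
\end{equation*}
which is the desired identity. The only non-routine step is recognising that the inner sum $\sum_{k=0}^s \mu_k(\lambda-2i,s)$ reassembles into $\mu^{[s]}(1,1;\lambda-2i)$ and then invoking \eqref{equation:producttosumgauss}; the rest is bookkeeping with the definition of the skew-$q$-product.
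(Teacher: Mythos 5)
Your proposal is correct and follows essentially the same route as the paper: expand the skew-$q$-product, set $X=Y=1$, swap the order of summation, and evaluate the inner sum $\sum_{k=0}^{s}\begin{bmatrix} s \\ k \end{bmatrix}\gamma(\lambda-2i,k)$ as $q^{(\lambda-2i)s}$ via identity \eqref{equation:producttosumgauss}. The only cosmetic difference is that you phrase this inner evaluation as the standalone fact $\mu^{[s]}(1,1;\lambda')=q^{\lambda' s}$, whereas the paper applies \eqref{equation:producttosumgauss} inline.
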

\begin{proof}
Let $\rho\left(X,Y;\lambda\right)=\displaystyle\sum_{i=0}^r \rho_i(\lambda)Y^{i}X^{r-i}$, then from Theorem \ref{bformula},
\begin{equation}
    \mu^{[s]}(X,Y;\lambda) = \sum_{t=0}^{s}{\begin{bmatrix}s\\t\end{bmatrix}}\gamma(\lambda,t)Y^{t}X^{s-t} = \sum_{t=0}^s \mu^{[s]}_t(\lambda)Y^{t}X^{s-t}
\end{equation}
and
\begin{equation}
    \left(\rho\ast \mu^{[s]}\right)(X,Y;\lambda) = \sum_{u=0}^{r+s}c_u(\lambda)Y^{u}X^{(r+s-u)}
\end{equation}
where
\begin{equation}
    c_u(\lambda)=\sum_{i=0}^u q^{2is}\rho_i(\lambda)\mu^{[s]}_{u-i}(\lambda-2i).
\end{equation}
Then
\begin{align}
    \left(\rho\ast \mu^{[s]}\right)(1,1;\lambda) & = \sum_{u=0}^{r+s} c_u(\lambda)\\
    & = \sum_{u=0}^{r+s}\sum_{i=0}^{u} q^{2is}\rho_i(\lambda)\mu^{[s]}_{u-i}(\lambda-2i)\\
    & = \sum_{j=0}^{r+s}q^{2js}\rho_j(\lambda)\left(\sum_{k=0}^{r+s-j}\mu_k^{[s]}(\lambda-2j)\right)\\
    & = \sum_{j=0}^{r}q^{2js}\rho_j(\lambda)\left(\sum_{k=0}^s \mu_k^{[s]}(\lambda-2j)\right)\\
    & = \sum_{j=0}^r q^{2js} \rho_j(\lambda)\left(\sum_{k=0}^s {\begin{bmatrix}s\\k\end{bmatrix}}\gamma(\lambda-2j,k)\right)\\
    \overset{\eqref{equation:producttosumgauss}}&{=} \sum_{j=0}^r q^{2js} \rho_j(\lambda)q^{(\lambda-2j)s}\\
    & = q^{\lambda s}\rho(1,1;\lambda).
\end{align}
\end{proof}
\section{Moments of the Skew Rank Distribution}\label{section:moments}

Here we explore the moments of the skew rank distribution of a subgroup of alternating bilinear forms over $\mathbb{F}_q$ and that of it's dual. Similar results for the Hamming metric were derived in \cite[p131]{TheoryofError} and for rank metric codes over $\mathbb{F}_{q^m}$ in \cite[Prop 4]{gadouleau2008macwilliams}. 

\subsection{Moments derived from the Skew-$q$-Derivative}

\begin{prop}\label{prop:momentsbderiv}

For $0 \le \varphi \le n$ and a linear code $\mathscr{C} \subseteq \mathscr{A}_{q,t}$ and its dual $\mathscr{C}^\perp$ with weight distributions ${\boldsymbol{c}}$ and ${\boldsymbol{ c'}}$, respectively we have
\begin{equation}
    \sum_{i=0}^{n-\varphi}{\begin{bmatrix} n-i \\ \varphi\end{bmatrix}}c_i = \frac{1}{|\mathscr{C}^\perp|}q^{m(n-\varphi)}\sum_{i=0}^{\varphi} {\begin{bmatrix}n-i\\n-\varphi\end{bmatrix}}c_i^{'}.
\end{equation}

\end{prop}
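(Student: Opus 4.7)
The plan is to apply the $\varphi$-th skew-$q$-derivative (with respect to $X$) to both sides of the MacWilliams identity of Theorem~\ref{mainthm1} and then evaluate at $X=Y=1$. Since $(\mathscr{C}^\perp)^\perp=\mathscr{C}$, Theorem~\ref{mainthm1} may be applied with $\mathscr{C}$ replaced by $\mathscr{C}^\perp$, giving
\[
W_{\mathscr{C}}^{SR}(X,Y) \;=\; \frac{1}{|\mathscr{C}^\perp|}\,\overline{W}_{\mathscr{C}^\perp}^{SR}\!\bigl(X+(q^m-1)Y,\,X-Y\bigr) \;=\; \frac{1}{|\mathscr{C}^\perp|}\sum_{i=0}^{n} c_i'\, \nu^{[i]}(X,Y;m)\ast \mu^{[n-i]}(X,Y;m).
\]
The left-hand side will produce the sum involving $c_i$ after differentiating the weight enumerator directly, and the right-hand side will produce the sum involving $c_i'$ after unwinding the skew-$q$-transform.

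For the left-hand side I would apply Lemma~\ref{lemma:vthbderivative}(2) to $W_{\mathscr{C}}^{SR}(X,Y)=\sum_{i=0}^{n}c_iY^iX^{n-i}$ and evaluate at $X=Y=1$, obtaining
\[
W_{\mathscr{C}}^{SR,(\varphi)}(1,1) \;=\; \sum_{i=0}^{n-\varphi} c_i\,\beta(n-i,\varphi) \;=\; \beta(\varphi,\varphi)\sum_{i=0}^{n-\varphi}\begin{bmatrix}n-i\\ \varphi\end{bmatrix} c_i,
\]
where the last equality uses \eqref{equation:betabstartdifferent}.

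For the right-hand side I would apply the Leibniz rule for the skew-$q$-derivative (Theorem~\ref{Liebnizbderiv}) to each skew-$q$-product $\nu^{[i]}\ast\mu^{[n-i]}$:
\[
\bigl[\nu^{[i]}\ast\mu^{[n-i]}\bigr]^{(\varphi)} \;=\; \sum_{\ell=0}^{\varphi}\begin{bmatrix}\varphi\\ \ell\end{bmatrix} q^{2(\varphi-\ell)(i-\ell)}\,\nu^{[i](\ell)}\ast\mu^{[n-i](\varphi-\ell)}.
\]
Next, Equations~\eqref{equation:muderiv} and \eqref{equation:nuderiv} rewrite the derivatives as $\nu^{[i](\ell)}=\beta(i,\ell)\,\nu^{[i-\ell]}$ and $\mu^{[n-i](\varphi-\ell)}=\beta(n-i,\varphi-\ell)\,\mu^{[n-i-\varphi+\ell]}$. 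Setting $X=Y=1$, Lemma~\ref{lemma:rhoandmu} collapses the inner skew-$q$-product into $q^{m(n-i-\varphi+\ell)}\,\nu^{[i-\ell]}(1,1;m)$, and the specialisation $\ell=0$ of Lemma~\ref{lemma:d'sandbetab's} (namely $\nu^{[j]}(1,1;m)=\delta_{j,0}$) annihilates every term of the Leibniz sum except $\ell=i$, which is admissible only when $i\le\varphi$.

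What survives is $\sum_{i=0}^{\varphi}c_i'\begin{bmatrix}\varphi\\ i\end{bmatrix}\beta(i,i)\beta(n-i,\varphi-i)q^{m(n-\varphi)}$ divided by $|\mathscr{C}^\perp|$. Using the two $\beta$-identities of Lemma~\ref{lemma:betabmanipulation} (Equations~\eqref{equation:betabstartdifferent} and \eqref{equation:betabstartsame}) one checks that
\[
\begin{bmatrix}\varphi\\ i\end{bmatrix}\beta(i,i)\,\beta(n-i,\varphi-i) \;=\; \beta(\varphi,\varphi)\begin{bmatrix}n-i\\ \varphi-i\end{bmatrix} \;=\; \beta(\varphi,\varphi)\begin{bmatrix}n-i\\ n-\varphi\end{bmatrix},
\]
the last step being the symmetry of the skew-$q$-nary Gaussian coefficients. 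Equating the two evaluations of $W_{\mathscr{C}}^{SR,(\varphi)}(1,1)$ and cancelling the common factor $\beta(\varphi,\varphi)$ delivers the identity of the proposition. The main obstacle will be the $q$-exponent and $\beta$-factor bookkeeping when collapsing the Leibniz sum via Lemma~\ref{lemma:d'sandbetab's}; once one recognises that only the diagonal term $\ell=i$ contributes, everything else is a direct substitution of identities already established in Sections~\ref{section:prelims} and \ref{section:derivatives}.
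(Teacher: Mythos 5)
Your proposal is correct and follows essentially the same route as the paper's own proof: apply Theorem~\ref{mainthm1} to $\mathscr{C}^\perp$, take $\varphi$ skew-$q$-derivatives, evaluate at $X=Y=1$, and collapse the Leibniz sum via Lemmas~\ref{lemma:rhoandmu} and \ref{lemma:d'sandbetab's} so that only the $\ell=i$ term survives, finishing with the $\beta$-identities of Lemma~\ref{lemma:betabmanipulation}. The only cosmetic difference is that you factor $\nu^{[i](\ell)}=\beta(i,\ell)\nu^{[i-\ell]}$ before invoking Lemma~\ref{lemma:d'sandbetab's} at $\ell=0$, whereas the paper applies that lemma directly to $\nu^{[i](\ell)}(1,1;m)$, which is the same computation.
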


\begin{proof}
We apply Theorem \ref{mainthm1} to $\mathscr{C}^\perp$ to get
\begin{equation}
    W_{\mathscr{C}}^{SR}(X,Y) = \frac{1}{\left\vert \mathscr{C}^{\perp}\right|} \overline{W}_{\mathscr{C}^\perp}^{SR}\left(X+\left(q^m-1\right)Y,X-Y\right)
\end{equation}
or equivalently
\begin{align}
    \sum_{i=0}^n c_i Y^{i}X^{n-i}
        & =\frac{1}{\left\vert \mathscr{C}^{\perp}\right|}\sum_{i=0}^n c_i'\left(X-Y\right)^{[i]}\ast \left[X+\left(q^m-1\right)Y\right]^{[n-i]}\\
        & = \frac{1}{\left\vert \mathscr{C}^{\perp}\right|} \sum_{i=0}^n c_i'\nu^{[i]}(X,Y;m)\ast \mu^{[n-i]}(X,Y;m). \label{equation:skew-q}
\end{align}

For each side of Equation \eqref{equation:skew-q}, we shall apply the skew-$q$-derivative $\varphi$ times and then evaluate at $X=Y=1$.

For the left hand side, we obtain
\begin{equation}
    \left(\sum_{i=0}^n c_i Y^{i}X^{n-i}\right)^{(\varphi)}=
        \sum_{i=0}^{n-\varphi}c_i \beta(n-i,\varphi)Y^{i}X^{(n-i-\varphi)}
\end{equation}
from Equation \eqref{equation:vthbderivative}. Putting $X=Y=1$ we then get
\begin{align}
    \sum_{i=0}^{n-\varphi} c_i\beta(n-i,\varphi) & = \sum_{i=0}^{n-\varphi}c_i{\begin{bmatrix}n-i\\\varphi\end{bmatrix}}\beta(\varphi,\varphi)\\ 
    & = \beta(\varphi,\varphi)\sum_{i=0}^{n-\varphi}c_i {\begin{bmatrix}n-i\\\varphi\end{bmatrix}}.
\end{align}

We now move on to the right hand side. For simplicity we write $\mu(X,Y;m)$ as $\mu$ and similarly for $\nu(X,Y;m)$. We also get by Theorem \ref{Liebnizbderiv},
\begin{align}
    \left(\frac{1}{\left\vert \mathscr{C}^\perp\right|}\sum_{i=0}^n c_i' \nu^{[i]}\ast \mu^{[n-i]}\right)^{(\varphi)} & = \frac{1}{\left\vert \mathscr{C}^\perp\right|}\sum_{i=0}^n c_i'\left(\sum_{\ell=0}^{\varphi}{\begin{bmatrix}\varphi\\\ell\end{bmatrix}}q^{2(\varphi-\ell)(i-\ell)}\nu^{[i](\ell)}\ast \mu^{[n-i](\varphi-\ell)}\right)\\
    & = \frac{1}{\left\vert \mathscr{C}^\perp\right|}\sum_{i=0}^n c_i'\psi_i.
\end{align}
Then with $X=Y=1$,
\begin{align}
    \psi_i(X,Y;m) & = \sum_{\ell=0}^{\varphi}{\begin{bmatrix}\varphi\\\ell\end{bmatrix}}q^{2(\varphi-\ell)(i-\ell)}\nu^{[i](\ell)}(X,Y;m) \ast \mu^{[n-i](\varphi-\ell)}(X,Y;m)\\
    \psi_i(1,1;m) \overset{\eqref{equation:muderiv}}&{=} \sum_{\ell=0}^{\varphi} {\begin{bmatrix}\varphi\\\ell\end{bmatrix}}q^{2(\varphi-\ell)(i-\ell)}\beta(n-i,\varphi-\ell)\left(\nu^{[i](\ell)}\ast \mu^{[n-i-\varphi+\ell]}\right)(1,1;m)\\
    \overset{\eqref{equation:rhoandmu}}&{=} \sum_{\ell=0}^{\varphi} {\begin{bmatrix}\varphi\\\ell\end{bmatrix}}q^{2(\varphi-\ell)(i-\ell)}\beta(n-i,\varphi-\ell)q^{m(n-i-(\varphi-\ell))}\nu^{[i](\ell)}(1,1;m)\\
    \overset{\eqref{equation:d'sandbetab's}}&{=} \sum_{\ell=0}^{\varphi}q^{2(\varphi-\ell)(i-\ell)}{\begin{bmatrix}\varphi\\\ell\end{bmatrix}}\beta(n-i,\varphi-\ell)q^{m(n-i-(\varphi-\ell))}\beta(i,i)\delta_{i\ell}\\
     \overset{\eqref{equation:betabstartdifferent}}&{=}{\begin{bmatrix}\varphi\\i\end{bmatrix}}{\begin{bmatrix}n-i\\ \varphi-i\end{bmatrix}}\beta(\varphi-i,\varphi-i)q^{m(n-\varphi)}\beta(i,i)\\
     \overset{\eqref{equation:betabstartsame}}&{=}{\begin{bmatrix}n-i\\ \varphi-i\end{bmatrix}}q^{m(n-\varphi)}\beta(\varphi,\varphi)
\end{align}

and so

\begin{align}
    \frac{1}{|\mathscr{C}^\perp|}\sum_{i=0}^n c_i'\psi_i(1,1) &  = \frac{1}{|\mathscr{C}^\perp|} \sum_{i=0}^{\varphi}c_i' {\begin{bmatrix}n-i\\\varphi-i\end{bmatrix}}q^{m(n-\varphi)}\beta(\varphi,\varphi) \\
    & = \beta(\varphi,\varphi)\frac{q^{m(n-\varphi)}}{|\mathscr{C}^\perp|}\sum_{i=0}^{\varphi} c_i'{\begin{bmatrix}n-i\\n-\varphi\end{bmatrix}}.        
\end{align}
Combining the results for each side, and simplifying, we finally obtain
\begin{equation}
    \sum_{i=0}^{n-\varphi}c_i {\begin{bmatrix}n-i\\\varphi\end{bmatrix}} = \frac{q^{m(n-\varphi)}}{|\mathscr{C}^\perp|}\sum_{i=0}^{\varphi} c_i'{\begin{bmatrix}n-i\\n-\varphi\end{bmatrix}}
\end{equation}
as required.
\end{proof}
\begin{note}
In particular, if $\varphi=0$ we have
\begin{equation}
    \sum_{i=0}^n c_i =\frac{q^{mn}}{|\mathscr{C}^\perp|}c_0' = \frac{q^{mn}}{|\mathscr{C}^\perp|}.
\end{equation}
In other words
\begin{equation}
    |\mathscr{C}||\mathscr{C}^\perp|=q^{mn}.
\end{equation}
We note that $mn=\frac{t(t-1)}{2}$ for skew-symmetric matrices and $q^{\frac{t(t-1)}{2}}$ is the number of skew-symmetric matrices of size $t\times t$. As such, this is the simple fact that the dimensions of a code and that of its dual add up to the dimension of the whole space they belong to.
\end{note}
We can simplify Proposition \ref{prop:momentsbderiv} if $\varphi$ is less than the minimum distance of the dual code.

\begin{cor}\label{corrollary:simplificationpropbderiv}
Let $d_{SR}'$ be the minimum skew rank distance of $\mathscr{C}^\perp$. If $0\leq \varphi < d_{SR}'$ then
\begin{equation}
    \sum_{i=0}^{n-\varphi}{\begin{bmatrix} n-i \\ \varphi\end{bmatrix}}c_i = \frac{1}{|\mathscr{C}^\perp|}q^{m(n-\varphi)} {\begin{bmatrix}n\\\varphi\end{bmatrix}}.
\end{equation}
\end{cor}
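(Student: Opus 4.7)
The plan is to derive the corollary directly from Proposition \ref{prop:momentsbderiv} by exploiting the hypothesis $\varphi < d_{SR}'$, which forces most terms on the right-hand side to vanish. Proposition \ref{prop:momentsbderiv} already gives
\begin{equation}
    \sum_{i=0}^{n-\varphi}\begin{bmatrix} n-i \\ \varphi\end{bmatrix}c_i = \frac{q^{m(n-\varphi)}}{|\mathscr{C}^\perp|}\sum_{i=0}^{\varphi}\begin{bmatrix}n-i\\n-\varphi\end{bmatrix}c_i',
\end{equation}
so no further manipulation of polynomials, derivatives, or skew-$q$-products is needed.

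First, I would recall that by the definition of the minimum skew rank distance of the dual code, $c_i' = 0$ for every $1 \leq i < d_{SR}'$, while $c_0' = 1$ (the all-zero matrix is always in the linear code). Since the hypothesis states $\varphi < d_{SR}'$, the range of summation $0 \leq i \leq \varphi$ on the right-hand side lies strictly inside the region where $c_i' = 0$ except at $i=0$. Therefore the entire sum collapses to its $i=0$ contribution:
\begin{equation}
    \sum_{i=0}^{\varphi}\begin{bmatrix}n-i\\n-\varphi\end{bmatrix}c_i' = \begin{bmatrix}n\\n-\varphi\end{bmatrix}\cdot 1.
\end{equation}

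Next I would apply the symmetry identity for the skew-$q$-nary Gaussian coefficients, $\begin{bmatrix}n\\n-\varphi\end{bmatrix} = \begin{bmatrix}n\\\varphi\end{bmatrix}$, stated in the preliminaries. Substituting this into the right-hand side of the displayed formula from Proposition \ref{prop:momentsbderiv} yields
\begin{equation}
    \sum_{i=0}^{n-\varphi}\begin{bmatrix} n-i \\ \varphi\end{bmatrix}c_i = \frac{q^{m(n-\varphi)}}{|\mathscr{C}^\perp|}\begin{bmatrix}n\\\varphi\end{bmatrix},
\end{equation}
which is precisely the claim. No step is genuinely difficult here; the only point requiring care is the boundary case $\varphi = 0$, in which the sum on the right is simply $c_0' = 1$ and the resulting identity reduces to $|\mathscr{C}||\mathscr{C}^\perp| = q^{mn}$ (consistent with the note immediately following Proposition \ref{prop:momentsbderiv}), so the formula holds uniformly for $0 \le \varphi < d_{SR}'$.
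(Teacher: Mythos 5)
Your argument is correct and is essentially the paper's own proof: the paper likewise invokes Proposition \ref{prop:momentsbderiv} and simply notes $c_0'=1$ and $c_1'=\cdots=c_\varphi'=0$, with the symmetry $\left[\begin{smallmatrix}n\\n-\varphi\end{smallmatrix}\right]=\left[\begin{smallmatrix}n\\\varphi\end{smallmatrix}\right]$ used implicitly. Your write-up just spells out these two steps (and the $\varphi=0$ boundary case) a little more explicitly.
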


\begin{proof}
We have $c_0'=1$ and $c_1'=\ldots=c_\varphi'=0$.
\end{proof}

\subsection{Moments derived from the Skew-$q^{-1}$-Derivative}

The next proposition relates the moments of the skew rank distribution of a linear code to those of it's dual, this time using the skew-$q^{-1}$-derivative of the MacWilliams identity for the skew rank metric. Before proceeding we first need the following two lemmas.

\begin{lem}\label{lemma:deltas}

Let $\delta(\lambda,\varphi,j)=\displaystyle\sum_{i=0}^{j}{\begin{bmatrix}j\\i\end{bmatrix}}(-1)^{i}q^{2\sigma_{i}}\gamma(\lambda-2i,\varphi)$. Then for all $\lambda\in\mathbb{R},\varphi,j\in\mathbb{Z}$,

\begin{equation}
    \delta(\lambda,\varphi,j) = \gamma(2\varphi,j)\gamma(\lambda-2j,\varphi-j)q^{j(\lambda-2j)}.
\end{equation}

\end{lem}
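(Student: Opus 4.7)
The plan is to evaluate the alternating sum in closed form by expanding each $\gamma(\lambda-2i,\varphi)$ as a polynomial in $q^{\lambda-2i}$, interchanging the order of summation, and collapsing the resulting inner sum via the $q$-expansion identities already collected in Section~\ref{subsection:b-naryidentities}. Induction on $j$ is an alternative (using the Pascal-type identity \eqref{equation:thing5} it reduces to the recursion $\delta(\lambda,\varphi,j+1) = \delta(\lambda,\varphi,j) - q^{2j}\delta(\lambda-2,\varphi,j)$), but the direct route is cleaner and more illuminating.

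First, apply the identity $\prod_{r=0}^{\varphi-1}(y-q^{2r}) = \sum_{k=0}^{\varphi}(-1)^{\varphi-k}q^{2\sigma_{\varphi-k}}\begin{bmatrix}\varphi\\k\end{bmatrix}y^{k}$ with $y=q^{\lambda-2i}$, substitute into the definition of $\delta(\lambda,\varphi,j)$, and swap the two finite sums so that the dependence on $i$ is isolated in
\[ S_k \;=\; \sum_{i=0}^{j}(-1)^{i}q^{2\sigma_i}\begin{bmatrix}j\\i\end{bmatrix}q^{-2ki}. \]
Second, recognise $S_k$ as the same expansion identity run backwards: after pulling out $q^{-2kj}$ and reindexing $i \mapsto j-i$, the sum reads $\sum_{i=0}^{j}(-1)^{j-i}q^{2\sigma_{j-i}}\begin{bmatrix}j\\i\end{bmatrix}(q^{2k})^{i}$, hence $S_k = q^{-2kj}\gamma(2k,j)$. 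Third, note that $\gamma(2k,j)$ vanishes for $0 \le k \le j-1$ (the product contains a zero factor), so the outer sum in $k$ is supported on $j \le k \le \varphi$. Setting $\ell = k-j$ and combining $\gamma(2k,j) = \begin{bmatrix}k\\j\end{bmatrix}\gamma(2j,j)$ from Lemma~\ref{lemma:Gammaidentites} with \eqref{equation:gaussianswapplaces} converts $\begin{bmatrix}\varphi\\k\end{bmatrix}\begin{bmatrix}k\\j\end{bmatrix}$ into $\begin{bmatrix}\varphi\\j\end{bmatrix}\begin{bmatrix}\varphi-j\\\ell\end{bmatrix}$. Factoring out the resulting $\begin{bmatrix}\varphi\\j\end{bmatrix}\gamma(2j,j)q^{j(\lambda-2j)}$, the residual sum over $\ell$ is a third instance of the same $q$-expansion identity, now with $y = q^{\lambda-2j}$ and $x = \varphi-j$, collapsing to $\gamma(\lambda-2j,\varphi-j)$. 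Finally, $\begin{bmatrix}\varphi\\j\end{bmatrix}\gamma(2j,j) = \gamma(2\varphi,j)$ (again Lemma~\ref{lemma:Gammaidentites}) delivers the claimed formula.

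The main obstacle is the middle step, i.e.\ recognising that the inner sum $S_k$ telescopes exactly to $q^{-2kj}\gamma(2k,j)$ and, crucially, that this vanishes for $k<j$. This is the observation that collapses a double sum of length $(j+1)(\varphi+1)$ into a single-term evaluation, and all the sign/$q$-exponent bookkeeping has to be carried out carefully; once it is in hand the remainder is routine application of the $q$-binomial identities of Section~\ref{subsection:b-naryidentities}.
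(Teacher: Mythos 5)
Your proof is correct, but it takes a genuinely different route from the paper. The paper proves the lemma by induction on $j$: it splits $\begin{bmatrix}\overline{\jmath}+1\\i\end{bmatrix}$ via the Pascal-type identity \eqref{equation:thing1}, uses the gamma recursions \eqref{equation:gamma2step} and \eqref{equation:gamma1step} to collapse the two resulting sums into the single reduction $\delta(\lambda,\varphi,\overline{\jmath}+1)=q^{\lambda-2}\left(q^{2\varphi}-1\right)\delta(\lambda-2,\varphi-1,\overline{\jmath})$, and then invokes the induction hypothesis (note this reduction differs from the one you sketch in your aside, which lowers $\lambda$ but not $\varphi$; yours is also valid but is not what the paper does). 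Your main argument instead avoids induction entirely: you expand each $\gamma(\lambda-2i,\varphi)$ by the $q$-binomial expansion displayed just before \eqref{equation:producttosumgauss}, swap the sums, identify the inner sum $S_k$ as $q^{-2kj}\gamma(2k,j)$ by running the same expansion backwards, observe the crucial vanishing for $k<j$, and resum using Lemma \ref{lemma:Gammaidentites} and \eqref{equation:gaussianswapplaces}; I checked the exponent bookkeeping ($\sigma_{j-i}$ after the reindexing, the factor $q^{j(\lambda-2j)}$, and the conversion $\begin{bmatrix}\varphi\\k\end{bmatrix}\begin{bmatrix}k\\j\end{bmatrix}=\begin{bmatrix}\varphi\\j\end{bmatrix}\begin{bmatrix}\varphi-j\\k-j\end{bmatrix}$) and it goes through. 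What your approach buys is a closed, non-inductive derivation that makes visible where the factor $\gamma(2\varphi,j)$ comes from and why the sum degenerates to a single product; the paper's induction is lighter on bookkeeping and matches the style of the companion Lemma \ref{lemma:epsilons}. One small caveat common to both proofs: the expansion identity (and the stated conclusion) really requires $\varphi$, $j$ and implicitly $\varphi-j$ to be treated as nonnegative integers, so the blanket phrase ``for all $\varphi,j\in\mathbb{Z}$'' should be read with that restriction; your argument neither fixes nor worsens this.
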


\begin{proof}
Initial case: $j=0$.

\begin{align}
    \delta(\lambda,\varphi,0) & = {\begin{bmatrix}0\\0\end{bmatrix}}(-1)^{0}q^{2\sigma_{0}}\gamma(\lambda,\varphi) = \gamma(\lambda,\varphi) = \gamma(2\varphi,0)\gamma(\lambda,\varphi)q^{0(\lambda)}.
\end{align}
So the initial case holds. Now assume the case is true for $j=\overline{\jmath}$ and consider the $\overline{\jmath}+1$ case.

\begin{align}
    \delta(\lambda,\varphi,\overline{\jmath}+1) & = \sum_{i=0}^{\overline{\jmath}+1}{
    \begin{bmatrix}\overline{\jmath}+1\\i\end{bmatrix}}(-1)^{i}q^{2\sigma_i}\gamma(\lambda-2i,\varphi)\\
    \overset{\eqref{equation:thing1}}&{=}\sum_{i=0}^{\overline{\jmath}+1}\left(q^{2i}{
    \begin{bmatrix}\overline{\jmath}\\i\end{bmatrix}}+{
    \begin{bmatrix}\overline{\jmath}\\i-1\end{bmatrix}}\right)(-1)^{i}q^{2\sigma_i}\gamma(\lambda-2i,\varphi)\\
    & = \sum_{i=0}^{\overline{\jmath}}{
    \begin{bmatrix}\overline{\jmath}\\i\end{bmatrix}}(-1)^{i}q^{2\sigma_i}q^{2i}\gamma(\lambda-2i,\varphi)+\sum_{i=0}^{\overline{\jmath}}{
    \begin{bmatrix}\overline{\jmath}\\i\end{bmatrix}}(-1)^{i+1}q^{2\sigma_{i+1}}\gamma(\lambda-2(i+1),\varphi)\\
    \overset{\eqref{equation:gamma2step}}&{=} \sum_{i=0}^{\overline{\jmath}}{
    \begin{bmatrix}\overline{\jmath}\\i\end{bmatrix}}(-1)^{i}q^{2i}q^{2\sigma_i}\left(q^{\lambda -2i}-1\right)q^{2(\varphi-1)}\gamma(\lambda-2i-2,\varphi-1)\\
    \overset{\eqref{equation:gamma1step}}&{~-}\sum_{i=0}^{\overline{\jmath}}{
    \begin{bmatrix}\overline{\jmath}\\i\end{bmatrix}}(-1)^{i}q^{2\sigma_{i+1}}\left(q^{\lambda -2i-2}-q^{2(\varphi-1)}\right)\gamma(\lambda-2i-2,\varphi-1)\\
    & = \sum_{i=0}^{\overline{\jmath}}{
    \begin{bmatrix}\overline{\jmath}\\i\end{bmatrix}}(-1)^{i}q^{2\sigma_{i+1}}\gamma(\lambda-2i-2,\varphi-1)q^{\lambda -2}\left(q^{2\varphi}-1\right)\\
    & = q^{\lambda -2}\left(q^{2\varphi}-1\right)\delta(\lambda-2,\varphi-1,\overline{\jmath})\\
    & = q^{\lambda -2}\left(q^{2\varphi}-1\right)\gamma(2(\varphi-1),\overline{\jmath})q^{\overline{\jmath}(\lambda-2\overline{\jmath}-2)}\gamma(\lambda-2-2\overline{\jmath},\varphi-1-\overline{\jmath})\\
    \overset{\eqref{equation:gamma2step}}&{=} q^{(\overline{\jmath}+1)(\lambda-2(\overline{\jmath}+1))}\gamma(2\varphi,\overline{\jmath}+1)\gamma(\lambda-2(\overline{\jmath}+1),\varphi-(\overline{\jmath}+1)).
\end{align}
as required. Hence by induction the lemma is proved.
\end{proof}

\begin{lem}\label{lemma:epsilons}
Let $\varepsilon(\Lambda,\varphi,i)=\displaystyle\sum_{\ell=0}^{i}{\begin{bmatrix}i\\\ell\end{bmatrix}}{\begin{bmatrix}\Lambda-i\\\varphi-\ell\end{bmatrix}}q^{2\ell(\Lambda-\varphi)}(-1)^{\ell}q^{2\sigma_{\ell}}\gamma(2(\varphi-\ell),i-\ell)$. Then for all $\Lambda\in\mathbb{R},\varphi,i\in\mathbb{Z}$,
\begin{equation}\label{equation:epsilons}
    \varepsilon(\Lambda,\varphi,i) = (-1)^{i}q^{2\sigma_{i}}{\begin{bmatrix}\Lambda-i\\\Lambda-\varphi\end{bmatrix}}.
\end{equation}
\end{lem}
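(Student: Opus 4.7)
The plan is to reduce \eqref{equation:epsilons} via Gaussian-coefficient manipulations to a simpler identity, and then prove that simpler identity by induction on $i$ in the spirit of the proof of Lemma \ref{lemma:deltas}.

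First I would rewrite the gamma factor using Lemma \ref{lemma:Gammaidentites}(2) as $\gamma(2(\varphi-\ell),i-\ell)=\begin{bmatrix}\varphi-\ell\\i-\ell\end{bmatrix}\gamma(2(i-\ell),i-\ell)$, and combine with two applications of \eqref{equation:gaussianswapplaces} to obtain
\[
\begin{bmatrix}\Lambda-i\\\varphi-\ell\end{bmatrix}\begin{bmatrix}\varphi-\ell\\i-\ell\end{bmatrix}=\begin{bmatrix}\Lambda-i\\\varphi-i\end{bmatrix}\begin{bmatrix}\Lambda-\varphi\\i-\ell\end{bmatrix}.
\]
Plugging this back into \eqref{equation:epsilons} factors $\begin{bmatrix}\Lambda-i\\\varphi-i\end{bmatrix}=\begin{bmatrix}\Lambda-i\\\Lambda-\varphi\end{bmatrix}$ out of the sum, so it suffices to show
\[
\sum_{\ell=0}^{i}\begin{bmatrix}i\\\ell\end{bmatrix}\begin{bmatrix}\Lambda-\varphi\\i-\ell\end{bmatrix}q^{2\ell(\Lambda-\varphi)}(-1)^{\ell}q^{2\sigma_{\ell}}\gamma(2(i-\ell),i-\ell)=(-1)^{i}q^{2\sigma_{i}}.
\]
Writing $a=\Lambda-\varphi$, substituting $s=i-\ell$, and using both the symmetry $\begin{bmatrix}i\\i-s\end{bmatrix}=\begin{bmatrix}i\\s\end{bmatrix}$ and the identity $\begin{bmatrix}a\\s\end{bmatrix}\gamma(2s,s)=\gamma(2a,s)$ (again from Lemma \ref{lemma:Gammaidentites}(2)), this further collapses to
\[
\sum_{s=0}^{i}\begin{bmatrix}i\\s\end{bmatrix}\gamma(2a,s)\,q^{2(i-s)a}(-1)^{s}q^{2\sigma_{i-s}}=q^{2\sigma_{i}}. \qquad (\star)
\]

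I would then prove $(\star)$ by induction on $i$. The base case $i=0$ is immediate. For the inductive step, split $\begin{bmatrix}i+1\\s\end{bmatrix}=\begin{bmatrix}i\\s\end{bmatrix}+q^{2(i+1-s)}\begin{bmatrix}i\\s-1\end{bmatrix}$ via \eqref{equation:thing5}, re-index the second piece by $t=s-1$, and expand $\gamma(2a,t+1)=(q^{2a}-q^{2t})\gamma(2a,t)$ using \eqref{equation:gamma1step}. The elementary relations $q^{2(i+1-s)a}=q^{2a}\cdot q^{2(i-s)a}$ and $q^{2\sigma_{i+1-s}}=q^{2(i-s)}\cdot q^{2\sigma_{i-s}}$ then line up the first sum exactly with the $-q^{2a}\gamma(2a,t)$ half of the expanded second sum, and these cancel completely. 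What remains is $q^{2i}$ times the left-hand side of $(\star)$ at index $i$, which by the inductive hypothesis equals $q^{2i}\cdot q^{2\sigma_i}=q^{2\sigma_{i+1}}$, as required.

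The main technical obstacle lies in Step 1: identifying the correct combination of Gaussian-coefficient manipulations \eqref{equation:gaussianswapplaces}--\eqref{equation:thing4} that cleanly extracts the target coefficient $\begin{bmatrix}\Lambda-i\\\Lambda-\varphi\end{bmatrix}$, given that the summand mixes five distinct $\ell$-dependent factors (two Gaussians, a gamma, a sign, and a $q$-power). Once the problem is reduced to $(\star)$, the induction is entirely routine and follows the template of the proof of Lemma \ref{lemma:deltas}.
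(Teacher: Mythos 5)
Your proposal is correct, and it takes a genuinely different route from the paper. I checked the key steps: the trinomial-revision identity $\begin{bmatrix}\Lambda-i\\\varphi-\ell\end{bmatrix}\begin{bmatrix}\varphi-\ell\\i-\ell\end{bmatrix}=\begin{bmatrix}\Lambda-i\\\varphi-i\end{bmatrix}\begin{bmatrix}\Lambda-\varphi\\i-\ell\end{bmatrix}$ does hold for real $\Lambda$ (direct check with Definition \ref{def:skewqnarygaussiancoeffs}), your reduced identity $(\star)$ is true, and the inductive step telescopes exactly as you describe, leaving $q^{2i}$ times the index-$i$ sum. The paper instead inducts on $i$ directly on $\varepsilon(\Lambda,\varphi,i)$: it splits $\begin{bmatrix}i+1\\\ell\end{bmatrix}$ via \eqref{equation:thing5}, recognises the two resulting sums as $\left(q^{2\varphi}-q^{2i}\right)\varepsilon(\Lambda-1,\varphi,i)$ and $-q^{2(i+\Lambda-\varphi)}\varepsilon(\Lambda-1,\varphi-1,i)$, applies the inductive hypothesis at these shifted arguments, and recombines the two Gaussians with \eqref{equation:thing3}. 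Your preprocessing (Lemma \ref{lemma:Gammaidentites}(2) plus Gaussian-coefficient manipulation) pulls the only $\Lambda$-dependent Gaussian $\begin{bmatrix}\Lambda-i\\\Lambda-\varphi\end{bmatrix}$ out of the sum, so the remaining induction is on a one-parameter identity in $a=\Lambda-\varphi$ with no shifting of $\Lambda$ or $\varphi$ in the hypothesis; this is cleaner bookkeeping and makes transparent where the $\Lambda$-dependence of $\varepsilon$ resides, whereas the paper's argument needs no preprocessing and mirrors the shifted-argument induction already used for Lemma \ref{lemma:deltas}. Two minor caveats, neither a gap: the factorisation is not literally two applications of \eqref{equation:gaussianswapplaces} (that identity is its own inverse), so you also need the symmetry $\begin{bmatrix}a\\c\end{bmatrix}=\begin{bmatrix}a\\a-c\end{bmatrix}$ on the integer-topped factor or a one-line product-formula check; and your step $\begin{bmatrix}\Lambda-i\\\varphi-i\end{bmatrix}=\begin{bmatrix}\Lambda-i\\\Lambda-\varphi\end{bmatrix}$ tacitly assumes $i\leq\varphi$ and that $\Lambda-\varphi$ is a nonnegative integer, but the lemma's stated right-hand side already carries the same implicit restriction, so you are on equal footing with the paper.
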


\begin{proof}
Initial case $i=0$,
\begin{align}
    \varepsilon(\Lambda,\varphi,0) = {\begin{bmatrix}0\\ 0\end{bmatrix}}{\begin{bmatrix}\Lambda\\\varphi\end{bmatrix}}q^{0}(-1)^{0}q^{0}\gamma(2\varphi,0) & ={\begin{bmatrix}\Lambda \\\varphi\end{bmatrix}},\\
    (-1)^{0}q^{0}{\begin{bmatrix}\Lambda\\\Lambda-\varphi\end{bmatrix}} & = {\begin{bmatrix}\Lambda\\\varphi\end{bmatrix}}.
\end{align}
So the initial case holds. Now suppose the case is true when $i=\overline{\imath}$. Then

\begin{align}
    \varepsilon(\Lambda,\varphi,\overline{\imath}+1) & = \sum_{\ell=0}^{\overline{\imath}+1}{\begin{bmatrix}\overline{\imath}+1\\\ell\end{bmatrix}}{\begin{bmatrix}\Lambda-\overline{\imath}-1\\\varphi-\ell\end{bmatrix}}q^{2\ell(\Lambda-\varphi)}(-1)^{\ell}q^{2\sigma_{\ell}}\gamma(2(
    \varphi-\ell),\overline{\imath}+1-\ell)\\
    \overset{\eqref{equation:thing5}}&{=} \sum_{\ell=0}^{\overline{\imath}+1}{\begin{bmatrix}\overline{\imath}\\\ell\end{bmatrix}}{\begin{bmatrix}\Lambda-{\overline{\imath}}-1\\\varphi-\ell\end{bmatrix}}q^{2\ell(\Lambda-\varphi)}(-1)^{\ell}q^{2\sigma_{\ell}}\gamma(2(\varphi-\ell),\overline{\imath}+1-\ell)\\
    & ~ + \sum_{\ell=1}^{\overline{\imath}+1}q^{2(\overline{\imath}+1-\ell)}{\begin{bmatrix}\overline{\imath}\\\ell-1\end{bmatrix}}{\begin{bmatrix}\Lambda-\overline{\imath}-1\\\varphi-\ell\end{bmatrix}}q^{2\ell(\Lambda-\varphi)}(-1)^{\ell}q^{2\sigma_{\ell}}\gamma(2(\varphi-\ell),\overline{\imath}+1-\ell)\\
    & = A + B, \quad \text{say}.
\end{align}
Now
\begin{align}
    A & = \left(q^{2\varphi}-q^{2\overline{\imath}}\right)\sum_{\ell=0}^{\overline{\imath}}{\begin{bmatrix}\overline{\imath}\\\ell\end{bmatrix}}{\begin{bmatrix}\Lambda-{\overline{\imath }}-1\\\varphi-\ell\end{bmatrix}}q^{2\ell(\Lambda-1-\varphi)}(-1)^{\ell}q^{2\sigma_{\ell}}\gamma(2(\varphi-\ell),\overline{\imath} -\ell)\\
    & = \left(q^{2\varphi}-q^{2\overline{\imath}}\right)\varepsilon(\Lambda-1,\varphi,\overline{\imath})\\
    & = \left(q^{2\varphi}-q^{2\overline{\imath}}\right)(-1)^{\overline{\imath}}q^{2\sigma_{\overline{\imath}}}{\begin{bmatrix}\Lambda-\overline{\imath}-1\\ \Lambda-1-\varphi\end{bmatrix}}
.\end{align}
and
\begin{align}
    B & =\sum_{\ell=0}^{\overline{\imath}}q^{2(\overline{\imath}-\ell)}{\begin{bmatrix}\overline{\imath}\\\ell\end{bmatrix}}{\begin{bmatrix}\Lambda-1-\overline{\imath}\\\varphi-\ell-1\end{bmatrix}}q^{2(\ell+1)(\Lambda-\varphi)}(-1)^{\ell+1}q^{2\sigma_{\ell+1}}\gamma(2(\varphi-\ell-1),\overline{\imath}-\ell)\\
    & = -q^{2(\overline{\imath}+\Lambda-\varphi)}\sum_{\ell=0}^{\overline{\imath}}{\begin{bmatrix}\overline{\imath}\\\ell\end{bmatrix}}{\begin{bmatrix}\Lambda-1-\overline{\imath}\\\varphi-1-\ell\end{bmatrix}}q^{2\ell(\Lambda-\varphi)}(-1)^{\ell}q^{2\sigma_{\ell}}\gamma(2(\varphi-\ell-1),\overline{\imath}-\ell)\\
    & = -q^{2(\overline{\imath}+\Lambda-\varphi)}\varepsilon(\Lambda-1,\varphi-1,\overline{\imath})\\
    & = -q^{2(\overline{\imath}+\Lambda-\varphi)}(-1)^{\overline{\imath}}q^{2\sigma_{\overline{\imath}}}{\begin{bmatrix}\Lambda-1-\overline{\imath}\\ \Lambda-\varphi\end{bmatrix}}.
\end{align}
So
\begin{align}
    \varepsilon(\Lambda,\varphi,\overline{\imath}+1) & = A + B \\
    & = (-1)^{\overline{\imath}}q^{2\sigma_{\overline{\imath}}}\left\{\left(q^{2\varphi}-q^{2\overline{\imath}}\right){\begin{bmatrix}\Lambda-1-\overline{\imath}\\ \Lambda-1-\varphi\end{bmatrix}}-q^{2(\overline{\imath}+\Lambda-\varphi)}{\begin{bmatrix}\Lambda-1-\overline{\imath}\\ \Lambda-\varphi\end{bmatrix}}\right\}\\
    \overset{\eqref{equation:thing3}}&{=} (-1)^{\overline{\imath}+1}q^{2\sigma_{\overline{\imath}}}\left\{q^{2(\overline{\imath}+\Lambda-\varphi)}{\begin{bmatrix}\Lambda-1-\overline{\imath}\\ \Lambda-\varphi\end{bmatrix}}-\left(q^{2\varphi}-q^{2\overline{\imath}}\right)\frac{\left(q^{2(\Lambda-\varphi)}-1\right)}{\left(q^{2(\varphi-\overline{\imath})}-1\right)}{\begin{bmatrix}\Lambda-1-\overline{\imath}\\ \Lambda-\varphi\end{bmatrix}}\right\}\\
    & = (-1)^{\overline{\imath}+1}{\begin{bmatrix}\Lambda-(\overline{\imath}+1)\\ \Lambda-\varphi\end{bmatrix}}q^{2\sigma_{\overline{\imath}}}\left\{\frac{q^{2(\overline{\imath}+\Lambda-\varphi)}\left(q^{2(\varphi-\overline{\imath})}-1\right)-\left(q^{2\varphi}-q^{2\overline{\imath}}\right)\left(q^{2(\Lambda-\varphi)}-1\right)}{\left(q^{2(\varphi-\overline{\imath})}-1\right)}\right\}\\
    & = (-1)^{\overline{\imath}+1}{\begin{bmatrix}\Lambda-(\overline{\imath}+1)\\ \Lambda-\varphi\end{bmatrix}}q^{2\sigma_{\overline{\imath}}}q^{2\overline{\imath}}\frac{q^{2(\varphi-\overline{\imath})}-1}{q^{2(\varphi-\overline{\imath})}-1}\\
    & = (-1)^{\overline{\imath}+1}q^{2\sigma_{\overline{\imath}+1}}{\begin{bmatrix}\Lambda-(\overline{\imath}+1)\\ \Lambda-\varphi\end{bmatrix}}
\end{align}
as required.
\end{proof}
\begin{prop}\label{prop:momentsbminusderivative}
For $0 \le \varphi \le n$ and a linear code $\mathscr{C} \subseteq \mathscr{A}_{q,t}$ with dimension $k$ and its dual $\mathscr{C}^\perp$ with weight distributions ${\boldsymbol{c}}$ and ${\boldsymbol{ c'}}$, respectively we have
\begin{equation}
    \sum_{i=\varphi}^n q^{2\varphi(n-i)}{\begin{bmatrix}i\\\varphi\end{bmatrix}}c_i = q^{k-m\varphi}\sum_{i=0}^{\varphi}(-1)^{i}{q^{2\sigma_{i}}q^{2i(\varphi-i)}\begin{bmatrix}n-i\\n-\varphi\end{bmatrix}}\gamma(m-2i,\varphi-i)c_{i}'.
\end{equation}
\end{prop}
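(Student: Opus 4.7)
The plan is to mirror the proof of Proposition~\ref{prop:momentsbderiv} but with the skew-$q^{-1}$-derivative in place of the skew-$q$-derivative. Starting from the MacWilliams identity written as in~\eqref{equation:skew-q}, I would apply $\{\varphi\}$ to both sides and evaluate at $X=Y=1$. The left-hand side is handled directly by~\eqref{equation:generalbdervispoly} together with~\eqref{equation:betabstartdifferent}: after the substitution $\beta(i,\varphi)=\begin{bmatrix}i\\\varphi\end{bmatrix}\beta(\varphi,\varphi)$, one obtains that $\bigl(W_\mathscr{C}^{SR}\bigr)^{\{\varphi\}}(1,1)=q^{2\sigma_\varphi+2\varphi(1-n)}\beta(\varphi,\varphi)\sum_{i=\varphi}^n q^{2\varphi(n-i)}\begin{bmatrix}i\\\varphi\end{bmatrix}c_i$, i.e.~an explicit constant times the left-hand side of the proposition.

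On the right-hand side, Leibniz for the skew-$q^{-1}$-derivative (Theorem~\ref{Liebnizbminusderiv}) expands each $i$-summand as $\sum_{\ell=0}^{\varphi}\begin{bmatrix}\varphi\\\ell\end{bmatrix}q^{2\ell(n-i-\varphi+\ell)}\,\nu^{[i]\{\ell\}}(X,Y;m)\ast\mu^{[n-i]\{\varphi-\ell\}}(X,Y;m-2\ell)$, and the derivative identities of Lemma~\ref{lemma:bminusderivlemma} rewrite this as a scalar prefactor $(-1)^\ell\beta(i,\ell)q^{-2\sigma_{\varphi-\ell}}\beta(n-i,\varphi-\ell)$ times $\nu^{[i-\ell]}(X,Y)\ast\bigl[\gamma(\lambda-2\ell,\varphi-\ell)\,\mu^{[n-i-\varphi+\ell]}(X,Y;\lambda-2\varphi)\bigr]$, where the bracketed object is regarded as a polynomial in $X,Y$ whose coefficients are functions of $\lambda$ and is ultimately evaluated at $\lambda=m$. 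Mimicking the computation inside the proof of Lemma~\ref{lemma:rhoandmu} but carrying along the $\lambda$-dependence of the $\gamma$-prefactor, the value of this skew-$q$-product at $(1,1)$ equals $q^{(\lambda-2\varphi)(n-i-\varphi+\ell)}\sum_{j=0}^{i-\ell}(-1)^j q^{2\sigma_j}\begin{bmatrix}i-\ell\\j\end{bmatrix}\gamma(\lambda-2\ell-2j,\varphi-\ell)$; the inner sum is exactly $\delta(\lambda-2\ell,\varphi-\ell,i-\ell)$ of Lemma~\ref{lemma:deltas}, which collapses it to $\gamma(2(\varphi-\ell),i-\ell)\,\gamma(\lambda-2i,\varphi-i)\,q^{(i-\ell)(\lambda-2i)}$.

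It then remains to sum over $\ell$. After factoring out the $\ell$-independent piece $\gamma(m-2i,\varphi-i)\,q^{i(m-2i)}\,q^{(m-2\varphi)(n-i-\varphi)}$ and using $\begin{bmatrix}\varphi\\\ell\end{bmatrix}\beta(\ell,\ell)\beta(\varphi-\ell,\varphi-\ell)=\beta(\varphi,\varphi)$ from~\eqref{equation:betabstartsame} to handle the beta factors, the inner $\ell$-sum takes the shape $\beta(\varphi,\varphi)\sum_\ell(-1)^\ell\begin{bmatrix}i\\\ell\end{bmatrix}\begin{bmatrix}n-i\\\varphi-\ell\end{bmatrix}\gamma(2(\varphi-\ell),i-\ell)\,q^{E(\ell)}$ with $E(\ell)=2\ell(n-2\varphi+\ell)-2\sigma_{\varphi-\ell}$. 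The main obstacle is a careful $q$-exponent bookkeeping, which shows that $E(\ell)=2\sigma_\ell+2\ell(n-\varphi)-2\sigma_\varphi$; hence the sum equals $q^{-2\sigma_\varphi}\varepsilon(n,\varphi,i)$ in the notation of Lemma~\ref{lemma:epsilons}, which collapses to $q^{-2\sigma_\varphi}(-1)^i q^{2\sigma_i}\begin{bmatrix}n-i\\n-\varphi\end{bmatrix}$ --- and this vanishes for $i>\varphi$, matching the index range on the right-hand side of the proposition. Cancelling the common $\beta(\varphi,\varphi)$, invoking $|\mathscr{C}||\mathscr{C}^\perp|=q^{mn}$ from the note after Proposition~\ref{prop:momentsbderiv} to convert $q^{mn}/|\mathscr{C}^\perp|$ into $q^k$, and checking that the remaining $q$-exponents (built from $\sigma_\varphi$, $\varphi(1-n)$, $\varphi(n-\varphi)$ and the contribution $i(m-2i)+(m-2\varphi)(n-i-\varphi)$) telescope so as to leave only $q^{k-m\varphi+2i(\varphi-i)}$, yields the identity claimed in the proposition.
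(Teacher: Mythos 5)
Your proposal is correct and follows essentially the same route as the paper's proof: apply the MacWilliams identity to $\mathscr{C}^\perp$, take $\varphi$ skew-$q^{-1}$-derivatives via Theorem \ref{Liebnizbminusderiv} and Lemma \ref{lemma:bminusderivlemma}, evaluate at $X=Y=1$ carrying the shifted $\gamma$-factor through the skew-$q$-product (the paper's $\Psi$), collapse the inner sums with Lemmas \ref{lemma:deltas} and \ref{lemma:epsilons}, and finish with the beta identities and $|\mathscr{C}||\mathscr{C}^\perp|=q^{mn}$. Your exponent bookkeeping ($E(\ell)=2\sigma_\ell+2\ell(n-\varphi)-2\sigma_\varphi$ and the residual $m(n-\varphi)+2i(\varphi-i)$) matches the paper's $\theta$ computation, so no gap.
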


\begin{proof}
As per Proposition \ref{prop:momentsbderiv}, we apply Theorem \ref{mainthm1} to $\mathscr{C}^{\perp}$ to get

\begin{equation}
    W_{\mathscr{C}}^{SR}(X,Y) = \frac{1}{|\mathscr{C}^\perp|}\overline{W}_{\mathscr{C}^{\perp}}^{SR}\left(X+\left(q^m-1\right)Y,X-Y\right)
\end{equation}
or equivalently
\begin{align}
    \sum_{i=0}^n c_i Y^{i}X^{n-i} & = \frac{1}{|\mathscr{C}^\perp|}\sum_{i=0}^{n}c_i'\left(X-Y\right)^{[i]}\ast \left(X+\left(q^m-1\right)Y\right)^{[n-i]}\\
    & = \frac{1}{|\mathscr{C}^\perp|}\sum_{i=0}^n c_i' \nu^{[i]}(X,Y;m)\ast \mu^{[n-i]}(X,Y;m). \label{equation:skew-q-1}
\end{align}

For each side of Equation \eqref{equation:skew-q-1}, we shall apply the skew-$q^{-1}$-derivative $\varphi$ times and then evaluate at $X=Y=1$.

For the left hand side, we obtain
\begin{align}
    \label{equation:propmomentsbminusLHS}
    \left(\sum_{i=0}^n c_iY^{i}X^{n-i}\right)^{\{\varphi\}}  & = \sum_{i=\varphi}^n c_iq^{2\varphi(1-i)+2\sigma_{\varphi}}\beta(i,\varphi)Y^{i-\varphi}X^{n-i}\\
    \overset{\eqref{equation:betabstartdifferent}}&{=} \sum_{i=\varphi}^{n}c_i q^{2\varphi(1-i)+2\sigma_{\varphi}} {\begin{bmatrix}i\\\varphi\end{bmatrix}}\beta(\varphi,\varphi)Y^{i-\varphi}X^{n-i}.
\end{align}
Then using $X=Y=1$ gives
\begin{equation}
    \sum_{i=\varphi}^{n}c_i q^{2\varphi(1-i)+2\sigma_{\varphi}} {\begin{bmatrix}i\\\varphi\end{bmatrix}}\beta(\varphi,\varphi)Y^{i-\varphi}X^{n-i} = \sum_{i=\varphi}^n q^{2\varphi(1-i)+2\sigma_\varphi}\beta(\varphi,\varphi){\begin{bmatrix}i\\\varphi\end{bmatrix}} c_i.
\end{equation}

We now move on to the right hand side. For simplicity we shall write $\mu(X,Y;m)$ as $\mu(m)$ and similarly $\nu(X,Y;m)$ as $\nu(m)$. Now by using Theorem \ref{Liebnizbminusderiv},

 \begin{equation}\label{equation:propmomentsbminusRHS}
 \begin{split}
     \left(\frac{1}{|\mathscr{C}^\perp|}\sum_{i=0}^n c_i' \nu^{[i]}(m)\ast \mu^{[n-i]}(m)\right)^{\{\varphi\}} & = \frac{1}{|\mathscr{C}^\perp|}\sum_{i=0}^n c_i' \left(\sum_{\ell=0}^\varphi{\begin{bmatrix}\varphi\\\ell\end{bmatrix}}q^{2\ell(n-i-\varphi+\ell)}\nu^{[i]\{\ell\}}(m)\ast \mu^{[n-i]\{\varphi-\ell\}}(m-2\ell)\right)\\
     & = \frac{1}{|\mathscr{C}^\perp|}\sum_{i=0}^n c_i' \psi_i(m)
    \end{split}
 \end{equation}
say. Applying Lemma \ref{lemma:bminusderivlemma} we get
\begin{align}
    \psi_i(m) & = \sum_{\ell=0}^{\varphi} {\begin{bmatrix}\varphi\\l\end{bmatrix}}q^{2\ell(n-i-\varphi+\ell)}\left\{(-1)^\ell \beta(i,\ell)\nu^{[i-\ell]}(m)\right\}\\
    & \ast\left\{q^{-2\sigma_{\varphi-\ell}}\beta(n-i,\varphi-\ell)\gamma(m-2\ell,\varphi-\ell)\mu^{[n-i-\varphi+\ell]}(m-2\varphi)\right\}.
\end{align}
Now let
\begin{equation}
    \Psi(X,Y;m-2\varphi) = \nu^{[i-\ell]}(X,Y;m)\ast\gamma(m-2\ell,\varphi-\ell)\mu^{[n-i-\varphi+\ell]}(X,Y;m-2\varphi).
\end{equation}
Then we apply the skew-$q$-product and set $X=Y=1$ to get
\begin{align}
    \Psi(1,1;m-2\varphi) & = \sum_{u=0}^{n-\varphi}\left[\sum_{p=0}^u q^{2p(n-i-\varphi+\ell)}\nu_p^{[i-\ell]}(m)\gamma(m-2\ell-2p,\varphi-\ell)\mu^{[n-i-\varphi+\ell]}_{u-p}(m-2\varphi-2p)\right]\\
    & = \sum_{r=0}^{i-\ell}q^{2r(n-i-\varphi+\ell)}\nu_r^{[i-\ell]}(m)\gamma(m-2\ell-2r,\varphi-\ell)\left[ \sum_{t=0}^{n-i-\varphi+\ell}\mu_t^{[n-i-\varphi+\ell]}(m-2\varphi-2r)\right]\\
    \overset{\eqref{equation:producttosumgauss}}&{=} \sum_{r=0}^{i-\ell}q^{2r(n-i-\varphi+\ell)}q^{(m-2\varphi-2r)(n-i-\varphi+\ell)}\nu_r^{[i-\ell]}(m)\gamma(m-2\ell-2r,\varphi-\ell)\\
    & = q^{(m-2\varphi)(n-i-\varphi+\ell)}\sum_{r=0}^{i-\ell}(-1)^r q^{2\sigma_{r}}{\begin{bmatrix}i-\ell\\r\end{bmatrix}}\gamma(m-2\ell-2r,\varphi-\ell)\\
    & = q^{(m-2\varphi)(n-i-\varphi+\ell)}q^{(i-\ell)(m-2i)}\gamma(2(\varphi-\ell),i-\ell)\gamma(m-2i,\varphi-i)
\end{align}
by Lemma \ref{lemma:deltas}. Now using Lemma \ref{lemma:betabmanipulation} and noting that $q^{2\ell(n-i-\varphi+\ell)}q^{-2\sigma_{\varphi-\ell}}=q^{2\ell(n-i)}q^{-2\sigma_{\varphi}}q^{2\sigma_\ell}$ we get
\begin{align}
    \psi_i(1,1;m) & = \sum_{\ell=0}^{\varphi}(-1)^\ell{\begin{bmatrix}\varphi\\\ell\end{bmatrix}}q^{2\ell(n-i-\varphi+\ell)} q^{-2\sigma_{\varphi-\ell}}\beta(i,\ell)\beta(n-i,\varphi-\ell)\Psi(1,1;m-2\varphi)\\
    & = q^{-2\sigma_{\varphi}}\beta(\varphi,\varphi)\sum_{\ell=0}^{\varphi}(-1)^{\ell}q^{2\ell(n-i)}q^{2\sigma_{\ell}}{\begin{bmatrix}i\\\ell\end{bmatrix}}{\begin{bmatrix}n-i\\ \varphi-\ell\end{bmatrix}}\Psi(1,1;m-2\varphi).
\end{align}
Writing that
\begin{align}
    q^{-2\sigma_{\varphi}}q^{2\ell(n-i)}q^{(m-2\varphi)(n-\varphi-i+\ell)}q^{(i-\ell)(m-2i)} & = q^{2\sigma_{\varphi}}q^{2\varphi(1-n)}q^{m(n-\varphi)}q^{2\ell(n-\varphi)}q^{2i(\varphi-i)}\\
    & = q^{\theta}q^{2l(n-\varphi)}
\end{align}
we get
\begin{align}
    \psi_i(1,1;m) & = q^{\theta}\beta(\varphi,\varphi)\gamma(m-2i,\varphi-i)\sum_{\ell=0}^i(-1)^\ell q^{2\ell(n-\varphi)} q^{2\sigma_{\ell}}{\begin{bmatrix}i\\\ell\end{bmatrix}}{\begin{bmatrix}n-i\\\varphi-\ell\end{bmatrix}}\gamma(2(\varphi-\ell),i-\ell) \\
    \overset{\eqref{equation:epsilons}}&{=} (-1)^iq^{\theta}q^{2\sigma_{i}}\beta(\varphi,\varphi){\begin{bmatrix}n-i\\n-\varphi\end{bmatrix}}\gamma(m-2i,\varphi-i).
\end{align}

Combining both sides, we obtain
\begin{equation}
    \sum_{i=\varphi}^n q^{2\varphi(1-i)+2\sigma_\varphi}\beta(\varphi,\varphi){\begin{bmatrix}i\\\varphi\end{bmatrix}} c_i = \frac{1}{|\mathscr{C}^\perp|}\sum_{i=0}^n c_i'(-1)^iq^{\theta}q^{2\sigma_{i}}\beta(\varphi,\varphi){\begin{bmatrix}n-i\\n-\varphi\end{bmatrix}}\gamma(m-2i,\varphi-i).
\end{equation}
Thus 
\begin{equation}
    \sum_{i=\varphi}^n q^{2\varphi(n-i)} {\begin{bmatrix}i\\\varphi\end{bmatrix}}c_i  = \frac{q^{m(n-\varphi)}}{|\mathscr{C}^\perp|}\sum_{i=0}^\varphi  (-1)^i q^{2\sigma_i}q^{2i(\varphi-i)}{\begin{bmatrix}n-i\\n-\varphi\end{bmatrix}}\gamma(m-2i,\varphi-i)c_i'.
\end{equation}
Then if $\mathscr{C}$ has dimension $k$ we have
\begin{equation}
    |\mathscr{C}|=q^k, ~ |\mathscr{C}^\perp|=q^{mn-k},
\end{equation}
so
\begin{equation}
    \frac{q^{m(n-\varphi)}}{|\mathscr{C}^\perp|} = \frac{q^{m(n-\varphi)}}{q^{mn-k}} = q^{k-m\varphi}
\end{equation}
as required.
\end{proof}
We can simplify Proposition \ref{prop:momentsbminusderivative} if $\varphi$ is less than the minimum distance of the dual code. Also we can introduce the \textbf{\textit{diameter}}, $\varrho_{SR}'$, to be the maximum distance between any two codewords of the dual code and simplify Proposition \ref{prop:momentsbminusderivative} again.

\begin{cor}
If $0\leq \varphi < d_{SR}'$ then
\begin{equation}
    \sum_{i=\varphi}^{n}q^{2\varphi(n-i)}{\begin{bmatrix}i\\\varphi\end{bmatrix}}c_i = q^{k-m\varphi}{\begin{bmatrix}n\\\varphi\end{bmatrix}}\gamma(m,\varphi).
\end{equation}
For $\varrho_{SR}'<\varphi\leq n$ then
\begin{equation}
    \sum_{i=0}^\varphi (-1)^i q^{2\sigma_{i}}q^{2i(\varphi-i)}{\begin{bmatrix} n-i\\n-\varphi\end{bmatrix}}\gamma(m-2i,\varphi-i)c_i = 0.
\end{equation}
\end{cor}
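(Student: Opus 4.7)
The plan is to obtain both identities as direct consequences of Proposition \ref{prop:momentsbminusderivative}, by exploiting the fact that certain entries of the relevant weight distribution vanish.

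For the first identity, I would invoke the hypothesis $\varphi < d_{SR}'$ to conclude that $c_0' = 1$ while $c_i' = 0$ for $1 \le i \le \varphi$. Substituting into the right-hand side of Proposition \ref{prop:momentsbminusderivative} only the $i=0$ term survives, and using the symmetry $\begin{bmatrix}n\\n-\varphi\end{bmatrix} = \begin{bmatrix}n\\\varphi\end{bmatrix}$ of the skew-$q$-nary Gaussian coefficients reduces the single surviving term to $q^{k-m\varphi}\begin{bmatrix}n\\\varphi\end{bmatrix}\gamma(m,\varphi)$, which is exactly the claimed right-hand side.

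For the second identity, the key observation is that Proposition \ref{prop:momentsbminusderivative} applies equally well to $\mathscr{C}^\perp$ in place of $\mathscr{C}$, since $(\mathscr{C}^\perp)^\perp = \mathscr{C}$ and $\dim \mathscr{C}^\perp = mn - k$. Making this swap yields
\[
\sum_{i=\varphi}^n q^{2\varphi(n-i)}\begin{bmatrix}i\\\varphi\end{bmatrix}c_i' \;=\; q^{(mn-k)-m\varphi}\sum_{i=0}^{\varphi}(-1)^{i}q^{2\sigma_{i}}q^{2i(\varphi-i)}\begin{bmatrix}n-i\\n-\varphi\end{bmatrix}\gamma(m-2i,\varphi-i)\,c_i.
\]
The hypothesis $\varphi > \varrho_{SR}'$ now implies that every index $i \ge \varphi$ strictly exceeds the diameter of $\mathscr{C}^\perp$; since $\mathscr{C}^\perp$ is linear, its diameter coincides with its maximum weight, so $c_i' = 0$ for all such $i$ and the left-hand side vanishes. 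Dividing through by the nonzero factor $q^{(mn-k)-m\varphi}$ yields precisely the asserted identity.

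There is no substantial obstacle here --- both parts are short substitutions into Proposition \ref{prop:momentsbminusderivative}. The only conceptual point worth flagging is the use of duality to swap the roles of $\mathscr{C}$ and $\mathscr{C}^\perp$ in the second part, together with the standard identification of diameter with maximum weight for a linear code, which is what allows the vanishing of the $c_i'$ above the threshold $\varrho_{SR}'$ to kill the entire left-hand side.
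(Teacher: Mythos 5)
Your proposal is correct and follows essentially the same route as the paper: the first identity comes from substituting $c_0'=1$, $c_1'=\cdots=c_\varphi'=0$ into Proposition \ref{prop:momentsbminusderivative} and using $\begin{bmatrix}n\\n-\varphi\end{bmatrix}=\begin{bmatrix}n\\\varphi\end{bmatrix}$, while the second comes from applying the proposition to $\mathscr{C}^\perp$ and noting that $c_\varphi'=\cdots=c_n'=0$ when $\varphi>\varrho_{SR}'$. Your explicit remarks about $(\mathscr{C}^\perp)^\perp=\mathscr{C}$, $\dim\mathscr{C}^\perp=mn-k$, and the identification of the diameter of a linear code with its maximum weight only make explicit what the paper leaves implicit.
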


\begin{proof}
First consider $0\leq\varphi< d_{SR}'$, then $c'_0=1$, $c_1'=\ldots=c_\varphi'=0$. Also since ${\begin{bmatrix}n\\n-\varphi
\end{bmatrix}}={\begin{bmatrix}n\\ \varphi
\end{bmatrix}}$ the statement holds. Now if $\varrho_{SR}'<\varphi\leq n$ then applying Proposition \ref{prop:momentsbminusderivative} to $\mathscr{C}^\perp$ gives
\begin{equation}
    \sum_{i=\varphi}^n   q^{2\varphi(n-i)}{\begin{bmatrix}i\\\varphi\end{bmatrix}}c_i' = q^{mn-k-m\varphi} \sum_{i=0}^\varphi(-1)^iq^{2\sigma_i}q^{2i(\varphi-i)}{\begin{bmatrix}n-i\\n-\varphi
\end{bmatrix}}\gamma(m-2i,\varphi-i)c_i.
\end{equation}
So using $c_\varphi'=\ldots=c_n'=0$ we get
\begin{equation}
    0 = \sum_{i=0}^{\varphi}(-1)^i q^{2\sigma_i}q^{2i(\varphi-i)}{\begin{bmatrix}n-i\\n-\varphi\end{bmatrix}}\gamma(m-2i,\varphi-i)c_i
\end{equation}
as required.
\end{proof}

\subsection{MSRD Codes}
As an application for the MacWilliams Identity, we can derive an alternative proof for the explicit coefficients of the skew rank weight distribution for MSRD codes to that in \cite[Theorem 4]{DelsarteAlternating}. This is analogous to the results for MRD codes presented in \cite[Proposition 9]{gadouleau2008macwilliams}.

Firstly a lemma that will be needed.
\begin{lem}\label{lemma:sequences}
If $a_0,a_1,\ldots,a_\ell$ and $b_0,b_1,\ldots,b_\ell$ are two sequences of real numbers and if
\begin{equation}
    a_j = \sum_{i=0}^{j}{\begin{bmatrix} \ell-i\\ \ell- j\end{bmatrix}}b_i
\end{equation}
for $0\leq j\leq \ell$, then

\begin{equation}
    b_i = \sum_{j=0}^i (-1)^{i-j}q^{2\sigma_{i-j}}{\begin{bmatrix}\ell-j\\ \ell-i \end{bmatrix}}a_j
\end{equation}
for $0\leq i\leq \ell$.
\end{lem}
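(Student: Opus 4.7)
My plan is to prove this by direct substitution and to invoke the orthogonality identity \eqref{equation:deltaijbs}, namely
\[
\sum_{k=i}^{j} (-1)^{k-i} q^{2\binom{k-i}{2}} \begin{bmatrix}k\\i\end{bmatrix}\begin{bmatrix}j\\k\end{bmatrix} = \delta_{ij},
\]
which is precisely the inversion/orthogonality relation for the skew-$q$-nary Gaussian coefficients and is the standard engine behind any Möbius-type inversion of this shape.

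First, I would substitute the assumed formula $a_j = \sum_{i'=0}^{j} \begin{bmatrix}\ell-i'\\\ell-j\end{bmatrix} b_{i'}$ into the right-hand side of the claimed identity, obtaining
\[
\sum_{j=0}^{i} (-1)^{i-j} q^{2\sigma_{i-j}} \begin{bmatrix}\ell-j\\\ell-i\end{bmatrix} a_j
= \sum_{i'=0}^{i} b_{i'} \sum_{j=i'}^{i} (-1)^{i-j} q^{2\sigma_{i-j}} \begin{bmatrix}\ell-j\\\ell-i\end{bmatrix}\begin{bmatrix}\ell-i'\\\ell-j\end{bmatrix},
\]
after swapping the order of summation (the constraint $i' \le j \le i$ is what replaces the two original ranges). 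The goal then reduces to showing that the inner sum equals $\delta_{i,i'}$.

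To match \eqref{equation:deltaijbs}, I would make the substitution $k = \ell - j$, so that $j = \ell - k$ and $k$ runs from $\ell-i$ up to $\ell-i'$. Writing $a := \ell - i$ and $b := \ell - i'$ (so $a \le b$ since $i' \le i$), the inner sum becomes
\[
\sum_{k=a}^{b} (-1)^{k-a} q^{2\binom{k-a}{2}} \begin{bmatrix}k\\a\end{bmatrix}\begin{bmatrix}b\\k\end{bmatrix},
\]
using $i-j = k-a$ and hence $\sigma_{i-j} = \binom{k-a}{2}$. By \eqref{equation:deltaijbs} this equals $\delta_{a,b}$, which is $1$ iff $\ell - i = \ell - i'$, i.e.\ $i = i'$. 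Thus only the $i' = i$ term survives, yielding $b_i$ as required.

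The proof is essentially algebraic bookkeeping; the only mild obstacle is keeping the index change straight, since the orthogonality identity in \eqref{equation:deltaijbs} is indexed naturally by the ``inner'' index growing from $i$ to $j$, while our sum is indexed by $j$ decreasing (equivalently, $k = \ell - j$ increasing) from $\ell - i$ up to $\ell - i'$. Once that substitution is done, the exponent of $-1$, the $q$-power $q^{2\sigma_{i-j}}$, and the two Gaussian binomials all line up exactly with \eqref{equation:deltaijbs}, so no further manipulation is needed. As a sanity check, I would verify the base cases $i = 0$ (giving $b_0 = a_0$, consistent with $a_0 = \begin{bmatrix}\ell\\\ell\end{bmatrix} b_0 = b_0$) and $i = 1$, both of which are immediate from the formulas.
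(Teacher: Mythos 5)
Your proposal is correct and follows essentially the same route as the paper: substitute the assumed expansion of $a_j$, interchange the order of summation, reindex with $k=\ell-j$ (the paper uses $s=\ell-j$), and apply the orthogonality relation \eqref{equation:deltaijbs} to collapse the inner sum to a Kronecker delta. All index ranges, signs, and the power $q^{2\sigma_{i-j}}$ line up exactly as in the paper's argument, so nothing further is needed.
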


\begin{proof}
This result uses the property of skew-$q$-nary Gaussian coefficients \cite[Equation 10]{DelsarteAlternating}, that
\begin{equation}
    \sum_{k=i}^j (-1)^{k-i}q^{2\sigma_{k-i}}{\begin{bmatrix}k\\i\end{bmatrix}}{\begin{bmatrix}j\\k\end{bmatrix}} = \delta_{ij}.
\end{equation}
Then for $0\leq i \leq \ell$, 

\begin{align}
    \sum_{j=0}^i (-1)^{i-j}q^{2\sigma_{i-j}}{\begin{bmatrix}\ell -j\\ \ell-i\end{bmatrix}}a_j & = \sum_{j=0}^i(-1)^{i-j}q^{2\sigma_{i-j}}{\begin{bmatrix}\ell-j\\ \ell-i\end{bmatrix}}\left(\sum_{k=0}^j {\begin{bmatrix}\ell-k \\\ell-j\end{bmatrix}}b_k\right)\\
    & = \sum_{k=0}^i \sum_{j=k}^i (-1)^{i-j}q^{2\sigma_{i-j}}
    {\begin{bmatrix}\ell-j\\ \ell-i\end{bmatrix}}{\begin{bmatrix}\ell-k\\\ell-j\end{bmatrix}}b_k\\
    & = \sum_{k=0}^i b_k \left(\sum_{s=\ell-i}^{\ell-k}(-1)^{i-\ell+s}q^{2\sigma_{i-\ell+s}}{\begin{bmatrix}s\\ \ell-i\end{bmatrix}}{\begin{bmatrix}\ell-k\\s\end{bmatrix}}\right)\\
    & = \sum_{k=0}^i b_k \delta_{ik}\\
    & = b_i
\end{align}
as required.
\end{proof}
\begin{prop}
Let $\mathscr{C}\subseteq\mathscr{A}_{q,t}$ be a linear MSRD code with weight distribution $\boldsymbol{c}$. Then we have $c_0=1$ and for $0\leq r \leq n-d_{SR}$ 
\begin{equation}
    c_{d_{SR}+r} = \sum_{i=0}^r (-1)^{r-i} q^{2\sigma_{r-i}}{\begin{bmatrix}d_{SR}+r\\d_{SR}+i\end{bmatrix}}{\begin{bmatrix}n\\d_{SR}+r\end{bmatrix}}\left( \frac{q^{m\left(d_{SR}+i\right)}}{|\mathscr{C}|}-1\right).
\end{equation}
\end{prop}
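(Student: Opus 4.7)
\smallskip

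\noindent\textbf{Proof plan.} The strategy is to exploit the fact that for an MSRD code both $\mathscr{C}$ and $\mathscr{C}^{\perp}$ have ``gapped'' weight distributions (with known values $c_0=c_0'=1$ and zeros between $1$ and $d_{SR}-1$, respectively between $1$ and $d_{SR}'-1=n-d_{SR}+1$), so that the moment identity of Corollary \ref{corrollary:simplificationpropbderiv} collapses to a linear system that can be inverted via Lemma \ref{lemma:sequences}.

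\smallskip

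\noindent\textbf{Step 1 (Collapse the moment identity).} Apply Corollary \ref{corrollary:simplificationpropbderiv} to $\mathscr{C}$: for each $\varphi$ with $0 \le \varphi \le n-d_{SR}$ (which is $<d_{SR}'$),
\begin{equation*}
    \sum_{i=0}^{n-\varphi} {\begin{bmatrix}n-i\\\varphi\end{bmatrix}} c_i \;=\; \frac{q^{m(n-\varphi)}}{|\mathscr{C}^{\perp}|}{\begin{bmatrix}n\\\varphi\end{bmatrix}}.
\end{equation*}
Isolate the known $i=0$ term (which gives $\begin{bmatrix}n\\\varphi\end{bmatrix}$) and use $c_1=\cdots=c_{d_{SR}-1}=0$ to restrict the sum to $i\ge d_{SR}$:
\begin{equation*}
    \sum_{i=d_{SR}}^{n-\varphi} {\begin{bmatrix}n-i\\\varphi\end{bmatrix}} c_i \;=\; \left(\frac{q^{m(n-\varphi)}}{|\mathscr{C}^{\perp}|}-1\right){\begin{bmatrix}n\\\varphi\end{bmatrix}}.
\end{equation*}

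\smallskip

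\noindent\textbf{Step 2 (Reindex and invert).} Re-parameterise with $\ell=n-d_{SR}$ by setting $\varphi=\ell-j$ and $i=d_{SR}+i'$, so that the sum runs over $0\le i'\le j$ and the equation becomes
\begin{equation*}
    \sum_{i'=0}^{j}{\begin{bmatrix}\ell-i'\\\ell-j\end{bmatrix}}c_{d_{SR}+i'} \;=\; a_j, \qquad a_j:=\left(\frac{q^{m(d_{SR}+j)}}{|\mathscr{C}^{\perp}|}-1\right){\begin{bmatrix}n\\d_{SR}+j\end{bmatrix}},
\end{equation*}
valid for $0\le j\le n-d_{SR}$. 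This is precisely the hypothesis of Lemma \ref{lemma:sequences} (with $b_{i'}=c_{d_{SR}+i'}$), whose inversion formula yields
\begin{equation*}
    c_{d_{SR}+r} \;=\; \sum_{j=0}^{r}(-1)^{r-j}q^{2\sigma_{r-j}}{\begin{bmatrix}\ell-j\\\ell-r\end{bmatrix}}a_j.
\end{equation*}

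\smallskip

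\noindent\textbf{Step 3 (Cleanup).} Finally use the standard Gaussian-coefficient identity (a consequence of \eqref{equation:gaussianswapplaces})
\begin{equation*}
    {\begin{bmatrix}n\\d_{SR}+j\end{bmatrix}}{\begin{bmatrix}(n-d_{SR}-j)\\(r-j)\end{bmatrix}} \;=\; {\begin{bmatrix}n\\d_{SR}+r\end{bmatrix}}{\begin{bmatrix}d_{SR}+r\\d_{SR}+j\end{bmatrix}},
\end{equation*}
to convert the pair of binomials into the form appearing in the stated proposition; substituting the relation $|\mathscr{C}||\mathscr{C}^{\perp}|=q^{mn}$ rewrites the scalar $q^{m(d_{SR}+j)}/|\mathscr{C}^{\perp}|$ as stated. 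Renaming $j\mapsto i$ gives the claimed identity.

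\smallskip

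\noindent\textbf{Main obstacle.} The only non-routine move is recognising that after the reindexing $\varphi\mapsto n-d_{SR}-j$ and $i\mapsto d_{SR}+i'$ the resulting triangular system matches the hypothesis of Lemma \ref{lemma:sequences} exactly; once this is seen, everything reduces to manipulations with Skew-$q$-nary Gaussian coefficients already established in Section \ref{subsection:b-naryidentities}.
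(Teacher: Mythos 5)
Your proposal is correct in substance and follows essentially the same route as the paper's own proof: applying Corollary \ref{corrollary:simplificationpropbderiv} for $0\le\varphi\le n-d_{SR}$ (valid because $\mathscr{C}^\perp$ is MSRD with $d_{SR}'=n-d_{SR}+2$), reindexing via $\varphi=n-d_{SR}-j$, inverting the triangular system with Lemma \ref{lemma:sequences}, and finishing with the Gaussian-coefficient identity are exactly the steps taken there. The one flaw is the final claim in your Step 3: the relation $|\mathscr{C}||\mathscr{C}^\perp|=q^{mn}$ turns $q^{m(d_{SR}+j)}/|\mathscr{C}^\perp|$ into $|\mathscr{C}|\,q^{m(d_{SR}+j-n)}$, not into $q^{m(d_{SR}+j)}/|\mathscr{C}|$, so the scalar cannot be rewritten ``as stated.'' This mismatch is not a defect of your argument but an apparent typo in the proposition as printed (the denominator should be $|\mathscr{C}^\perp|$): the paper's own proof likewise terminates with $|\mathscr{C}^\perp|$, and the Note following the proposition records precisely the equivalent form $|\mathscr{C}|q^{m(d_{SR}+i-n)}-1$; a quick sanity check with $\mathscr{C}=\mathscr{A}_{q,t}$ (so $d_{SR}=1$, $|\mathscr{C}^\perp|=1$) confirms the $|\mathscr{C}^\perp|$ version and refutes the $|\mathscr{C}|$ version. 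You should state this correction explicitly rather than assert the rewriting.
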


\begin{proof}

It can be seen that this is equivalent to \cite[(15)]{GabidulinTheory}.
Now from Corollary \ref{corrollary:simplificationpropbderiv} we have
\begin{equation}
    \sum_{i=0}^{n-\varphi}{\begin{bmatrix}n-i\\ \varphi\end{bmatrix}}c_i  = \frac{1}{|\mathscr{C}^\perp|}q^{m(n-\varphi)}{\begin{bmatrix}n\\\varphi\end{bmatrix}}
\end{equation}
for $0\leq \varphi < d_{SR}'$. Now if a linear code $\mathscr{C}$ is MSRD, with minimum distance $d_{SR}$ then $\mathscr{C}^\perp$ is also MSRD with minimum distance $d_{SR}'=n-d_{SR}+2$ \cite[p35]{DelsarteAlternating}. So Corollary \ref{corrollary:simplificationpropbderiv} holds for $0\leq \varphi\leq n-d_{SR}=d_{SR}'-2$. We therefore have $c_0=1$ and $c_1=c_2=\ldots=c_{d_{SR}-1}=0$ and setting $\varphi=n-d_{SR}-j$ for $0\leq j\leq n-d_{SR}$ we get 

\begin{align}
    {\begin{bmatrix}n\\n-d_{SR}-j\end{bmatrix}} + \sum_{i=d_{SR}}^{d_{SR}+j}{\begin{bmatrix}n-i\\n-d_{SR}-j\end{bmatrix}}c_i & = \frac{1}{|\mathscr{C}^\perp|}q^{m(d_{SR}+j)}{\begin{bmatrix}n\\n-d_{SR}-j\end{bmatrix}}\\
    \sum_{r=0}^j {\begin{bmatrix}n-d_{SR}-r\\n-d_{SR}-j\end{bmatrix}}c_{r+d_{SR}} & = {\begin{bmatrix}n\\n-d_{SR}-j\end{bmatrix}}\left(\frac{q^{m(d_{SR}+j)}}{|\mathscr{C}^\perp|}-1\right).
\end{align}
Applying Lemma \ref{lemma:sequences} with $\ell = n-d_{SR}$ and $b_r = c_{r+d_{SR}}$ then setting 
\begin{equation}
    a_j = {\begin{bmatrix}n\\n-d_{SR}-j\end{bmatrix}}\left(\frac{q^{m(d_{SR}+j)}}{|\mathscr{C}^\perp|}-1\right)
\end{equation}
gives
\begin{equation}
     \sum_{r=0}^j {\begin{bmatrix}n-d_{SR}-r\\n-d_{SR}-j\end{bmatrix}}b_r = a_j
\end{equation}
and so
\begin{align}
    b_r = c_{r+d_{SR}} & = \sum_{i=0}^r (-1)^{r-i}q^{2\sigma_{r-i}}{\begin{bmatrix}n-d_{SR}-i\\n-d_{SR}-r\end{bmatrix}}a_i\\
    & = \sum_{i=0}^r (-1)^{r-i}q^{2\sigma_{r-i}}{\begin{bmatrix}n-d_{SR}-i\\n-d_{SR}-r\end{bmatrix}}{\begin{bmatrix}n\\n-d_{SR}-i\end{bmatrix}}\left(\frac{q^{m(d_{SR}+i)}}{|\mathscr{
    C}^\perp|}-1\right).
\end{align}
But we have
\begin{align}
    {\begin{bmatrix}n-d_{SR}-i\\n-d_{SR}-r\end{bmatrix}}{\begin{bmatrix}n\\n-d_{SR}-i\end{bmatrix}} & = {\begin{bmatrix}n-(d_{SR}+i)\\n-(d_{SR}+r)\end{bmatrix}}{\begin{bmatrix}n\\d_{SR}+i\end{bmatrix}}\\
    \overset{\eqref{equation:producttosumgauss}}&{=} {\begin{bmatrix}d_{SR}+r\\d_{SR}+i\end{bmatrix}}{\begin{bmatrix}n\\n-(d_{SR}+r)\end{bmatrix}}\\
     \overset{\eqref{equation:gaussianswapplaces}}&{=} {\begin{bmatrix}d_{SR}+r\\d_{SR}+i\end{bmatrix}}{\begin{bmatrix}n\\d_{SR}+r\end{bmatrix}}.
\end{align}
Therefore
\begin{equation}
    c_{r+d_{SR}} = \sum_{i=0}^r (-1)^{r-i}q^{2\sigma_{r-i}}{\begin{bmatrix}d_{SR}+r\\d_{SR}+i\end{bmatrix}}{\begin{bmatrix}n\\d_{SR}+r\end{bmatrix}}\left(\frac{q^{m(d_{SR}+i)}}{|\mathscr{
    C}^\perp|}-1\right)
\end{equation}
as required.
\end{proof}

\begin{note}
We note again that $mn=\frac{t(t-1)}{2}$ for skew-symmetric matrices and $|\mathscr{C}||\mathscr{C}^\perp|=q^{mn}$ which can be can be used to simplify this to 
\begin{equation}
    c_{r+d_{SR}} = \sum_{i=0}^r (-1)^{r-i}q^{2\sigma_{r-i}}{\begin{bmatrix}d_{SR}+r\\d_{SR}+i\end{bmatrix}}{\begin{bmatrix}n\\d_{SR}+r\end{bmatrix}}\left(|\mathscr{C}|q^{m(d_{SR}+i-n)}-1\right).
\end{equation}

\end{note}

\newpage

\printbibliography

\end{document}